\renewcommand{\vec}[1]{\boldsymbol{#1}}  % Bold vectors instead of arrow vectors
\newcommand{\vtheta}{\Vec{\theta}}
\newcommand{\nparams}{\ensuremath{L} }
\newcommand{\uni}{\boldsymbol{\mathcal{D}}_{\nparams}} 
\newcommand{\vol}{\boldsymbol{\mathcal{V}}_{\nparams}}
\newcommand{\OC}{\mathcal{O}}
\newcommand{\Var}{{\rm Var}}
\renewcommand{\geq}{\geqslant}
\renewcommand{\leq}{\leqslant}
\newcommand{\poly}{\operatorname{poly}}
\newtheorem{theorem}{Theorem}
\newtheorem{lemma}[theorem]{Lemma}
\newtheorem{proposition}[theorem]{Proposition}
\newtheorem{definition}[theorem]{Definition}
\newtheorem{remark}[theorem]{Remark}
\renewcommand{\phi}{\varphi}
\DeclareMathOperator{\prob}{\mathbb{P}}
\DeclareMathOperator{\exval}{\mathbb{E}}
\DeclareMathOperator{\Cov}{Cov}
\newcommand{\mathspace}{\quad}
\newcommand{\phaselessprod}{\odot}
\newcommand{\circuitDepth}{L}
\newcommand{\probContribute}[1]{P(#1)}
\newcommand{\nModes}{N}
\newcommand{\middlePoint}{m}
\newcommand{\majorana}{M}
\newcommand{\majoranaP}{M'}
\newcommand{\majoranaOther}{\mathcal{M}}
\newcommand{\majoranaOtherP}{\mathcal{M}'}
\newcommand{\majoranaGate}[1]{G_{#1}}
\newcommand{\majoranaGateNoParam}{G}
\newcommand{\majoranaGateNoParamP}{G'}
\newcommand{\partialNorm}[2]{\text{\rm¦¦}#1\text{\rm¦¦}_{(#2)}}
\newcommand{\fockState}{\vec{n}}
\renewcommand{\[}{\begin{equation}}
\renewcommand{\]}{\end{equation}}
\renewcommand\onecolumngrid{
\do@columngrid{one}{\@ne}
\def\set@footnotewidth{\onecolumngrid}
\def\footnoterule{\kern-6pt\hrule width 1.5in\kern6pt}
}
\renewcommand\twocolumngrid{
        \def\footnoterule{
        \dimen@\skip\footins\divide\dimen@\thr@@
        \kern-\dimen@\hrule width.5in\kern\dimen@}
        \do@columngrid{mlt}{\tw@}
}
\begin{document}

\makeatletter
\makeatother
 
% NECESSARY TO MAKE TOC IN APPENDIX
\doparttoc % Tell to minitoc to generate a toc for the parts
\faketableofcontents % Run a fake tableofcontents command 
% \part*{}   

\title{Simulation of Fermionic circuits using Majorana Propagation}

\author{Aaron Miller}
\thanks{The first two authors contributed equally to this work.}
\author{Joachim Favre}
\thanks{The first two authors contributed equally to this work.}
% \affiliation{Trinity Quantum Alliance, Unit 16, Trinity Technology and Enterprise Centre, Pearse Street, Dublin 2, Ireland}
% \affiliation{Algorithmiq Ltd, Kanavakatu 3 C, FI-00160 Helsinki, Finland}
\author{Zo\"e Holmes}
\author{\"Ozlem Salehi}
\author{Rahul Chakraborty}
\author{Anton Nyk\"anen}
\author{Zolt\'an Zimbor\'as}
\author{Adam Glos}
\author{Guillermo Garc\'ia-P\'erez}
\email{guille@algorithmiq.fi}
\affiliation{Algorithmiq Ltd, Kanavakatu 3 C, FI-00160 Helsinki, Finland}

\date{\today}

\begin{abstract}
    We introduce Majorana Propagation, an algorithmic framework for the classical simulation of Fermionic circuits.
    Inspired by Pauli Propagation, Majorana Propagation operates by applying successive truncations throughout the Heisenberg evolution of the observable.
    We identify \textit{monomial length} as an effective truncation strategy for typical, unstructured circuits by proving that high-length Majorana monomials are exponentially unlikely to contribute to expectation values and the backflow of high-length monomials to lower-length monomials is quadratically suppressed.
    We provide performance guarantees by proving analytically that approximation errors decrease exponentially with the truncation threshold and that only polynomial resources are required to compute the expectation value of observables up to a fixed error for an ensemble of circuits relevant to quantum chemistry.
    Majorana Propagation can be used either independently, or in conjunction with quantum hardware, to simulate Fermionic systems relevant to quantum chemistry and condensed matter. 
    We exemplify this by using Majorana Propagation to find circuits that approximate ground states for strongly correlated systems of up to 52 Fermionic modes.
    Our results indicate that Majorana Propagation is orders of magnitude faster and more accurate than state-of-the-art tensor-network-based circuit simulators.
\end{abstract}

\maketitle

Simulating many-body quantum systems is computationally challenging, despite its fundamental importance across various fields, including condensed-matter physics and quantum chemistry.
Quantum computing offers new perspectives, powered by processors that exhibit quantum phenomena and are therefore not limited by them.
Yet, operating quantum devices for practical purposes can benefit enormously from accurate classical circuit simulation algorithms~\cite{preskill2018quantum}.
These can be used, among other things, to produce good reference states for quantum simulation tasks, offloading costly circuit optimization routines to a classical machine.

In this context, Pauli Propagation (PP) has recently emerged as a new promising method for spin systems~\cite{rall2019simulation, aharonov2022polynomial, rakovszky2022dissipation, beguvsic2023simulating, fontana2023classical, shao2023simulating, rudolph2023classical, schuster2024polynomial, angrisani2024classically, gonzalez2024pauli, lerch2024efficient, cirstoiu2024fourier, angrisani2025simulating, fuller2025improved, rudolph2025pauli, angrisani2025simulating}. 
By leveraging typical features in the dynamics of Pauli strings evolving through a circuit, PP disregards terms that are expected to have a negligible contribution to the expectation value one is simulating, resulting in guaranteed accuracy for almost all circuits within a vast class.
PP thus presents itself as a very natural way to simulate expectation values of spin circuits.
Despite its recent conception, PP can already compete with well-established methods, such as tensor networks and neural network states.
Furthermore, it allows for fast circuit optimization in variational scenarios thanks to surrogate simulation~\cite{fontana2023classical}, and can be readily interfaced with quantum hardware~\cite{lerch2024efficient, fuller2025improved}, as it operates at the circuit level.

However, the applicability of PP to other many-body systems, such as Fermionic ones, remains an open question.
In particular, many quantum systems are mathematically described by algebras that do not have an equivalent to Pauli weight locality, a quantity central to guarantees for PP for typical circuits.
While these systems are mappable to qubit space, the resulting Pauli operators are typically highly non-local.

Here we present Majorana Propagation (MP), a PP-inspired classical simulation algorithm that can be naturally used for Fermionic systems.
Differently from PP, MP does not operate at the level of Pauli weight; in fact, it does not even assume the decomposability of the algebra describing the system into tensor products of sub-system operators.
Instead, MP leverages the typical behaviour of monomials spanning operator bases. 
We use this observation to prove that Majorana Propagation can efficiently simulate \textit{typical} Fermionic circuits.
Crucially, the Fermionic basis is effectively built `on the fly' and so is not restricted to simulating free Fermion models but rather can be used to simulate systems living in exponentially large Lie algebras. Moreover, unlike tensor network methods or near-Clifford methods, our approach is not intrinsically targeting low entanglement or magic growth scenarios.

A particularly promising application of MP is for optimizing circuits to prepare approximate ground states.
Established quantum algorithms for energetic structure calculations, including near-term~\cite{yu2025quantum,yeter2020practical,yeter2021benchmarking, larose2019variational,parrish2019quantum,stair2020multireference,kokail2019self,google2020hartree,xiaoyue2024strategies,cerezo2021variational} and fault-tolerant~\cite{nielsen2000quantum,temme2011quantum,kitaev1995quantum,brassard2002quantum,tan2020quantum,Low2019hamiltonian,gilyen2019quantum,ge2019faster,motta2020determining,motlagh2024ground,gluza2024double, suzuki2025double} approaches, need to be initialized with good reference states, as their performance depends on the overlap between the reference and the ground state.
How to find these remains an open question~\cite{lee2023evaluating, zimboras2025myths}, and heuristic variational ground state approximation methods are likely to play an important role.

Many of the most successful variational methods for ansatz generation construct Fermionic circuits that are then appropriately mapped to qubit space for hardware execution~\cite{miller2023bonsai,miller2024treespilation, neven_mapping, jw_mapping, bk_mapping, chiew2023discovering, chien2022optimizing}.
Operating in Fermion space has proved numerous advantages, including the facilitation of imposing known symmetries of the simulated system.
However, historically, many of these methods also rely on the quantum device itself to carry out the optimization, which results in costly and unreliable feedback loops between a classical and a quantum computer~\cite{mcclean2017hybrid, franca2020limitations, anschuetz2022beyond, larocca2024quantum, cerezo2023does, zimboras2025myths}.
Here we show that MP can be used to classically simulate and optimize adaptive Fermionic circuit ansatze to find circuits for preparing approximate ground states of strongly correlated molecules~\cite{grimsley2019adaptive}.

Beyond ground states, MP may find use both as a stand-alone classical algorithm, and as a tool to be used with quantum hardware, to study Fermionic systems. Potential applications range from simulating the dynamics of Fermionic systems~\cite{lerch2024efficient, fuller2025improved} to tackling classification problems with quantum data~\cite{cerezo2023does, bermejo2024quantum}.

\medskip

\paragraph*{Background.}
An $N$-mode Fermionic system in second quantization can be described in terms of $N$ creation operators $\{ a_i^\dagger \}_{i = 1}^{N}$ and annihilation operators $\{ a_i \}_{i = 1}^{N}$ that satisfy the canonical anticommutation relations $\{ a_i, a_j \} = \{ a_i^\dagger, a_j^\dagger \} = 0$, and $\quad \{ a_i^\dagger, a_j \} = \delta_{ij}$. These operators act on the
Fermionic Fock space $\mathcal{F}(\mathbb{C}^N)\cong \oplus^{N}_{k=0} \wedge^{k} \mathbb{C}^N $, which is a $2^N$-dimensional Hilbert space spanned by the so-called Fock basis.
In this space, the Fermionic vacuum $\ket{\text{vac}_\text{f}}$ is the unique vector such that $a_j \ket{\text{vac}_\text{f}} = 0$ for all $j = 1, \dots, N$.
The remaining Fock basis elements can be constructed by considering all possible combinations of occupation numbers $n_j \in \{0, 1\}$ as $\ket{n_1 n_2 \dots n_{N}} \coloneqq \prod_{j=1}^{N} (a_j^\dagger)^{n_j} \ket{\text{vac}_\text{f}}$.

It is also common to define the Fermionic operator space with so-called Majorana operators $\{ m_k \}_{k = 1}^{2N}$ as $m_{2 j - 1} \coloneqq a_{j}^\dagger + a_{j}$ and $m_{2 j} \coloneqq \mathrm i (a_{j}^\dagger - a_{j})$.
Such operators obey many convenient properties, such as being unitary and self-adjoint, and satisfy simpler anti-commutation relations:
\begin{equation}\label{eq:majorana_properties}
    m_i^\dagger=m_i,\quad\{m_i,m_j\}=2\delta_{ij}.
\end{equation}
These $2N$ Majorana operators are algebraically independent, that is, no two distinct products of them (ordered according to the indices)  result in the same operator.
Hence, up to a sign, each unique product of Majorana operators can be associated with a $2N$-dimensional binary vector $\vec{b} = (b_1, \ldots, b_{2N}), b_i \in \{0, 1\}$ through the expression
\begin{equation}
     M_{\vec{b}} = \mathrm i^{r_{\vec{b}}}  m_1^{b_1} m_2^{b_2} \cdots m_{2N}^{b_{2N}} \, .
\end{equation}
The imaginary unit $\mathrm i$ is included to ensure that  $ M_{\vec{b}}$ is Hermitian\footnote{The $\mathrm i$ is necessary whenever inverting the product sequence involves an odd number of permutations. That is, the value of $r_{\vec{b}}$ depends on the length $w = \norm{\vec{b}}_1$ of the monomial.
As shown in Appendix~\ref{app:prelims}, if either $w$ or $w - 1$ is a multiple of 4, $r = 0$.
Otherwise, $r = 1$.}.
The $4^N$ operators $\{ M_{\vec{b}} \}_{\vec{b}}$
form a basis of the space of linear operators in the Hilbert space, thus any operator can be uniquely decomposed as a linear combination of them. We will refer to these operators as \textit{Majorana monomials}.
The length of a Majorana monomial, defined as the 1-norm of vector $\vec{b}$, $\norm{\vec{b}}_1 = \sum_{i = 1}^{2N} b_i$, will play a crucial role in this work.
We will use the shorthand $w$-monomial to refer to length-$w$ Majorana monomials.

Majorana monomials share some algebraic similarities with Pauli strings: They are traceless (except for the length-0 monomial, i.e., the identity operator). Any two of them either commute or anti-commute, and they square up to identity $M_{\vec{b}}^2 = \mathbb{I}$.
However, unlike Pauli strings, they are not decomposable as a tensor product of $N$ single-mode operators, as Majorana operators are non-local.
In fact, under fermion-to-qubit transformations, Majorana monomials are generally mapped to non-local Pauli strings~\cite{miller2023bonsai}, with Pauli weights spanning a wide range, even for fixed Majorana monomial length.
It follows that previous guarantees for PP with weight truncation do not apply to Fermionic systems, as Pauli weight does not capture typical operator dynamics.

\medskip

\paragraph*{Majorana monomial propagation.}

We consider the task of simulating expectation values of observables. That is, we want to compute
\begin{equation}
    \label{eq:mean_value}
    f_L(\vec{\theta}) =  \langle H \rangle_{\vec{\theta}} = {\rm Tr}\left[ U_L(\vec{\theta}) \varrho U_L(\vec{\theta})^\dagger H \right]
\end{equation}
given a unitary Fermionic circuit $U(\vec{\theta})$, an initial state $\varrho$, and an observable $H$. The initial state is typically a Fock basis state, $\varrho = \ketbra{n_1 n_2 \dots n_{N}}{n_1 n_2 \dots n_{N}}$. We further suppose the Fermionic circuit,
\begin{equation}\label{eq:unitarycircuit}
U_L(\vec{\theta}) = \prod_{j=1}^L e^{- \mathrm i \theta_j M_{{\vec{b}_j}}/2} \, ,
\end{equation}
consists of a sequence of $L$ Fermionic gates where each gate is generated by a Majorana monomial generator $M_{\vec{b}_j}$ and the $\theta_j$ are real parameters. 
Finally we assume that the observable $H$ has a known expansion into Majorana monomials $H = \sum_{\vec{b}} \alpha_{\vec{b}} M_{\vec{b}} $. Observables $H$ of physical interest are spanned by Majorana monomials of low, even length.
For instance, most Hamiltonians in quantum chemistry and materials contain only length-2 and length-4 terms and thus the number of terms in the expansion will typically scale at worst polynomially in $N$. 

\begin{figure}[t!]
    \centering
    \includegraphics[width=0.9\linewidth]{./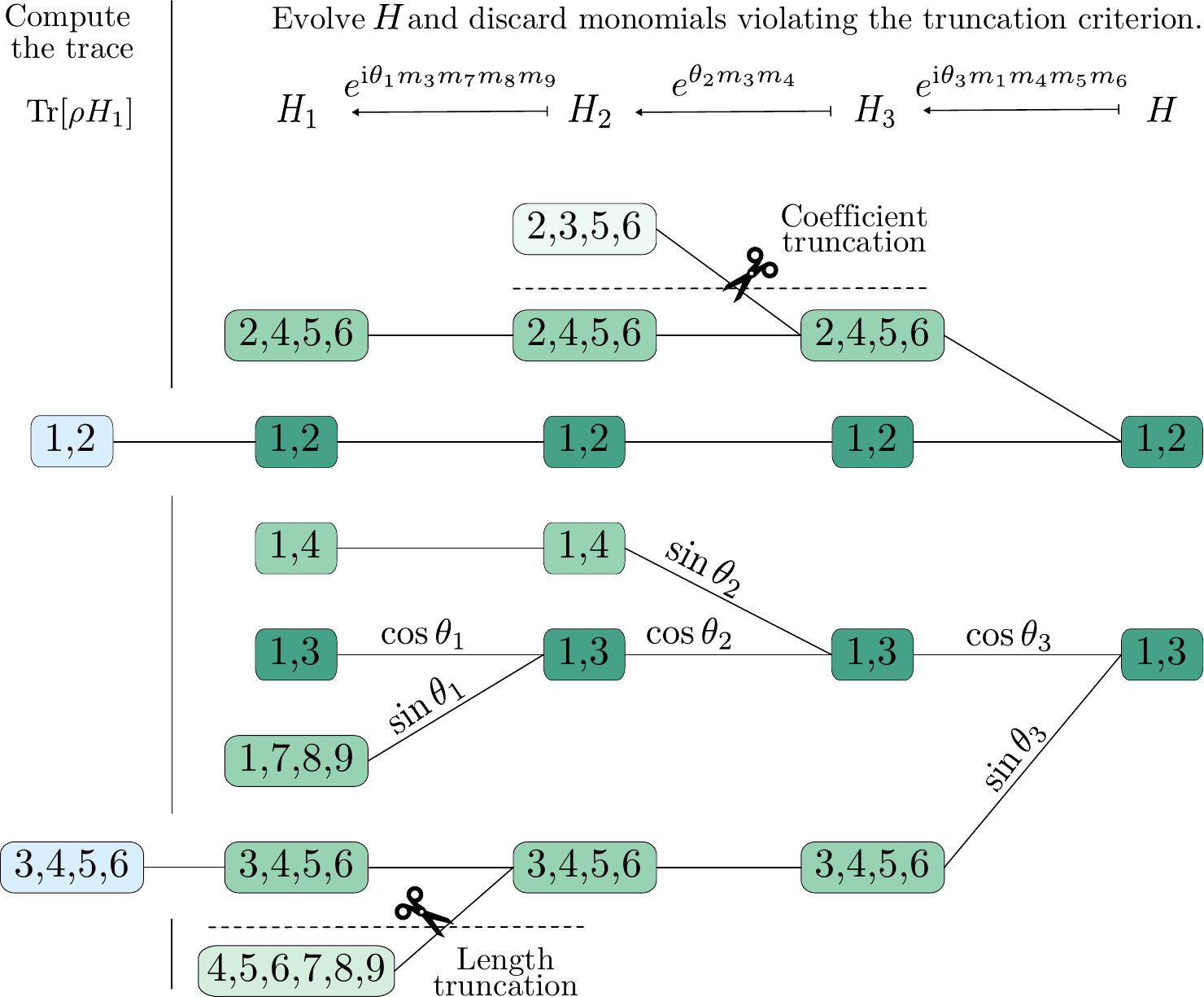}
    \caption{\textbf{Illustration of Majorana Propagation.}
    The observable $H$ is spanned by two length-2 terms, $\mathrm i m_1 m_2$ and $\mathrm i m_1 m_3$.
    The first gate, generated by $m_1 m_4 m_5 m_6$ anti-commutes with both terms, leading to branching into higher length monomials.
    The intensity of the colour represents the coefficient accompanying each monomial.
    Assuming small angles, the sine branches are paler than their precursors.
    The figure further depicts two different truncation events, one where a length-4 monomial is eliminated because of its coefficient reaching a very low value, and a second one where a length-6 monomial exceeds the length threshold $w^{\star} = 4$.
    In the final computation of the trace, only two of the surviving monomials are paired and thus contribute to the estimate.}
    \label{fig:sketch}
\end{figure}

Our proposed Majorana Propagation algorithm, schematically depicted in Fig.~\ref{fig:sketch}, works naturally in the Heisenberg picture. We take each monomial in the expansion of $H$ and back-propagate it under each of the gates in the circuit. 
The evolution of monomials under Fermionic gates takes a conveniently compact form. Namely, in direct analogy with the Pauli case, if the propagated monomial $M_{\vec{b}}$ commutes with the generator of rotation $M_{{\vec{b}_j}}$ then the monomial is left unchanged but otherwise it branches into the sum of two monomials with a cosine or sine coefficient depending on the rotation angles: 
\begin{equation}\label{eq:gatebranch}
   M_{\vec{b}} \xrightarrow[]{e^{- \mathrm i\theta_j M_{{\vec{b}_j}}/2}}  \cos (\theta_j) M_{\vec{b}} + \sin(\theta_j)  \mathrm i M_{{\vec{b}_j}} M_{\vec{b}} \, .  
\end{equation}
Applying Eq.~\eqref{eq:gatebranch} iteratively on each of the gates in Eq.~\eqref{eq:unitarycircuit}, starting with the rotation under $M_{{\vec{b}_L}}$, we end up with a new sum of Majorana monomials $\sum_{\vec{b}} c_{\vec{b}}\left(\vec{\theta} \right) M_{\vec{b}}$ where $c_{\vec{b}}\left(\vec{\theta} \right)$ are the coefficients of the back-propagated monomials $M_{\vec{b}}$. These coefficients capture both the initial length of each of the relevant monomial in the target observable $H$ and the sine and cosine coefficients that have been picked up during the propagation. 
Finally, we compute the overlap of these monomials with the initial state $\varrho$ to compute the expectation value as  
\begin{equation}\label{eq:expectationexpression}
    f\left(\vec{\theta}\right) = \sum_{j} c_{\vec{b}}\left(\vec{\theta} \right) \Tr[ \varrho M_{\vec{b}}] \, . 
\end{equation}
If the rotation angles for the circuit one wishes to simulate are known in advance then the sine and cosine terms take explicit numerical values and the $c_{\vec{b}}\left(\vec{\theta} \right)$ coefficients can be stored numerically. However, if the values of $\vec{\theta}$ are not known in advance, or one is interested in computing the expectation value in Eq.~\eqref{eq:mean_value} for a wide range of $\vec{\theta}$ values, one can instead symbolically represent the coefficients $c_{\vec{b}}\left(\vec{\theta} \right) $. Such \textit{surrogate} models have a substantial time and memory overhead, but after this pre-processing step, enable rapid re-evaluation of the expectation value~\cite{fontana2023classical, rudolph2023classical, lerch2024efficient}. 

While we here focus on unitary Fermionic circuits that have been compiled into products of monomial rotations, as per Eq.~\eqref{eq:unitarycircuit}, we stress that Majorana Propagation can be applied more generally to compute expectation values for circuits composed of a sequence of $L$ linear maps $\mathcal{C} = C_L \circ \cdots \circ C_1$. In that case, one simply computes an analogous expression to Eq.~\eqref{eq:gatebranch} for the action of each $C_j$ on an arbitrary Majorana monomial. 

In its raw form, the number of monomials in the final expression for the expectation value, Eq.~\eqref{eq:expectationexpression}, will blow up exponentially with the number of gates in the circuit that induce a splitting. We therefore employ \textit{truncation} schemes to drastically reduce the computational resources. There are two main families of truncations that can be employed to make the simulation more efficient, namely 1) ones that truncate small (or likely small) coefficients $c_{\vec{b}}$ and 2) ones that truncate any monomial $M_{\vec{b}}$ where $\Tr[\varrho M_{\vec{b}}]$ is small (or likely small). Inevitably these truncations will reduce the accuracy of the simulation but in many cases, the reduction in accuracy can be kept small.

If we are simulating a circuit with fixed angles, i.e., the $\vec{\theta}$ parameter for the circuit is already known, then the simplest truncation strategy is \textit{coefficient truncation}. That is, we can keep track of the coefficients in front of each monomial as we back-propagate each term through the circuit and simply truncate (drop) any terms that have an absolute value below some predefined value. 

Alternatively, in the case of \textit{surrogate} simulation where the angles $\vec{\theta}$ are not known in advance, if one is primarily interested in simulating within some small angle range (e.g., in the case of simulating Trotterization-like circuits) one can use the so-called \textit{small angle} truncation scheme~\cite{lerch2024efficient, beguvsic2023simulating}. Namely, we can leverage the fact that when $\vec{\theta}$ are small then sine contributions to the coefficients $c_{\vec{b}}$ in Eq.~\eqref{eq:expectationexpression} are much smaller than cosine contributions. Thus we can prune paths with many sine coefficients without substantially changing the estimation of the expectation value (see Fig.~\ref{fig:sketch}). In Appendix~\ref{app:smallangle} we provide theoretical guarantees for this approach that follow directly from the analogous findings for Pauli Propagation presented in Ref.~\cite{lerch2024efficient}. 

The second truncation strategy, whereby one aims to cut terms from the sum in Eq.~\eqref{eq:expectationexpression} where $\text{Tr}[\varrho M_{\vec{b}}]$ is likely small, is more subtle. While potentially there are a number of different strategies one could take here, we will make use of the observation that in order for a term $\text{Tr}[\varrho M_{\vec{b}}]$ to be non-zero all the Majorana operators in the monomial $M_{\vec{b}}$ must be \textit{paired}. That is, $\text{Tr}[\varrho M_{\vec{b}}] \neq 0$ if and only if, for every even Majorana operator $m_{2i}$ in the Majorana monomial, $m_{2i + 1}$ is also in the monomial (this can be seen immediately from the definition of the Majorana operator and noting that $\langle \phi | a_i | \phi \rangle = \langle \phi | a_i^\dagger | \phi \rangle = 0$ for any Fock basis state $\vert \phi \rangle$). 
Next we make observations that for a randomly chosen  Majorana monomial the higher its length $w$ the lower the probability that the monomial is paired (for $w < N$). 
This motivates a \textit{length truncation} strategy. At each step, after each gate is applied, all terms corresponding to Majorana monomials with length above a certain threshold are to be discarded.
This idea is depicted in Fig.~\ref{fig:sketch}.

While high length monomials are likely to not contribute to a Fock state this fact alone does not justify Majorana length truncation. Namely, high-length Majorana monomials could flow back into the low-length subspace, so eliminating them along the way could introduce significant errors. In broad strokes, the intuition behind why this does not cause issues is that 1) monomials tend to increase in length throughout the evolution, stabilizing at lengths $\approx N$ where their contribution to the estimation is irrelevant and 2) any monomials that do back flow tend to have small coefficients (due to the conservation of the 2-norm of the operator). We discuss these intuitions in more detail in the \textit{end matter} and formalize them into error guarantees in the following section. 

\medskip

\paragraph*{Error Guarantees.}\label{sec:length_truncation}

Here we present guarantees for simulating typical Majorana circuits using low-length Majorana propagation. Concretely, we consider the Majorana circuit defined in Eq.~\eqref{eq:unitarycircuit} of depth $L$ and suppose that each Majorana generator $M_{{\vec{b}_j}}$ is a randomly chosen length $k = 4$ monomial and each rotation angle is randomly chosen in the range $[0, 2 \pi]$. This class of circuits can be regarded as a mathematically tractable, oversimplified model of certain relevant circuits, such as iteratively constructed variational circuits \cite{grimsley2019adaptive}.

For simplicity in our analysis below we will suppose that the initial observable is a randomly chosen \textit{homogeneous} monomial. By homogeneous we mean that the individual Majorana coefficients are unbiased, uncorrelated and with a variance that only depends on length - for a more formal definition see Def.~\ref{def:homogeneous-distribution} in the Appendices. We believe this is a reasonable toy model for a randomly chosen molecular Hamiltonian. The theorem can also be applied more widely to \textit{certain} classes of biased distributions e.g. a random Majorana mode $O$ of length $o$.

We denote the approximation of $f_L(\vec{\theta})$ obtained by truncating all Majorana operators with length greater than $w_0$ by $f_L^{(w_0)}(\vec{\theta})$. The average approximation error can quantified via the \emph{mean squared error} (MSE),
\begin{equation}
   \Delta(w_0, L) = \exval((f_{L}(\vec{\theta}) - f_L^{(w_0)}(\vec{\theta}))^2) \, .
\end{equation}
This error can be bounded as follows.
\begin{theorem}[Error analysis]\label{thm:main-text-upper-bound-mse}
    Consider a random Majorana circuit as described above, with a Fock basis initial state and a random homogeneous observable $O$. Assume that $w_0 \geq 2$. Then for circuits with $L \leq 2 N \ln(5 e w_0 4^{w_0}) $: 
    \[ \frac{\Delta(w_0, L)}{\exval\left(\|O\|_2^2\right)}  \leq  \frac{1}{2^{N-1}} +  \left(\frac{e w_0 }{ N}\right)^{w_0/2} . \]
    Otherwise, for deeper circuits ($L \geq 2N \ln(5e w_0 4^{w_0})$):     \[ \frac{\Delta(w_0, L)}{\exval\left(\|O\|_2^2\right)}  \leq  \frac{1}{2^{N-1}} + 5ew_0\,\left(\frac{16 e w_0}{ N}\right)^{w_0/2} e^{-L/2N} .\]
    Here $\exval\left(\|O\|_2^2\right) = \exval\left( \Tr[ O O^\dagger ]\right)$ is the squared-2-norm averaged over the randomness in $O$.
\end{theorem}

Theorem \ref{thm:main-text-upper-bound-mse} provides rigorous guarantees on the truncation error incurred when simulating Majorana circuits using a low-length approximation. As illustrated in Fig.~\ref{fig:errors}, in all cases the error decreases exponentially with the truncation length $w_0$, reflecting that circuits can be faithfully captured using few-body Majorana operators. The comparison between the exact error (solid blue) and the theoretical bound (dashed green) shows that the empirical scaling closely follows the predicted behavior, validating the theoretical estimates.

\begin{figure}[t!]
    \centering
    \includegraphics[width=0.98\linewidth]{./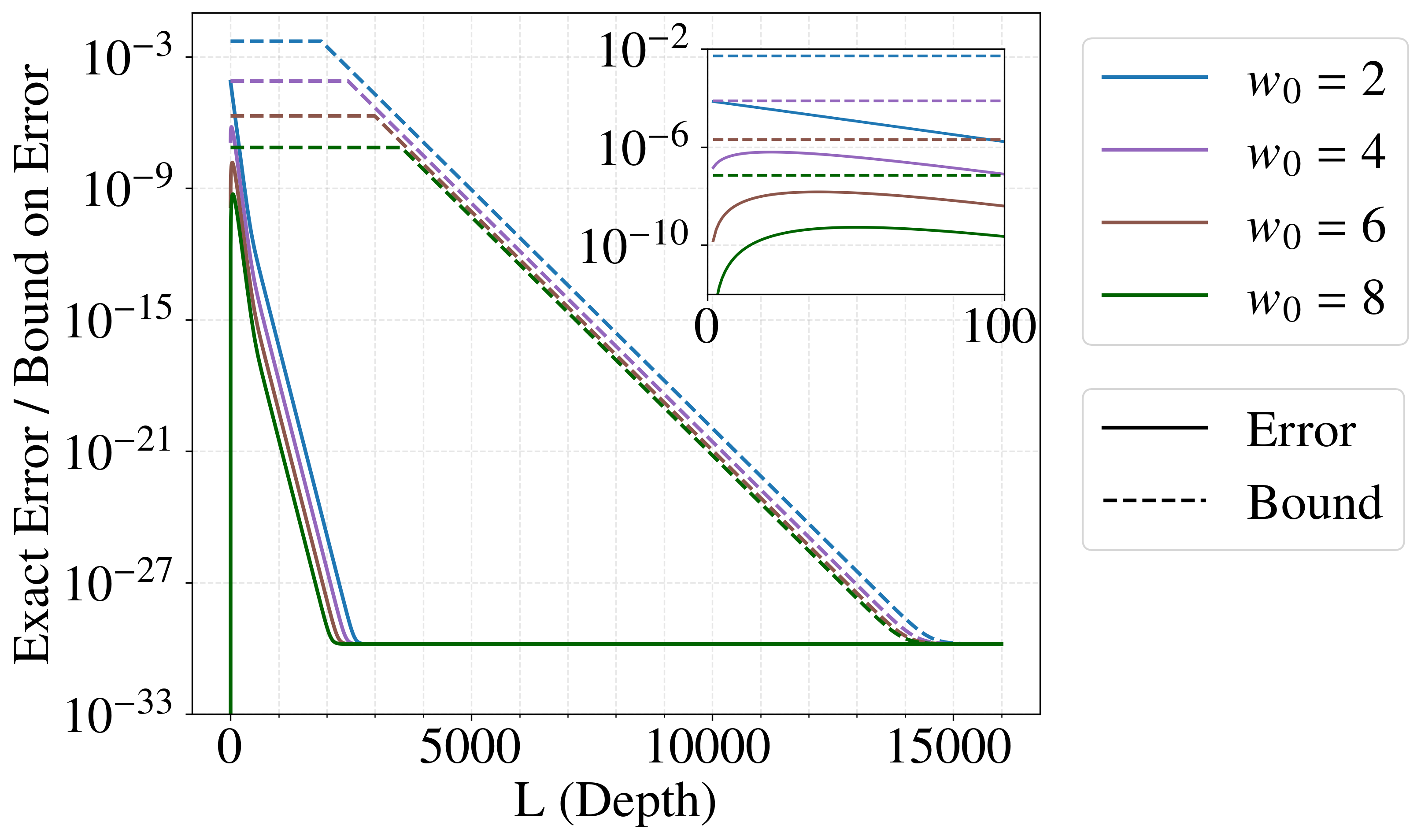}
    \caption{\textbf{Majorana Propagation error analysis.}   Comparison between the exact mean squared simulation error (computed via Theorem~\ref{thm:poly-time-algo-for-squared-error-maintext} in the \textit{end matter}) and the theoretical error bound for a system of $N = 100$ Majorana modes and initial observable Majorana length $= 4$ for different length truncations $w_0$. 
        The solid blue line shows the exact error as a function of circuit depth $L$, while the dashed green line represents the analytical upper bound on this error. The inset zooms in on low depths ($L = 0$ to $L = 100$). Here the observable $O$ is a random Majorana mode of length 4. }
    \label{fig:errors}
\end{figure}

We note that for deeper circuits, in addition to the exponential suppression of the error with the cut off $w_0$, there is also an exponential suppression in circuit depth $L$.  This is a reflection of the fact that the absolute value of expectation values of random circuits decay exponentially with circuit depth to a saturation value that is itself exponentially small in $N$ (a phenomenon analogous to barren plateaus from variational circuits with fixed ansatze~\cite{mcclean2018barren, larocca2024review}). It is satisfying that our error bound, in contrast to prior work~\cite{angrisani2024classically} in the context of Pauli propagation, carefully tracks this depth dependence. 

We further bound the time complexity of our algorithm using the fact that the total number of Majorana monomials with length at most $w_0$ is $\mathcal{O}((2N)^{w_0})$. We can then combine Theorem~\ref{thm:main-text-upper-bound-mse} with Markov's inequality to transform the average error statement into a probabilistic statement. On doing so, we obtain the following Theorem.

\begin{theorem}[Time complexity]
\label{thm:resources}
    Let $U$ be a randomly sampled circuit from an $L$-layered random Majorana rotation circuit on $2 N$ modes, and let $O$ be a random homogeneous observable. Moreover, let $\epsilon, \delta > 0$ such that $\epsilon^{-1}\delta^{-1} \leq 2^{o(N)}$.
    There exists a truncation length $w_0$ for which Majorana propagation runs in time
    $L \cdot \nModes^{\mathcal{O}\left(\log(\epsilon^{-1} \delta^{-1})\right)},$
    and outputs a value $f_L^{(w_0)}(\vtheta)$, such that
    \begin{align}
        \abs{f_L^{(w_0)}(\vtheta) - f_L(\vtheta) } \leq \epsilon,
    \end{align}
    for at least $1 - \delta \mathbb{E}(\| O \|_2^2)$ fraction of the circuits, if $N$ is large enough.
\end{theorem}

Theorem~\ref{thm:resources} thus establishes that for any constant error, only polynomial resources are required to classically simulate the circuit. 
\medskip

\paragraph*{Implementation.}
\begin{figure}[t!]
    \centering
    \includegraphics[]{./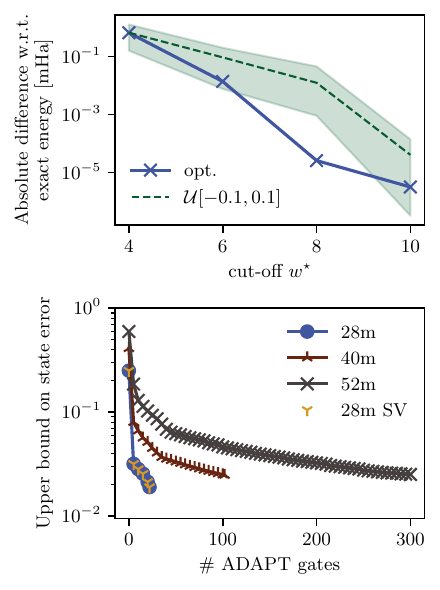}
    \caption{\textbf{Majorana Propagation implementation for ground states.} Top: The blue curve depicts the error in the estimated energy as a function of monomial length cut-off $w^{\star}$ for a 28-mode MP-simulated ADAPT-VQE circuit containing 22 Fermionic gates, computed by comparison against statevector simulation (SV).
    The error of the MP simulation decreases exponentially from below chemical precision (1.6 mHa) for $w^{\star} = 4$ to around numerical precision for $w^{\star} = 10$.
    The green curve shows the same quantity for 5 circuits obtained by uniformly randomizing angles in the range $[-0.1, 0.1]$.
    The shaded area indicates the maximal and minimal errors attained.
    Bottom:
    The ADAPT-VQE inspired algorithm is run classically using MP to simulate the circuits for increasing problem sizes (solid lines).
    The $x$-axis reflects the number of Fermionic gates in the circuit.
    The $y$-axis value is the error in the energy, computed with MP with $w^{\star} = 6$, divided by the spectral gap of the Hamiltonian, which bounds the lack of overlap with the ground state.
     For 28 modes we verify the accuracy of the MP simulation by computing the equivalent error via a SV simulation (yellow markers). SV simulation for the larger problems was not tractable. 
    Only every 5th marker is included to improve readability. }
    \label{fig:adapt_sims}
\end{figure}

We now test Majorana Propagation in the context of variational circuits for quantum chemistry.
Given an input Hamiltonian $H$, the goal is to find a circuit approximating its ground state to a certain degree.
The accuracy of the approximation varies depending on the specific application at hand, but one is typically interested in obtaining non-negligible overlaps with the ground state.

Within the plethora of variational optimization methods for chemistry, adaptive methods have proved to be particularly gate efficient, finding good reference states with relatively modest gate complexity. Here we will explore a classical algorithm inspired by ADAPT-VQE~\cite{grimsley2019adaptive, Q-ADAPT, Grimsley_2023, Anastasiou_2024, 2024reducingresources}, where gates are sequentially selected from a Majoranic pool and added to the circuit as to minimize the energy of the output state.
The operators in the pool are generally low-body, that is, spanned by low-length Majorana monomials.
This adaptive ansatz construction strategy typically results in rather irregular circuits with gates generated by low-length monomials---circuits for which the unstructured circuit class introduced above should be a reasonable model. 

The Hamiltonians considered in this section model the electronic structure of the TLD1433 molecule, a ruthenium-based photosensitizer currently undergoing clinical trials for non-muscle invasive bladder cancer \cite{theralase2024}, with different active sizes comprising $N = \{28, 40, 52\}$ modes.
This is a strongly correlated system which state-of-the-art classical methods, such as DMRG, struggle to solve for larger active spaces.
In our simulations, we use a slightly refined monomial-length-based truncation criterion that will be explained in detail in a forthcoming version of this manuscript.

\begin{table}[t!]

    \centering  \small
        \begin{tabular}{@{}cccccc} 
        \hline
         Number & \multicolumn{2}{c}{Majorana Propagation} 
 & Qiskit& CUDA-Q & CUDA-Q  \\
 of modes & pre-pro. & estim. & SV & SV & MPS \\\hline
              28&  0.05 s & 0.0009 s  &  2.72 h & 12.18 min  & 32.15 min  \\
              40&  1.34 s & 0.0395 s&  -- &  -- &10.10 h  \\ 
              52&  13.85 s & 0.1092 s &  -- &  -- & $>$ 24 h\\
         \hline
    \end{tabular}
\caption{\label{tab:time-benchmarks}
\textbf{Comparison of simulation times.}
Here we compare the times to estimate the energy of the TLD1433 Hamiltonians (of various active space sizes) for an approximate ground state for various simulation methods. 
The approximate ground states have been found by optimizing our ADAPT-VQE inspired ansatz with MP to within 10 mHa error with respect to the DMRG energy benchmark. The two columns for MP refer to the pre-processing time for a given ansatz (i.e., the time to build the surrogate) and to the subsequent energy estimation time (i.e., the time to evaluate the surrogate), respectively. Any additional estimations of the energy for different circuit parameters $\vec{\theta}$ only incur a time given in the second column. We note that for 28 qubits, MP approximates the energy within an absolute error of $\approx 1.39 \cdot 10^{-2}$ mHa with respect to the CUDA-Q statevector exact result, while CUDA-Q MPS method gives an approximation error of $\approx 2.71 $ mHa.}
\end{table}

In Fig.~\ref{fig:adapt_sims} (top), we plot the error in the energy estimated with MP as a function of the length cut-off $w^{\star}$ for an ADAPT-VQE circuit comprising 22 Fermionic gates---equivalent to a 616-CNOT circuit on all-to-all connectivity (see Appendix~\ref{app:numerics})---optimized for the 28-mode active space Hamiltonian.
The error $\varepsilon$ is computed as $\varepsilon = \vert E_{\rm MP}(w^{\star}) - E_{\rm sv} \vert$, where $E_{\rm sv}$ is the exact energy (up to numerical precision) obtained via statevector simulation.
Remarkably, even for $w^{\star} = 4$, MP produces an estimate with an error below chemical precision (1.6 mHa), with the error decreasing approximately exponentially as a function of the cut-off to nearly numerical precision for $w^{\star} = 10$.
A similar behaviour is observed when the angles are randomized within the parameter range in the optimized circuit, although the decay in $w^{\star}$ is slower.

For the larger active space sizes, statevector simulations are out of reach, which makes benchmarking MP challenging.
However, we can take advantage of the fact that we are using MP in the context of ground state chemistry.
State-of-the-art computational chemistry methods, like DMRG, can produce very accurate estimates of the ground state energy for the considered system sizes.
We can therefore compare the energy obtained by training ADAPT-VQE circuits with MP against these benchmarks to indirectly test our simulator.
A well-behaved convergence towards the benchmark ground state energy is an implicit signature of MP's simulation accuracy.

In many applications, the relevant figure of merit quantifying the quality of a reference state $\vert \phi \rangle$ is its overlap with the ground state $\vert \langle \phi \vert \psi_0 \rangle \vert$.
Given the error in the energy, $\vert E - E_0 \vert$, the overlap can be lower-bounded if the spectral gap of the Hamiltonian is known (see Appendix~\ref{app:state_error}).
Thus, using the spectral gap computed with DMRG, we can upper-bound the \textit{state error}, which we define as $1 - \vert \langle \phi \vert \psi_0 \rangle \vert$.
The bottom plot in Fig.~\ref{fig:adapt_sims} shows the upper bound to the state error as a function of the number of Fermionic gates in the circuit as the adaptive training progresses.
All cases show a rapid convergence, reaching an overlap of at least 97\% with the ground state with 22, 102, and 300 Fermionic gates for the three active space sizes (the simulations were manually interrupted upon reaching a absolute error of 10 mHa).
When mapped to qubit space and transpiled for an all-to-all connectivity quantum computer, the equivalent circuits contain 616, 3470, and 13,676 CNOTs, respectively, which highlights the complexity of the obtained states.

As explained above, these results provide an indirect validation of the approach.
In Fig.~\ref{fig:adapt_sims} (bottom), two sources of error are compounded: the error of the circuit itself, and the estimation error incurred by MP.
Thus, it is in principle possible for the estimation error to contribute as to lower the energy, resulting in estimated energies closer to the ground state than the true circuits'.
In order to assess to what extent this is the case, we evaluate the energy of several intermediate circuits obtained throughout the 28-mode ADAPT-VQE simulation with statevector.
As shown in the figure, the state error bounds computed with MP and statevector agree almost perfectly.
This shows that MP is accurate enough to support the successful classical simulation of ADAPT-VQE circuits, leading to a collection of circuits increasingly converging to the ground state.

Finally, Table~\ref{tab:time-benchmarks} reports run-time benchmarks for energy estimation with MP and other methods: two statevector simulators (Qiskit and CUDA-Q), as well as CUDA-Q MPS simulator, a tensor network-based circuit simulator.
The circuits used are the converged ADAPT-VQE circuits from Fig.~\ref{fig:adapt_sims} (bottom), transpiled to qubit space, resulting in the CNOT counts reported above.
While statevector simulators are exact (up to numerical precision), they scale exponentially.
Thus, only the 28-mode circuit could be simulated with these simulators with reasonable resources.
MP and CUDA-Q MPS, on the other hand, produce approximate estimates but run in polynomial time.
Details on the implementation of these methods can be found in Appendix~\ref{app:numerics}.
Interestingly, in addition to simulating these circuits considerably faster than CUDA-Q MPS, the 28-mode result is also nearly two orders of magnitude more accurate, approximating the statevector energy within $\approx 1.39 \cdot 10^{-2}$ mHa, as opposed to the $\approx 2.71 $ mHa error in the MPS simulation.

Crucially, most of the time taken by the MP calculations is in a pre-processing stage, after which the re-evaluation of the energy for different parameter values $\vec{\theta}$ is very fast.
This feature is particularly convenient for variational circuits, which benefits from parameter optimization.

\medskip

\paragraph*{Discussion.}
Our analytical results portray MP as a very natural framework for the simulation of Fermionic circuits.
MP leverages newly identified phenomena in the behaviour of basis monomials, namely monomial length, which is conceptually very different to Pauli weight.
The underlying assumption in our analysis, specifically that circuits are unstructured, with gate generators randomly chosen from a set of bounded-length monomials, seems particularly appropriate to model iteratively constructed ground state approximating circuits.
Our analysis also sheds light on the working principles behind approaches like Majorana-length-damping in tensor network methods for Fermionic systems~\cite{kuo2024energy}.

Our numerical results substantiate that MP can be used to classically simulate ADAPT-VQE to produce circuits with significant overlap with the ground state in strongly correlated systems.
In our simulations, MP was able to outperform state-of-the-art, tensor network circuit simulators.
It would be interesting to compare its performance to more recently proposed approximate simulation methods for Fermionic circuits, like Refs.~\cite{mullinax2023large,mullinax2024classical,reardon2024improved,mocherla2023extending}.
Since MP does not rely on the sparsity of the wave-function in the Fock basis, and since single-excitation gates do not require additional truncation in MP, MP may produce more accurate estimations for circuits with many such excitations leading to dense states. Similarly, we stress that MP is strictly more powerful and efficient than `gsim' approaches~\cite{somma2005quantum, somma2006efficient, Galitski2011Quantum, anschuetz2022efficient, goh2023lie} as the basis of the algebra is effectively generated `on the fly’ during the simulation (rather than in advance) and so is not restricted to free Fermionic models with polynomially scaling Lie algebras \cite{goh2023lie, zimboras2014dynamic}. 

MP is, by design, well-suited to being interfaced with quantum hardware with the aid of Fermion-to-qubit mappings~\cite{miller2023bonsai,miller2024treespilation, neven_mapping, jw_mapping, bk_mapping}.
Through these transformations, the circuits obtained via MP can be readily executed on hardware.
While MP can be used to efficiently approximate expectation values of observables, hardware execution will be needed in many contexts, even when sampling---a task MP cannot be used for---is not required.
For instance, in many computational pipelines, it may generally prove more efficient to use MP to train a circuit up to a limited but sufficient level of accuracy and then use it as a reference for another algorithm, such as QPE, than solving the problem purely classically.

More generally, including contexts like dynamical simulations and machine learning and beyond, hybrid methods where the outputs of a quantum algorithm are further processed with the help of a classical MP simulation have the potential to make the most of size and depth-limited near-term quantum hardware by reducing the load on the quantum device. But even in the fault tolerant era such hybrid approaches will likely remain valuable and MP will prove useful for finding good initial states and processing the outputs of fermion simulations on quantum hardware.

\medskip

\paragraph*{End Matter.}

In order to study Majorana length truncation analytically, we consider unstructured Fermionic circuits with gates generated by Majorana monomials with $\norm{\vec{b}_j}_1$ an even number smaller than or equal to some $k^{\star} \ll N$.
Unstructured here means that the generators are randomly sampled from the pool of such operators, $\{ M_{\vec{b}} : \norm{\vec{b}}_1 \leq k^{\star}, \norm{\vec{b}}_1 \mod 2 = 0 \}$.

\begin{figure}
    \centering
    \includegraphics[width=0.9\columnwidth]{./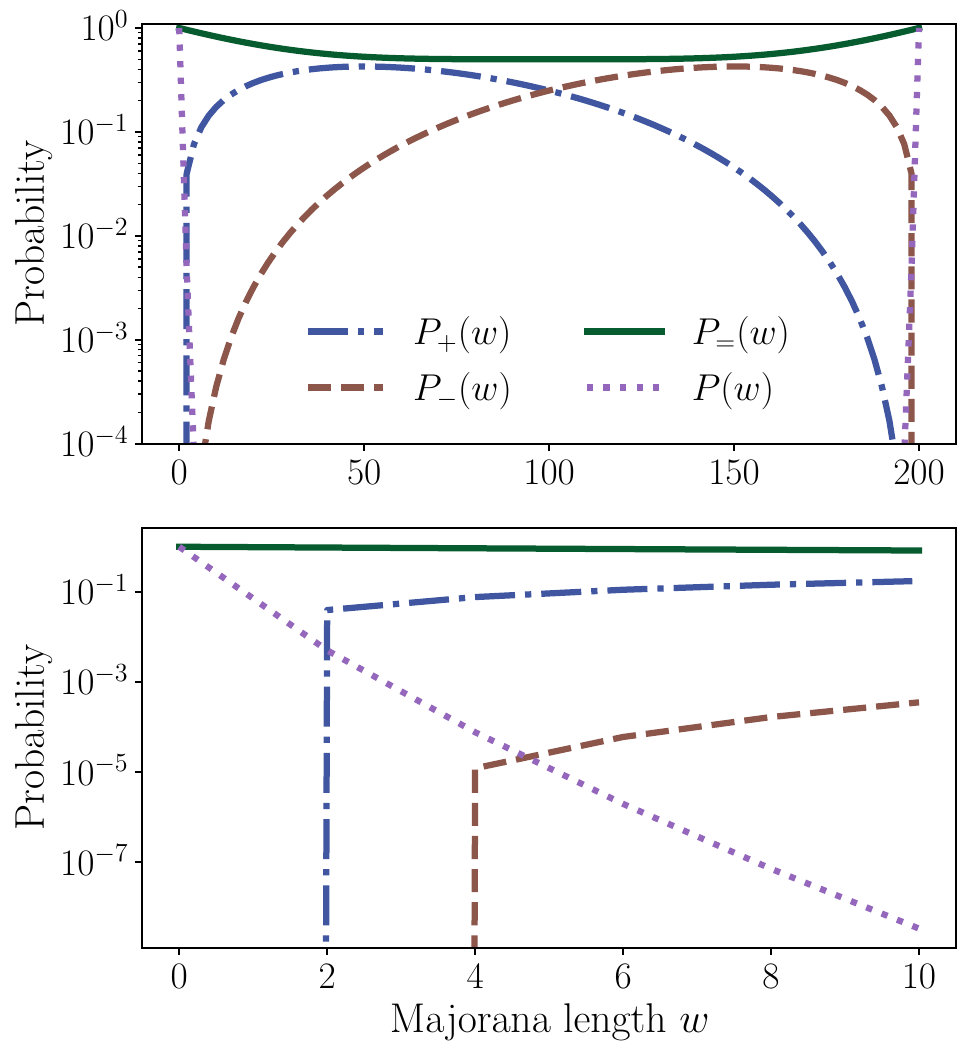}
    \caption{\textbf{Suppression of high-length modes and backflow.} Transition and pairing probabilities for $N = 100$ modes and a randomly chosen length-4 gate for all $w$ lengths from 0 to $2N$ (top) and for $w \leq 10$ (bottom).
    For small $w$, $p_+$ is substantially larger than $p_-$, which tends to drive monomial length upwards in the event of branching.
    The roles of $p_+$ and $p_-$ are symmetrically inverted at $w = N$, so successive branching tends to trap Majorana monomials around that point.
    The highest probability corresponds to no-branching events, $P_{=} = 1 - [p_+ + p_-]$.
    At the end of the Heisenberg evolution, monomials are traced with a Fock basis state.
    The probability for a length-$w$ monomial to contribute to the trace, $\probContribute{w}$, decreases very rapidly in $w$, rendering the contribution of high-$w$ terms irrelevant.}
    \label{fig:analytics}
\end{figure}

The first important point motivating Majorana length truncation is the fact that most high-length terms do not contribute to the expectation value $\langle H \rangle$.
As discussed previously, only paired Majorana monomials yield non-zero expectation values when traced with a Fock basis state.
The probability $\probContribute{w}$ for a randomly chosen $w$-monomial to be paired, is given by
\begin{equation}
    \label{eq:pairing_prob}
    \probContribute{w} = \frac{\binom{N}{w/2}}{\binom{2N}{w}}
\end{equation}
and as depicted in Fig.~\ref{fig:analytics}, is very small unless $w$ is close to $0$ or to $2N$.
This is stated more precisely in the following theorem (see Appendix~\ref{app:exp_suppression} for proof).
\begin{proposition}[The contribution of high length Majorana monomials are exponentially suppressed]\label{thm:suppression}
The probability $\probContribute{w}$ that any randomly chosen $N$-mode Majorana monomial with length $w$, where both $w \in \Theta(N)$ and $(2N - w) \in \Theta(N)$, has a non-zero expectation in any given Fock state is exponentially suppressed in $N$, that is  
\begin{equation}
    \probContribute{w} \in \mathcal{O}(r^{-N}),
\end{equation}
where $r > 1$. Furthermore, the minimal probability is obtained for $w = N$ with 
\begin{equation}
        \min_w \probContribute{w} = \probContribute{N} \in \mathcal{O}(2^{-N}) \, . 
\end{equation}
Conversely, if $w \in \OC(1)$ we have that
\begin{equation}
    \probContribute{w} \in \Omega\left(\frac{1}{\text{poly}(N)}\right) \, . 
\end{equation}
\end{proposition}
While crucial, this fact alone does not justify Majorana length truncation.
Truncating below a certain threshold $w^{\star} \ll N$ disposes of terms with length $w \approx 2N$, if any, which can potentially contribute non-negligibly to the expectation value.
Furthermore, the fact that these Majorana monomials do not contribute significantly to the trace with a Fock basis state at best justifies the truncation at the very end of the evolution, once all gates have been applied.
At intermediate steps, high-length Majorana monomials could flow back into the low-length subspace, so eliminating them along the way could introduce significant errors.
To understand why these events are typically very unlikely, we must turn our attention to the analysis of typical Majorana monomial dynamics.

From the algebraic properties of Majorana operators, Eq.~\eqref{eq:majorana_properties}, it can be seen that the two monomials anti-commute if and only if they have an odd number of Majorana operators in common, that is, iff the number of overlapping operators $s$, which can be computed as $s = \vec{b} \cdot \vec{b}_j$, is odd.
We can draw several interesting conclusions from this fact.

First, let $k \coloneqq \norm{\vec{b}_j}_1$ and $w \coloneqq \norm{\vec{b}}_1$.
As a consequence of the properties in Eq.~\eqref{eq:majorana_properties}, more precisely of the fact that even repetitions of Majorana operators in any product cancel out, the length $w'$ of the monomial in the sine branch, $\mathrm i M_{\vec{b}_j} M_{\vec{b}}$, is the sum of the lengths of the two original monomials, $w + k$, minus twice their overlap, $s$:
\begin{equation}
    \label{eq:output_length}
    w' = w + k - 2s.
\end{equation}
Hence, even-$k$ gates preserve Majorana length parity; in physically relevant scenarios, where all gate generators and monomials in the observable are even, one only needs to consider the even-monomial subspace.

Second, for gates generated by so-called single-excitation, or one-body monomials ($k = 2$), branching can only occur for $s = 1$, and so $w' = w$.
In other words, these gates preserve Majorana length and can therefore be accounted for exactly by the algorithm: no additional truncations are required in order to restrict the operator to the subspace.

Third, according to Eq.~\eqref{eq:output_length}, the length $w'$ of the new monomial in the sine branch is $w' > w$ if $s < k /2$, $w' < w$ if $s > k /2$, and $w' = w$ if $s = k /2$.
Importantly, for $w, k \ll N$, smaller values of $s$ are much more likely than larger ones.
This can be seen intuitively: drawing $k$ distinct Majorana operators out of the set of all $2N$, the most likely event is for none of them to match those within the $w$ operators in $M_{\vec{b}}$.
Consequently, there is an asymmetry in the dynamics whereby branching events tend to increase Majorana length with higher likelihood than they tend to decrease it, preventing backflows of monomials back into the low-length subspace.
Crucially, for $w > N$, the effect is reversed and the dynamics tends to decrease Majorana length, so the dynamics tends to cluster monomials around $w = N$.

This can be seen more formally, since the probability for the overlap to be exactly $s$ can be derived analytically,
\begin{equation}\label{eq:overlap_prob_main}
   P(N, w, k, s) =  \frac{\binom{k}{s} \binom{2N-k}{w -s}}{\binom{2N}{w}},
\end{equation}
from which the probabilities for the length to increase or decrease, $p_+$ and $p_-$, can be computed by simply summing over $s$ for $s < k/2$ and $s > k/2$, respectively.
These quantities are illustrated in Fig.~\ref{fig:analytics}.
We note the invariance of Eq.~\eqref{eq:overlap_prob_main} with respect to $w \rightarrow 2N - w, \, s \rightarrow k - s$, which explains the reversal of the phenomenon for $w > N$.
Finally, using these expressions, it is possible to analyze the behaviour of the ratio $R \coloneqq p_- / p_+$, which quantifies the asymmetry in the dynamics and to show that, for fixed $w$ and $k$, it vanishes quadratically in $N$, making backflows increasingly unlikely as $N$ increases.
All these points are captured formally in the following theorem (proof in Appendix~\ref{app:backflow_suppresion}).

\begin{proposition}[Suppression of backflow]\label{thm:backflow}
Consider the propagation of a length $w$ Majorana monomial $M$ under a length $k$ generator of rotation $M'$ as per Eq.~\eqref{eq:gatebranch}. We assume that $[M', M] \neq 0$ such that the propagation causes branching and we denote the length of the new Majorana monomial, $M' M$, as $w'$. Let $p_+$ and $p_-$ denote the probabilities that the propagation increases ($w' > w$) and decreases ($w < w'$) operator length. The following statements hold:
\begin{enumerate}
    \item Any rotation under a $k=2$ length generator $M$ leaves length $w$ unchanged,  $p_+ = p_- = 0$. 
    \item For any $w \in \OC(1)$ and $k \in \OC(1)$ with $2 \leq k \leq w$ such that  $w k < N$ we have that 
    \begin{equation}
        R \coloneqq p_-/p_+ \in \OC(1/N^2) \, .
    \end{equation}
\end{enumerate}
\end{proposition}

This analysis paints a very clear picture of Majorana monomial dynamics for typical circuits in the Heisenberg picture: as the operator evolves through the gates in the circuit, operator branching leads to ever-increasing Majorana length, with very low likelihood of length-decreasing events, until length $w \approx N$ is reached.
After that point, branching tends to decrease length, trapping monomials around $w = N$.

The algorithm introduced in this work takes advantage of this dynamics.
As explained above, the monomials clustering around $w = N$ contribute exponentially little to the trace with the Fock basis state, so they can be neglected.
Due to the inversion in the branching imbalance for $w > N$, the terms with $w \approx 2N$ are never populated, so they need not be considered in the simulation.
Moreover, since backflow probability throughout the dynamics is very small, they can be neglected early on, as soon as they leave the low-length subspace.

We finally need to account for the fact that while backflow is suppressed in theory the contribution of many unlikely backflows could add up to a substantial contribution to the expectation value. However, in practice this does not occur as the total two norm of the evolved operator is conserved under unitary evolutions and so the coefficients of the terms that do backflow are small. 

In Appendix~\ref{app:endtoendanalysis} we tie together these intuitions to prove that the average error of a length-truncated Majorana propagation simulation is exponentially suppressed in truncation length $w_0$. To do so we employ a Markov chain analysis that is conceptually rather different to the proof strategies employed to analyze path propagation methods previously~\cite{angrisani2024classically}. In particular it allows us to first obtain an efficient classical algorithm for computing the mean squared error \textit{exactly}. 
\begin{theorem}\label{thm:poly-time-algo-for-squared-error-maintext}
    The mean squared error $\exval((f_L - f_L^{(w_0)})^2)$ for a random Majorana circuit as defined in Def.~\ref{def:circuitmodel} can be computed exactly classically in polynomial time. 
\end{theorem}
\noindent It is this algorithm that we use to plot the exact average error in  Fig.~\ref{fig:errors}. We then proceed to bound the error. Crucially, in contrast to proofs for PP~\cite{angrisani2024classically}, we obtain a \textit{depth dependent} error bound as shown in Theorem~\ref{thm:main-text-upper-bound-mse} and plotted in Fig.~\ref{fig:errors}. In fact, our algorithm can also be applied more generally for computing the average of the expectation value. It would be interesting to investigate whether this strategy could be applied to propagation methods in other contexts to compute coarse-grained properties and/or derive tighter bounds. 

\medskip

\paragraph*{Acknowledgements}
Work on ``Quantum Computing for Photon-Drug Interactions in Cancer Prevention and Treatment'' is supported by Wellcome Leap as part of the Q4Bio Program.
We thank Pi A.~B.~Hasse, Stefan Knecht, Fabijan Pavosevic, Martina Stella, and Fabio Tarocco for providing the TLD1433 Hamiltonians and the corresponding DMRG benchmarks.
We also thank Ludmila A.~S.~Botelho and Roberto Di Remigio for their coding optimization advice.

\paragraph*{Note}
During the completion of this work, we became aware of independent work by Matteo D'Anna and Jannes Nys where they employ a related approach to study dynamics in Fermionic lattice models.

\paragraph*{Competing interests}
Elements of this work are included in patent applications filed by Algorithmiq Oy currently pending with the European Patent Office.

\paragraph*{Author contributions}
GGP conceived the method.
AM implemented the method.
ZH and GGP formalized the initial theory and wrote the first version of the manuscript. JF conducted the end-to-end error analysis under the supervision of ZH and ZZ. 
AM, OS, AN, RC, and AG implemented and executed the ADAPT-VQE simulations under AG's direction.
AM, OS, and AG produced the figures.
All authors participated to discussions and to the writing of the manuscript.

\bibliography{notes/ref, notes/quantum, notes/QITE} 

\clearpage 

\appendix
\onecolumngrid

\mtcaddpart{}

\parttoc 

\section{Preliminaries}\label{app:prelims}

\subsection{Introduction to fermionic systems}
An $N$-mode Fermionic system in second quantization can be described in terms of $N$ creation operators $\{ a_i^\dagger \}_{i = 1}^{N}$ and annihilation operators $\{ a_i \}_{i = 1}^{N}$ that satisfy the canonical anticommutation relations:
\begin{equation}
    \label{eq:Fermionic_anticommutation}
    \{ a_i, a_j \} = \{ a_i^\dagger, a_j^\dagger \} = 0, \quad \{ a_i^\dagger, a_j \} = 1 \delta_{ij}.
\end{equation}
Mathematically, the $N$-mode Fermionic system is equivalent to the $N$-dimensional Fock space $\mathcal{F}(\mathbb{C}^N)$, a $2^N$-dimensional Hilbert space spanned by the so-called Fock basis.
The operators defined above allow us to define the basis as follows. First, the Fermionic vacuum $\ket{\text{vac}_\text{f}}$ is defined to be the unique vector such that $a_j \ket{\text{vac}_\text{f}} = 0$ for all $j = 1, \dots, N$. 
The remaining basis elements, namely \textit{Fock} states, can be constructed by considering all possible combinations of occupation numbers $n_j \in \{0, 1\}$:
\begin{align}
    \ket{n_1 n_2 \dots n_N} \coloneqq \prod_{j=1}^{N} (a_j^\dagger)^{n_j} \ket{\text{vac}_\text{f}}.
\end{align}

Creation and annihilation operators are not the only operators that can define the Fermionic space. It is also common to define an equivalent set of so-called Majorana operators $\{ m_k \}_{k = 1}^{2N}$ as
\begin{gather}
    \label{eq:Fermion_to_Majorana}
    m_{2 j - 1} \coloneqq a_{j}^\dagger + a_{j},\\ m_{2 j} \coloneqq  \mathrm{i} (a_{j}^\dagger - a_{j}).
\end{gather}
Such operators obey many useful properties, such as being unitary and self-adjoint, obeying relations
\begin{equation}
    m_i^\dagger=m_i,\quad\{m_i,m_j\}=2\delta_{ij}.
\end{equation}
These operators play an analogous role in the Fermionic Clifford group as Pauli operators do in the qubit Clifford group.
For a comprehensive introduction to the structure of the Majorana Clifford group see Ref.~\cite{bettaque2024structure}.

The above ways of defining Fermionic systems allow us to provide two equivalent forms of an $N$-mode second-quantized Fermionic Hamiltonian:
\begin{equation}
\begin{split}
    \mathcal{H}_f &= \sum_{ij}^{N} h_{ij} a_i^{\dagger} {a_j} + \sum_{ijkl}^{N} h_{ijkl} a_i^{\dagger} a_j^{\dagger} {a_k} {a_l} \\
    &=\sum_{ij}^{2N} \mathrm i c_{ij} {m_i} {m_j} + \sum_{ijkl}^{2N} c_{ijkl} \, {m_i} {m_j} {m_k} {m_l}.
\end{split}
\label{eq:hamiltonian}
\end{equation}
for coefficients $h_{ij}$, $h_{ijkl}$,  $c_{ij}$ and $c_{ijkl}$.
The equivalence between these forms comes directly from the linear dependency presented in Eq.~\eqref{eq:Fermion_to_Majorana}. 

Any Fermionic operator can be uniquely expressed as a linear combination of Majorana monomials. Hence, up to a sign, each unique product of Majorana operators can be associated with a $2N$-dimensional binary vector $\vec{b} = (b_1, \ldots, b_{2N}), b_i \in \{0, 1\}$ through the expression
\begin{equation}
     M_{\vec{b}} = \mathrm i^{r_{\vec{b}}}  m_1^{b_1} m_2^{b_2} \cdots m_{2N}^{b_{2N}} \, .
\end{equation}
We define the length of a Majorana monomial as the 1-norm of vector $\vec{b}$, $\norm{\vec{b}}_1 = \sum_{i = 1}^{2N} b_i$. Let us relabel a Majorana monomial of length $w$ as $m_{x_1}...\,m_{x_w}$. All monomials referenced throughout this document have been made Hermitian by multiplying the anti-Hermitian ones by the imaginary unit $\mathrm i$. Concretely, 
\begin{equation}
    (m_{x_1}...\,m_{x_w})^{\dagger} = m_{x_w}...\,m_{x_1} = (-1)^{\sum_{j=1}^{w-1} w - j} m_{x_1}...\,m_{x_w} = (-1)^{w(w-1)/2} m_{x_1}...\,m_{x_w}
\end{equation}
Thus if $w$ or $w-1$ is a multiple of 4 then $ m_{x_1}...\,m_{x_w}$ is already Hermitian and $r = 0$, but otherwise $ m_{x_1}...\,m_{x_w}$ is antihermitian and $r =1$ so that the factor of $\mathrm i$ makes the monomial Hermitian. 

Notice that, owing to the anti-commutation relations between Majorana operators, any repeated terms in a monomial cancel out, so each operator can appear at most once in a monomial. Also, any monomial is equal to itself upon reshuffling of its elements up to a sign, so we will assume $x_1 < x_2 < \ldots < x_{w}$ throughout without loss of generality.

\subsection{Algorithm overview}

We are interested in computing 
\begin{equation}\label{eq:expectation}
   f(\vec{\theta}) := \langle H \rangle_{\vec{\theta}} =  {\rm Tr} \left[U(\vec{\theta}) \varrho U(\vec{\theta})^\dagger H \right] 
\end{equation}
where $\varrho = \ketbra{n_1 n_2 \dots n_{N}}{n_1 n_2 \dots n_{N}}$, with $n_i \in \{0, 1\}$, is a Fock basis state. The circuit
\begin{equation}\label{eq:circuit}
    U(\vec{\theta}) = \prod_{j=1}^L e^{- \mathrm i \theta_j M_{{\vec{b}_j}}/2}
\end{equation}
consists of a sequence of $L$ Fermionic gates where each gate is generated by a Hermitian Fermionic generator $M_{{\vec{b}_j}}$ which is a $w$-monomial. We will assume $w$ even and $w \leq k$ for all gates and for some pre-defined value of $k$ (typically $k=4$). We will assume that the observable $H$ is composed of a sum of polynomially many $w$-monomials. That is,  $H = \sum_{\vec{b}} \alpha_{\vec{b}} M_{\vec{b}} $  where the $M_{\vec{b}}$ are $w$-monomials with $w \leq k'$. Typically, $k' = 4$, for interacting Hamiltonians.

Our Majorana Propagation method, analogously to Pauli propagation algorithms, works in the Heisenberg picture where the initial operator (here the observable $H$) is often sparse in the Majorana basis. In this basis, the operator is back-propagated through the circuit, i.e., we compute $U(\vec{\theta})^\dagger  H U(\vec{\theta})$, and finally overlap with the initial state $\varrho$. 
It is well known that the effect of applying a unitary gate generated by a single Pauli string $P$ to another Pauli string $Q$ is given by, 
\begin{equation}\label{eq:PauliRot}
    e^{ \mathrm i \theta P / 2} Q e^{- \mathrm  i \theta P / 2} = \begin{cases} Q, & [P, Q] = 0, \\ \cos (\theta) Q +  \mathrm i \sin(\theta) PQ, & \{P, Q \} = 0.\end{cases}
\end{equation}
We recall that the standard textbook derivation of Eq.~\eqref{eq:PauliRot} can be computed by Taylor-expanding $ e^{  \mathrm i \theta P / 2}$ and using $P^2 = I$. For Majorana monomials we similarly have $M^2 = I$, and thus the analogous expression holds here. Namely, we have 
\begin{equation}\label{eq:gate}
    e^{ \mathrm i \theta M' / 2} M e^{-  \mathrm i \theta M' / 2} = \begin{cases} M, & [M', M] = 0, \\ \cos (\theta) M +  \mathrm i \sin(\theta) M' M, & \{M', M \} = 0.\end{cases}
\end{equation}

The Heisenberg evolved operator, $U(\vec{\theta})^\dagger  H U(\vec{\theta})$, can be computed by applying each gate to each of the monomials in the observable in turn using Eq.~\eqref{eq:gate}. The final expectation function takes the form
\begin{equation}\label{eq:path_expectation}
    f\left(\vec{\theta}\right) = \sum_{j} c_{\vec{b}}\left(\vec{\theta} \right) \Tr[ \varrho M_{\vec{b}}] \,,
\end{equation}
where $c_{\vec{b}}\left(\vec{\theta} \right) $ are the coefficients of the back-propagated Majorana operators $M_{\vec{b}}$. These coefficients capture both the initial length of each of the relevant monomial in the target observable $H$ and the sine and cosine coefficients that have been picked up during the propagation.

\section{Coefficient and small angle truncations}\label{app:smallangle}

Here we suppose that we are interested in simulating a small angle subregion of the surrogate landscape.
Let us then define
\begin{equation}\label{eq:hypercube}
	\vol(r) := \{ \vec{\theta};\ \forall  i \, \, \theta_i \in [ -r ,  r ]   \},
\end{equation}
as the hypercube of parameter space centred around the point $\vec{0}$, and 
\begin{equation}\label{eq:uniformdist}
    \uni(r) := \text{Unif}[\vol(r)]
\end{equation}
as a uniform distribution over the hypercube $\vol(r)$. It can be shown that for any randomly chosen $\vec{\theta} \sim  \uni(r)$ the error induced by the small angle truncation strategy can be kept small. Combining this with the observation that small angle truncation keeps the number of paths tractable, one obtains the following guarantee on the efficiency of small angle truncation Majorana Propagation. 

\setcounter{theorem}{0}
\begin{proposition}[Time complexity of small-angle Majorana Propagation]\label{thm:uncorrfull}
Consider an expectation function $f(\vec{\theta}) = \Tr[ U(\vec{\theta}) \varrho U(\vec{\theta})^\dagger H]$ with $\varrho$ the initial state, an observable $H= \sum_{\vec{b}} a_{\vec{b}} M_{\vec{b}}$, where at most  $N \in \mathcal{O}(\poly(n))$ coefficients $a_j$ are non zero, with $\norm{\vec{a}}_1 \in \OC(1)$, and a parametrized circuit of the form of Eq.~\eqref{eq:circuit}. If the expectation values $\Tr[\varrho M_{\vec{b}}]$ of the initial state $\varrho$ with any Majorana operator $M_{\vec{b}}$ can be efficiently computed, then a runtime of 
\begin{equation}
    t \in \OC\left(N \cdot \nparams^{\log{\left(\frac{1}{\epsilon^2\delta}\right)}}\right)\;,
\end{equation}
suffices to classically simulate $f(\vec{\theta})$ up to an error $\epsilon$ with probability at least $1-\delta$ for random $\vec{\theta}\sim\uni(r)$ with 
\begin{equation}
    r \in \OC\left(  \frac{1}{\sqrt{\nparams}} \right) \, .
\end{equation}
Similarly, with a runtime of
\begin{equation}
    t \in \OC\left(N \cdot \nparams^{\log{\left(\frac{1}{\epsilon}\right)}}\right)\;
\end{equation}
we can classically simulate $f(\vec{\theta})$ up to an error $\epsilon$
for all $\vec{\theta}$ in the hypercube $\vol( r)$ with 
\begin{equation}
    r \in \OC\left(  \frac{1}{\nparams} \right) \, . 
\end{equation}
\end{proposition}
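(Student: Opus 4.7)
My plan is to reduce the proof to the analogous result for Pauli Propagation~\cite{lerch2024efficient}, exploiting the structural identity between the two evolution rules. First I would observe that Eq.~\eqref{eq:gate} exhibits exactly the same cosine-plus-sine branching structure as the Pauli rotation rule Eq.~\eqref{eq:PauliRot}: each Fermionic gate either leaves a monomial fixed (if it commutes with the generator) or splits it into a two-term sum, one branch carrying a factor of $\cos\theta_j$ and the other a factor of $\pm\sin\theta_j$. Consequently the Heisenberg-evolved operator admits a path-sum representation identical in form to the Pauli case, each path coefficient being a product of cosines and sines indexed by its branching history. Moreover, since every $M_{\vec{b}}$ is Hermitian, unitary and traceless for $\vec{b}\neq\vec{0}$, we have $|\Tr[\varrho M_{\vec{b}}]|\leq 1$, exactly as for Pauli strings.

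Second, I would define the small-angle truncated surrogate $\widetilde f_K$ as the restriction of the path sum to those paths containing at most $K$ sine branches. For the worst-case bound over $\vol(r)$ a path with $s$ sine branches contributes a coefficient of magnitude at most $r^s$, so the triangle inequality gives
\begin{equation}
|f(\vec\theta)-\widetilde f_K(\vec\theta)| \leq \|\vec a\|_1\sum_{s=K+1}^{L}\binom{L}{s}r^s \leq \|\vec a\|_1\left(\tfrac{eLr}{K+1}\right)^{K+1},
\end{equation}
after bounding the binomial tail by its leading term. Taking $r\in\OC(1/L)$ keeps $Lr$ a small constant, and $K\in\OC(\log(1/\epsilon))$ then drives the error below $\epsilon$.

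For the probabilistic statement with $\vec\theta\sim\uni(r)$ I would run a second-moment (Chebyshev) argument. Under the uniform law on $[-r,r]$ one has $\mathbb E[\sin\theta_j]=0$, $\mathbb E[\sin^2\theta_j]\leq r^2/3$ and $\mathbb E[\cos^2\theta_j]\leq 1$. Expanding $\mathbb E[(f-\widetilde f_K)^2]$ as a double sum over discarded paths, the independence of the $\theta_j$'s together with the odd parity of $\sin$ annihilates every cross term between paths whose sine/cosine labels disagree on any individual gate. The surviving diagonal terms contribute at most $(r^2/3)^s$ each, and summing the tail for $s>K$ is again dominated by its leading order, scaling like $(Lr^2)^{K+1}/(K+1)!$. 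Choosing $r\in\OC(1/\sqrt{L})$ and $K\in\OC(\log(1/(\epsilon^2\delta)))$, Chebyshev's inequality then delivers error at most $\epsilon$ with failure probability at most $\delta$.

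Finally I would bound the runtime. The number of paths retained by the truncation is at most $\sum_{s=0}^{K}\binom{L}{s}\leq L^K$ per initial monomial of $H$, and by hypothesis the final trace $\Tr[\varrho M_{\vec{b}}]$ is efficiently computable. Summing over the $N\in\OC(\poly(n))$ non-zero terms of $H$ yields total runtime $\OC(N\cdot L^K)$ with $K$ as specified in either regime. The main obstacle I anticipate is the variance calculation: paths sharing a prefix in the propagation tree are not stochastically independent, so one must index the expansion of $\mathbb E[(f-\widetilde f_K)^2]$ by distinct gate-label patterns rather than by unordered pairs of paths to apply the odd-moment cancellations correctly. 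Beyond this bookkeeping, the argument is a direct transcription of the Pauli proof, with Majorana monomial length playing no role in this particular analysis.
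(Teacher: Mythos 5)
Your proposal is correct and takes essentially the same route as the paper: the paper's proof is precisely the reduction you identify, namely substituting the Majorana branching rule of Eq.~\eqref{eq:gate} and the bound $|\Tr[\varrho M_{\vec{b}}]|\leq 1$ into the proof of Theorem~3 of Ref.~\cite{lerch2024efficient}. The additional material you supply (the path-sum tail bounds, the second-moment argument, and the path-counting runtime) is a faithful reconstruction of the internals of that cited theorem rather than a departure from the paper's argument.
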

\begin{proof}
The proof follows from substituting Eq.~\eqref{eq:gate} and $|\Tr[\varrho M_{\vec{b}}] | \leq 1$ into the proof of Theorem 3 in Ref.~\cite{lerch2024efficient}.
\end{proof}

\section{Length Truncation: Intuition}

In the previous section we detailed coefficient and small-angle truncation strategies which cut terms from the sum in Eq.~\eqref{eq:path_expectation} where the prefactor $c_{\vec{b}}(\theta)$ that are either small or, in the case of the latter, likely to be small. Here we detail an alternative truncation strategy that aims to cut terms from the sum in Eq.~\eqref{eq:path_expectation} where $\text{Tr}[\varrho M_{\vec{b}}]$ is likely small. 

The core intuition here is that in order for a term $\text{Tr}[\varrho M_{\vec{b}}]$, where $\varrho$ is a Fock state, to be non-zero all the Majorana operators in the monomial $M_{\vec{b}}$ are paired. That is, $\text{Tr}[\varrho M_
{\vec{b}}]= 0$ if and only if for every even Majorana operator $m_{2i}$ in the monomial $M_{\vec{b}}$, the Majorana operator $m_{2i + 1}$ is also in the monomial $M_{\vec{b}}$ (this can be seen immediately from the definition of the Majorana operator). Next we observe that in general the higher the length $w$ of the Majorana monomial (for $w < N$) the lower the probability that the monomial is paired. 

This motivates a \textit{length truncation} strategy. At each step, after each gate is applied, all terms corresponding to Majorana monomials with length above a certain threshold are to be discarded. In typical circuits, the method is expected to result in an accurate computation since high-length monomials truncated along the way would most likely keep increasing in length along the way, therefore sharply decreasing their relative importance for the quantity under evaluation. This idea is analyzed with a greater level of detail and precision in the next two subsections.

\subsection{The contribution of typical high length Majorana monomials are exponentially suppressed}\label{app:exp_suppression}

In this subsection we substantiate the claim that the higher length Majorana monomials $M_{\vec{b}}$, that arise during a propagation, are more likely to annihilate when we compute their overlap with the initial Fock state. 
Intuitively, this can be seen by supposing that, due to the circuit’s complexity, all high-length $w$-monomials carry a similar length. There are $\binom{2N}{w}$ such monomials. The number of these that are paired is $\binom{N}{w/2}$. The probability $\probContribute{w} $ that any randomly chosen length $w$ Monomial $M_{\vec{b}}$ is paired (and therefore $\Tr[\varrho M_{\vec{b}} ] \neq 0$) is given by 
\begin{equation}\label{eq:fullypaired}
    \probContribute{w} = \frac{\binom{N}{w/2}}{\binom{2N}{w}} \, .
\end{equation}
Figure~\ref{fig:analytics} depicts the pairing probability as a function of $w$ for $N=100$. It is visually clear that only $w \approx 0$ and $w \approx 2N$ contribute significantly, all other terms contributing many orders of magnitude less. This observation is formalized by the following theorem where we show that the contribution of any random chosen Monomial with both $w \in \Theta(N)$ and $(2N - w) \in \Theta(N)$ is exponentially in $N$ suppressed. Conversely, the contribution of terms with $w \in \mathcal{O}(1)$ are only polynomially suppressed. 

\setcounter{theorem}{0}
\begin{theorem}[The contribution of high length Majoranas are exponentially suppressed]\label{thm:suppression-app}
The probability $\probContribute{w}$ that any randomly chosen $N$-mode Majorana monomonial with length $w$, where both $w \in \Theta(N)$ and $(2N - w) \in \Theta(N)$, has a non-zero expectation in any given Fock state is exponentially suppressed in $N$, that is
\begin{equation}
    \probContribute{w} \in \mathcal{O}(r^{-N})
\end{equation}
where $r > 1$. Furthermore, the minimal probability is obtained for $w = N$ with 
\begin{equation}
        \min_w \probContribute{w} = \probContribute{N} \in \mathcal{O}(2^{-N}) \, . 
\end{equation}
Conversely, if $w \in \OC(1)$ we have that
\begin{equation}
    \probContribute{w} \in \Omega\left(\frac{1}{\text{poly}(N)}\right) \, . 
\end{equation}

\end{theorem}

\medskip 

\begin{proof}
    As argued above, only paired Majorana modes have a non-vanishing expectation in any Fock state and this occurs with the probability $\probContribute{w}$ stated in Eq.~\eqref{eq:fullypaired}. We will start by showing that for $w = c N$ 
    \begin{equation}
    \probContribute{w} \in \mathcal{O}\left(r^{-N}\right)
    \end{equation}
    for some $r > 1$. 
    To do so we employ the following inequality 
    \begin{align}\label{eq:binom_ineq}
    \left(\frac{m}{k}\right)^k \leq \binom{m}{k} \leq  \left(\frac{em}{k}\right)^k \,,
    \end{align}
    to obtain 
    \begin{equation}\label{eq:upperboundonPw}
      \probContribute{w} \leq \left(\frac{e w}{2 N}\right)^{w/2}   \, . 
    \end{equation}
    Thus if we set $w/2 = c N$ ($0 < c < 1$) we have that 
    \begin{equation}
        \probContribute{ w } \leq p^{c N} 
    \end{equation}
    where $p = e c$. Hence if $c < 1/e$, corresponding to $w < 2 N / e  < 0.74 N$, we have that $0 < p < 1$ and thus $\probContribute{w} = r^{-N}$ with $r = p^{-c} > 1$.
    To quantify the scaling of $\probContribute{ w }$ in the region $0.74 N \leq w \leq N $ we note that $\probContribute{w}$ decreases monotonically with $w$ to its minimal value at $\probContribute{N}$ and hence 
    \begin{equation}
        \probContribute{ w } \leq r^{-N} 
    \end{equation}
    with $r = p^{-c}$ for some $0 < p < 1$ for any $w = N c$, $0 < c < 1$.
    Finally, since $\probContribute{w}$ is symmetric with respect to $w = N$, $\probContribute{w} = \probContribute{2N - w}$, the result applies to any $w = 2 N c$, as claimed. 
        
    To bound the minimal value of $\probContribute{w}$ at $w = N$ we can use the following Binomial coefficient bound 
    \begin{equation}
       \frac{4^N}{\sqrt{\pi ( N + 1/3)}} \leq  \binom{2N}{N} \leq \frac{4^N}{\sqrt{\pi ( N + 1/4)}}  \, .
    \end{equation}
    Thus we can upper bound the minimal value of $\probContribute{w}$ as 
    \begin{equation}
        \min_w \probContribute{w} = \probContribute{ N }  \leq 4^{- N/2} \sqrt{\frac{N + 1/3}{N/2 + 1/4}} \leq 4^{- N/2} \sqrt{\frac{3N}{N}} \in \mathcal{O}(2^{-N}) \, . 
    \end{equation}
    
    Finally, we can find a lower bound $\probContribute{w}$ for $w \in \OC(1)$ using Eq.~\eqref{eq:binom_ineq} to give
    \begin{equation}
        \probContribute{w} \geq \left(\frac{2N}{w}\right)^{w/2} \left(\frac{e 2 N}{w}\right)^{-w} =  \left(\frac{2 N}{w}\right)^{-w/2} e^{-w} =  \left(\frac{w}{2e^2N}\right)^{w/2} \geq \left(\frac{w}{15 N}\right)^{w/2} \, .
    \end{equation}
    Thus if $w \in \OC(1)$ we have that $\probContribute{w} \in \Omega\left(\frac{1}{\text{poly}(N)}\right)$.
\end{proof}

\subsection{The probability of backflow is suppressed}\label{app:backflow_suppresion}

In the previous subsection we saw that only very low or very high length monomials contribute substantially to Fock state expectation values. Two points must now be addressed:
\begin{itemize}
    \item[1.] The fact that it is safe to disregard high-length $w$-monomials in the final observable does not imply that it is safe to disregard high-length $w$-monomials at intermediate steps; high-length $w$-monomials at step $T$ may still evolve towards low-length terms in $O_D$, so their removal at step $T$ may result in large errors.
    \item[2.] High-length terms could evolve towards $w = 2N$, where their contribution matters, so truncating high-length elements early in the simulation may result in large errors.
\end{itemize}

As we shall now see, the answer to these concerns is that a typical dynamics leads to operators clustering around $w = N$, and the probability for their length to increase all the way to $w \approx2N$ or to decrease back to $w \approx 0$ is negligible.

\begin{theorem}[Suppression of backflow]\label{thm:backflow-app}
Consider the propagation of a length $w$ Majorana monomial $M$ under a length $k$ generator of rotation $M'$ as per Eq.~\eqref{eq:gate}. We assume that $[M', M] \neq 0$ such that the propagation causes branching and we denote the length of the new Majorana monomial, $M' M$, as $w'$. Let $p_+$ and $p_-$ denote the probabilities that the propagation increases ($w' > w$) and decreases ($w < w'$) operator length. The following statements hold:
\begin{enumerate}
    \item Any rotation under a $k=2$ length generator $M$ leaves the initial length $w$ unchanged,  $p_+ = p_- = 0$. 

    \item Consider a uniformly randomly chosen $N$-mode Majorana monomonial of length $w$ and a randomly chosen generator of length $k=4$. The ratio of the probabilities that propagation decreases versus increases operator length is: 
\begin{equation}
     R := \frac{p_-}{p_+}   = \frac{(w - 1)(w -2) }{(2N - 1 - w) (2N - 2 - w)} \, .
\end{equation}
Thus for any $w \ll N$ the backflow is suppressed with the suppression quadratic in $N$ for  $w \in \mathcal{O}(1)$. 

    \item For any $w \in \OC(1)$ and $k \in \OC(1)$ with $2 \leq k \leq w$ such that  $w k < N$ we have that 
    \begin{equation}
        R \in \OC(1/N^2) \, .
    \end{equation}
\end{enumerate}
\end{theorem}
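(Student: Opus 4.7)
The plan is to work directly from the two key formulas already established in the main text: the length update rule $w' = w + k - 2s$ (Eq.~\eqref{eq:output_length}), where $s$ is the overlap between the two monomials, together with the explicit combinatorial expression for the overlap probability $P(N,w,k,s)$ (Eq.~\eqref{eq:overlap_prob_main}). Since branching requires anti-commutation, hence odd $s$, both $P_+$ and $P_-$ are supported on odd $s$ with $s < k/2$ and $s > k/2$ respectively. Part 1 is then immediate: under a $k=2$ generator the only odd value is $s=1$, which gives $w' = w$ and hence $P_+ = P_- = 0$.

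For part 2 I would restrict to $k=4$, where only $s \in \{1,3\}$ contribute, giving $w' = w + 2$ and $w' = w - 2$ respectively. Forming the ratio
\begin{equation}
R = \frac{P(N,w,4,3)}{P(N,w,4,1)} = \frac{\binom{4}{3}\binom{2N-4}{w-3}}{\binom{4}{1}\binom{2N-4}{w-1}},
\end{equation}
the prefactors $\binom{4}{1} = \binom{4}{3} = 4$ cancel, and expanding the remaining binomials as factorials collapses the ratio to $(w-1)(w-2)/[(2N-w-1)(2N-w-2)]$, which is manifestly $\OC(1/N^2)$ for $w \in \OC(1)$.

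For part 3, the crux is the two-step ratio computed directly from Eq.~\eqref{eq:overlap_prob_main},
\begin{equation}
\frac{P(N,w,k,s+2)}{P(N,w,k,s)} = \frac{(k-s)(k-s-1)(w-s)(w-s-1)}{(s+1)(s+2)(2N-k-w+s+1)(2N-k-w+s+2)},
\end{equation}
whose numerator is $\OC(1)$ under $wk < N$ while the denominator contributes a factor $\Theta(N^2)$. For any even $k \geq 4$, the smallest odd value $s_\star$ exceeding $k/2$ satisfies $s_\star \geq 3$, while $s = 1$ is always strictly less than $k/2$, so iterating the above ratio bound at least once yields $P(N,w,k,s_\star)/P(N,w,k,1) \in \OC(1/N^2)$. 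Bounding $P_-$ above by its $\OC(1)$ number of summands times the maximum term $P(N,w,k,s_\star)$, and $P_+$ below by $P(N,w,k,1)$, then gives $R \in \OC(1/N^2)$. The main obstacle is the combinatorial bookkeeping around $s = k/2$: when $k/2$ is itself odd (e.g.~$k=6$), there is an anti-commuting branch with $w' = w$ that contributes to neither $P_+$ nor $P_-$ and must be carefully excluded from the enumeration; fortunately this only pushes $s_\star$ higher and reinforces the bound.
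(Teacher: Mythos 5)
Your proof is correct and follows essentially the same route as the paper's: parts 1 and 2 are identical, and part 3 rests on the same observation that $P(N,w,k,s)$ decreases in $s$ under $wk<N$, so the dominant backflow term sits just above $k/2$ while $P_+$ is anchored at $s=1$. The only (harmless) difference is in how the two sums are compared: you use the two-step odd-$s$ ratio and bound $P_-$ by its $\OC(1)$ term count times its largest term, whereas the paper establishes one-step monotonicity via a derivative argument on the log-ratio and exploits the equal number of summands in $P_-$ and $P_+$ to bound $R \leq P_{\rm max}/P_{\rm min}$.
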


\begin{proof}

\textit{Part 1)} Let us denote the number of overlapping Majorana operators between $M$ and $M'$ as $s$. The commutator $[ M, M' ] \neq 0$, i.e., the propagation only causes branching, if and only if the number of Majorana operators common to both monomials is odd. Thus in what follows we can assume $s$ is odd. 

Next, since even repetitions of Majorana operators in a monomial cancel out, the length of the monomial appearing in the branching term, $M' M$, must be smaller than the sum of each monomial’s length. More precisely, the length $w'$ of the term $M' M$ is $w' = k + w - 2s$. We can thus see that if $s < k / 2$, the branching process increases Majorana length, $w' > w$, while $w' < w$ if $s > k / 2$ and the length is left unchanged, $w' = w$ if $s = k/2$. 
We can therefore immediately see that if $k = 2$ then $s = k/2 = 1$ and propagation does not increase monomial length (proving the first claim in Theorem~\ref{thm:backflow-app}). \\ 

\textit{Part 2)} 
Next, as before let us consider the `random Majorana' limit where we assume that all Majorana monomials of any given length are equally likely. In this case, the probability that any two randomly chosen Majorana monomials of lengths $w$ and $k$ respectively overlap on precisely $s$ modes is given by 
\begin{equation}\label{eq:overlapprob}
   P(N, w, k, s) =  \frac{\binom{k}{s} \binom{2N-k}{w -s}}{\binom{2N}{w}} \, .
\end{equation}
To see this we can start by assuming that the first Majorana monomial of \(k\) elements is fixed. There is no loss of generality in doing so since we will count the selections for the second string relative to this fixed set. The second Majorana monomial is chosen by selecting \(w\) elements out of \(2N\). Hence, the total number of possible selections is $\binom{2N}{w}$. To have exactly \(s\) overlapping elements between the two strings, the following must occur:
\begin{enumerate}
    \item Choose \(s\) overlapping elements: From the fixed string of \(k\) elements, choose \(s\) elements. This can be done in $\binom{k}{s}$ ways.
    \item Choose the remaining \(w-s\) non-overlapping elements: The remaining \(w-s\) elements of the second string must come from the \(2N-k\) elements that are \emph{not} in the fixed string. This selection can be made in $\binom{2N-k}{w-s}$ ways.
\end{enumerate}
Thus, the total number of favorable outcomes is:
\begin{equation}
    \binom{k}{s} \binom{2N-k}{w-s} . 
\end{equation}
The probability that the two strings overlap in exactly \(s\) elements is the ratio of the number of favorable outcomes to the total number of outcomes:
\begin{equation}
    P(N, w, k, s) = \frac{\binom{k}{s} \binom{2N-k}{w-s}}{\binom{2N}{w}}.
\end{equation}
This proves Eq.~\eqref{eq:overlapprob}.

Next, we focus on the commonly encountered scenario where $k=4$. In this case, we can compute an exact compact expression for the ratio between the probability that any branching event decreases Majorana length as compared to increases it. Namely, as if $s < k / 2$, the branching process increases Majorana length and decreases it if $s > k / 2$, in this case this ratio is given by 
\begin{equation}
    R := \frac{p_-}{p_+} = \frac{P(N, w, 4, 3)}{P(N, w, 4, 1)} \, .
\end{equation}
If we substitute Eq.~\eqref{eq:overlapprob} into $R$ we find that 
\begin{equation}
    R = \frac{P(N, w, 4, 3)}{P(N, w, 4, 1)} = \frac{\binom{4}{3} \binom{2N-4}{w -3}}{\binom{4}{1} \binom{2N-4}{w -1}} = \frac{ \binom{2N-4}{w -3}}{\binom{2N-4}{w -1}} \, ,
\end{equation}
where we use that as binomial coefficients are symmetric we have $\binom{4}{3}= \binom{4}{1}$.
Using the explicit expression for the binomial coefficients this simplifies to 
    \begin{equation}
    R = \frac{(w - 1)! (2N - 3 - w)! }{(w - 3)! (2N - 1 - w)! }  = \frac{(w - 1)(w -2) }{(2N - 1 - w) (2N - 2 - w)} \, .
\end{equation}
Thus for $w \in \OC(1)$ we see that $  R $ is quadratically suppressed in $N$. \\

\textit{Part 3)} 
A similar suppression is obtained whenever $k$ and $w$ are much smaller than $2N$. In this case we have that the probability that the length decreases is the sum of $P(N, k, w, s)$ for all odd values of $s$ greater than $k/2$ (up to a maximum value of $k$), \begin{equation}\label{eq:P-}
    p_- = \sum_{k/2 < 2j+1 < k} P(N, w, k, 2j+1) \, .
\end{equation}
Similarly, the probability that the length increases is 
the sum of $P(N, k, w, s)$ for all odd values of $s$ less than $k/2$, i.e., 
\begin{equation}\label{eq:P+}
    p_- = \sum_{0 < 2j+1 < k/2} P(N, w, k, 2j+1) \, .
\end{equation}
There are the same number of terms in the sums in Eq.~\eqref{eq:P-} and Eq.~\eqref{eq:P+} and so we can bound $R$
 as 
 \begin{equation}\label{eq:Rbound-app}
    R \leq \frac{P_{\rm max}}{P_{\rm min}} \, 
\end{equation}
where 
\begin{equation}\label{eq:pmax_pmin}
\begin{aligned}
        &P_{\rm max} := \max_{j \in \mathbb{Z}^+} \{ P(N, w, k, 2j+1) \}_{k/2 < 2j+1 < k} \\ 
        &P_{\rm min} := \min_{j \in \mathbb{Z}^+} \{ P(N, w, k, 2j+1) \}_{0 < 2j+1 < k/2} \, .
\end{aligned}
\end{equation}

In the limit that $2N$ is much greater than $w$ and $k$ (and $w \geq k$) the probability $P(N, k, w, s)$ decreases monotonically in $s$. This can be seen by noting that in this limit the ratio of $P(N, k, w, s+1)$ and $P(N, k, w, s)$ is less than 1 for all values of $0 < s < k$. Namely, using the identities
\begin{equation}
    \frac{\binom{k}{s+1}}{\binom{k}{s}} = \frac{k-s}{s+1} \quad \text{and} \quad \frac{\binom{2N-k}{w-s-1}}{\binom{2N-k}{w-s}} = \frac{w-s}{2N-k-w+s+1},
\end{equation}
we have:
\begin{equation}
\begin{aligned}\label{eq:ratio}
       \frac{P(N, k, w, s+1)}{P(N, k, w, s)} =  \frac{(k-s)(w-s)}{(s+1)(2N-k-w+s+1)} 
\end{aligned}
\end{equation}
which is clearly less than 1 for all $0 < s < k$ in the limit that $w$ and $k$ are much less than $N$. 

More precisely, Eq.~\eqref{eq:ratio} is less than 1 (and $P(N, k, w, s)$ decreases monotonically in $s$) in the limit that $w k < N$ (for $k\geq 2$ and $w \geq 2$). To see this we first note that Eq.~\eqref{eq:ratio} takes the maximal value when $s=0$. This can be seen by taking the log of Eq.~\eqref{eq:ratio} to give 
\begin{equation}
    F(s):= \ln(k-s)+\ln(w-s)-\ln(s+1)-\ln\Bigl(2N-k-w+s+1\Bigr).
\end{equation}
Differentiating with respect to $s$ then yields
\begin{equation}\label{eq:derivative}
    F'(s) = -\frac{1}{k-s} -\frac{1}{w-s} -\frac{1}{s+1} -\frac{1}{2N-k-w+s+1}.
\end{equation}
We can assume that $s < k$ (if $s = k$ then the ratio in Eq.~\eqref{eq:ratio} evaluates to 0 and so is less than 1 as required) and $s < w$ (as we already assumed $k \leq w$). We now further assume that $2N - k - w + s +1 > 0$ for all $s$, which holds if 
\begin{equation}
    w + k < 2N + 1.
\end{equation}
Thus each of the denominators in Eq.~\eqref{eq:derivative} are positive, making each term negative. Thus, \(F'(s) < 0\) and so the ratio in  Eq.~\eqref{eq:ratio} is strictly decreasing in \(s\). Because \(R(s)\) is strictly decreasing, its maximum is attained when \(s\) is smallest (assuming \(s\ge0\)), i.e., at \(s=0\). We then require that 
\begin{equation}
\begin{aligned}
&\frac{k\,w}{2N-k-w+1} < 1.
\end{aligned}
\end{equation}
This holds if 
\begin{equation}\label{eq:intermediarybound}
     k w+k+w < 2N+1 \, .
\end{equation}
A looser but more compact bound can be obtained by noting that for $k \geq 2$ and $w \geq 2$ we have $k + w \leq k w $ and so the bound in Eq.~\eqref{eq:intermediarybound} holds if 
\begin{equation}
     w k  <  N \, .
\end{equation}
We have thus established that $P(N, k, w, s)$ is monotonically decreasing if $w k < N$. 

Returning to our original goal of bounding the ratio of forwards and backflow via Eq.~\eqref{eq:pmax_pmin}, we can use the fact that $P(N, k, w, s)$ is monotonically decreasing to conclude that
\begin{align}
    &P_{\rm max} =  P(N, w, k, s_{\rm max}) \\
    &P_{\rm min} =  P(N, w, k, s_{\rm min})
\end{align}
where $s_{\rm max}$ ($s_{\rm min}$) is the odd number that is closest to $k/2$ while strictly greater than (less than) $k/2$. For example, if $k$ is a multiple of $4$, we have $s_{\rm max} = k/2 + 1$ and $s_{\rm min} = k/2 - 1$. In complete generality we have that 
\begin{equation}
\begin{cases} s_{\rm max} = k/2 +1,  s_{\rm min} = k/2 -1 & k = 4l \\ 
s_{\rm max} = k/2 +1/2,  s_{\rm min} = k/2 -3/2 & k = 4l+1 \\
s_{\rm max} = k/2 +2,  s_{\rm min} = k/2 -2 & k = 4l + 2 \\
s_{\rm max} = k/2 +3/2,  s_{\rm min} = k/2 -1/2 & k = 4l  + 3 \, ,
\end{cases}
\end{equation}
where $l$ is some positive integer.

We can use these expressions in Eq.~\eqref{eq:Rbound-app} to bound $R$. Namely, for the case where $k$ is a multiple of $4$ we have 
 \begin{equation}
    R \leq \frac{P(N, k, w, k/2 +1 )}{P(N, k, w, k/2-1)} = \frac{(w - k/2 + 1)(w -k/2) }{(2N - k/2 - w +1) (2N - k/2 - w)} \, .
\end{equation}
The ratio can be similarly bounded in each of the other three cases above. The exact expression varies from case to case but in all cases we find a suppression that is at least quadratic in $N$ for constant $k$ and $w$. 
\end{proof}

\section{Length truncation: error analysis}\label{app:endtoendanalysis}

In this section we will present the proofs of Theorem~\ref{thm:main-text-upper-bound-mse}, Theorem~\ref{thm:resources} and Theorem~\ref{thm:poly-time-algo-for-squared-error-maintext} in the main text. That is, we provide a formal end-to-end error analysis of Majorana propagation applied to a random ADAPT-VQE inspired circuit model. 

\subsection{Preliminaries}

In this Appendix the calculations become more involved with the notation correspondingly becoming more cumbersome and so we have chosen to change notation conventions in places. For example, Majorana monomials that are used as generators in rotations we will now denote with $G$. In this subsection we will detail the definitions we will use in this appendix.

\begin{definition}
    We define $\Gamma_w$ to be the set of Majoranas of length $w$, and $\Gamma$ to be the set of all Majoranas.
\end{definition}

\begin{definition}[Random circuit model]\label{def:circuitmodel}
    We describe the random circuits we will analyse. We work over an arbitrary number $\nModes$ of fermions. We start with an observable $H =O_0$ sampled from a homogeneous distribution (defined below in definition~\ref{def:homogeneous-distribution}, when more notations are introduced). We then construct $\circuitDepth$ layers (indexing them in decreasing order in the Schrödinger picture, towards simplicity in the Heisenberg picture). Layer $j \in \{0, \ldots, \circuitDepth-1\}$ is composed of a uniformly random generator $\majoranaGate{j} \in \Gamma_k$, with $k = 4$, and some $\theta_i \in [0, 2\pi[$ uniformly at random, i.e:
    \begin{equation}\label{eq:singleupdate}
        O_{j+1} = \exp\left(i \frac{\theta}{2}\majoranaGate{j}\right) O_j \exp\left(-i \frac{\theta}{2}\majoranaGate{j}\right).
    \end{equation}

    Every random variable is sampled independently. We finally use as initial state an arbitrary Fock state $\ket{\fockState} := \ket{n_1 n_2\ldots n_N}$, measuring $O_{\circuitDepth}$. This whole procedure samples a random observable $O_0$ and circuit $U$ and evaluates: 
    \[f_{\circuitDepth}(U, O_0) = \Tr(U \ket{\fockState}\bra{\fockState} U^\dagger O_0) = \Tr(\ket{\fockState} \bra{\fockState} O_{\circuitDepth}).\]

    Note that we now write $f_L(\vec{\theta}) = f_{\circuitDepth}(U, O_0)$, making the parameters explicit.
\end{definition}

Our goal is to analyze the quality of the Majorana propagation algorithm on this circuit, when we only keep Majoranas of length $w \leq w_0$, for some even $w_0$.

\begin{definition}[Majorana-length partial 2-norm]\label{def:majorana-length-partial-norm}
    Let $P$ be an operator. We can always express it in the Majorana basis, $P = \sum_{\majorana \in \Gamma} c_\majorana \majorana$. We define its Majorana length-$w$ partial 2-norm to be:
    \[\partialNorm{P}{w}^2 = \sum_{\majorana \in \Gamma_w} |c_\majorana|^2.\]

    Note that this is such that $\sum_{w = 0}^{2\nModes} \partialNorm{P}{w}^2 = \|P\|^2$, where $\|P\|^2 = \Tr(P^{\dagger} P) = \sum_\majorana |c_\majorana|^2$ is the Schatten 2-norm.
\end{definition}

\begin{definition}
    We define the coefficients $c_{i, \majorana}$ to be the decomposition of our propagated observables in the Majorana basis:
    \[O_i = \sum_{\majorana \in \Gamma} c_{i,\majorana}\majorana.\]

    Since $O_i$ is Hermitian, we have $c_{i,\majorana} \in \mathbb{R}$. 
\end{definition}

\begin{definition}\label{def:homogeneous-distribution}
    We say that an observable $O_i$ has a homogeneous distribution if its coefficient in the Majorana basis are such that, for any $\majorana, \majoranaP \in \Gamma$:
    \begin{itemize}
        \item (Unbiased) $\exval(c_{i\majorana}) = 0$;
        \item (Orthogonal) If $\majorana \neq \majoranaP$, then $\Cov(c_{i\majorana}, c_{i\majoranaP}) = \exval(c_{i\majorana}c_{i\majoranaP}) = 0$;
        \item (Length-homogeneous) $\Var(c_{i\majorana}) = \exval(c_{i\majorana}^2)$ only depends on $|\majorana|$.
    \end{itemize}

    Note that we assumed above in definition~\ref{def:circuitmodel} that $O_0$ respects this definition.
\end{definition}

\begin{definition}\label{def:truncated-observable}
    Let $w_0 \in \{0, 1, \ldots, 2\nModes\}$. We define $O_i^{(w_0)}$ to be the observable after the $i$th gate in Majorana propagation, where we truncate all Majoranas of length strictly greater than $w_0$. We also define $c_{i, \majorana}^{(w_0)}$ to be its coefficients in the Majorana basis and $f_{\circuitDepth}^{(w_0)}(U, O_0)$ the function computed by this procedure:
    \[f_{\circuitDepth}^{(w_0)}(U, O_0) = \Tr(\ket{\fockState} \bra{\fockState} O_{\circuitDepth}^{(w_0)}).\]
    
    Note that $O_i^{(2\nModes)} = O_i$. Hence, all the following propositions also apply to untruncated Majorana propagation.
\end{definition}

\begin{remark}
    Let $2 \leq o \leq 2\nModes- 2$ be an arbitrary even integer. Consider a distribution of initial observables where we let $O_0$ to be an element picked uniformly  at random from $\Gamma_o$. This is not a homogeneous distribution since $\exval(c_{0\majorana}) \neq 0$ for any $\majorana \in \Gamma_o$.

    Now, note that we can consider a different distribution, where $O_0$ is sampled uniformly at random from $\Gamma_o$, but we then also give it a $\pm$ sign uniformly at random. This distribution is homogeneous and hence all our result applies.

    However, note the two distributions give exactly the same mean-squared error $\exval((f_{\circuitDepth}(U, O_0) - f_{\circuitDepth}^{(w_0)}(U, O_0)^2)$ since everything is linear and the $\pm$ sign gets killed by the square. Hence, all our results below also apply to this non-homogeneous distribution.

\end{remark}

\subsection{A classically efficient algorithm to compute the mean square simulation error}

In this subsection we derive a recursive expression for the averaged error that computes the averaged error at time $i+1$ based on the averaged error at time $i$. Applying this expression iteratively from $i = 0$ to $i= \circuitDepth$ thus allows us to efficiently compute the averaged error classically. \\

Central to all the analysis that follows will be the following lemma that quantifies how the coefficients are updated after each layer of the random circuit, i.e., after each random Majorana rotation,  as specified in Eq.~\eqref{eq:singleupdate}. Before stating this lemma, we require a crucial definition.

\begin{definition}[Phaseless product]
    We define the phaseless product operation $\phaselessprod: \Gamma \times \Gamma \mapsto \Gamma$ such that, for any $M_{b_1}, M_{b_2} \in \Gamma$, there exists some $c \in \{\pm 1, \pm i\}$ such that:
    \[M_{b_1} \phaselessprod M_{b_2} = c M_{b_1}\cdot M_{b_2} \in  \Gamma.\]

    This is just a multiplication, where we get rid of any coefficient. This is moreover well-defined, since we simply have $M_{b_1} \phaselessprod M_{b_2} = M_{b_1 + b_2}$.
\end{definition}

With this new definition, we are now ready to state our lemma.

\begin{lemma}[Propagation rule]\label{lma:c-propagation-rule}
    Let $\majorana \in \Gamma$, $w_0$ and $i$ be arbitrary. If $|\majorana| > w_0$, then $c_{i+1, \majorana}^{(w_0)} = 0$. Otherwise:
    \[\begin{split}
        c_{i+1, \majorana}^{(w_0)} = &\ I([\majorana, \majoranaGate{i}] = 0) c_{i\majorana}^{(w_0)} + I(\{\majorana, \majoranaGate{i}\} = 0) c_{i\majorana}^{(w_0)} \cos(\theta_i) \\& + \sum_{\majoranaOther \in \Gamma} I(\{\majoranaOther, \majoranaGate{i}\} = 0 \land \majoranaGate{i} \phaselessprod \majoranaOther = \majorana) \sin(\theta_i) c_{i\majoranaOther}^{(w_0)} \cdot i \cdot s_{\majoranaGate{i}\majoranaOther},
    \end{split}\]
    where $s_{\majoranaGateNoParam \majoranaOther}$ is such that $\majoranaGateNoParam \cdot \majoranaOther = s_{\majoranaGateNoParam \majoranaOther} \majoranaGateNoParam \phaselessprod \majoranaOther$. Note that $s_{\majoranaGateNoParam \majoranaOther} \in \{-1, 1, -i, i\}$ in the general case, but $s_{\majoranaGateNoParam \majoranaOther} \in \{-i, i\}$ when $\majoranaGateNoParam$ and $\majoranaOther$ anticommute (to preserve the Hermitianness of $\exp\left(i \frac{\theta}{2}\majoranaGateNoParam\right) \majoranaOther \exp\left(-i \frac{\theta}{2}\majoranaGateNoParam\right)$).
\end{lemma}
\begin{proof}
    We derive this formula by looking at the Majorana propagation algorithm, with truncation $w_0$. If $|\majorana| > w_0$ then, by construction, we must have $c_{i+1, \majorana}^{(w_0)} = 0$. Otherwise, we can get contribution from different types of paths.

    If $\majorana$ and $\majoranaGate{i}$ commute, then we get full contribution of $c_{i\majorana}^{(w_0)}$. If they anticommute, then this contribution is scaled by $\cos(\theta_i)$. We moreover get contribution from all the branching paths, i.e.~all other $\majoranaOther \in \Gamma$ that anticommute with $\majoranaGate{i}$ and that get mapped to $\majorana$ after application of the gate. This is moreover scaled by a $is_{\majoranaGate{i}\majoranaOther}\sin(\theta_i)$ factor.
\end{proof}

\subsubsection{Orthogonality and mean squared values of path coefficients}

Analogously to the case in other path propagation methods~\cite{angrisani2024classically} we will use the orthogonality of the paths (i.e, the coefficients corresponding to different monomials) to substantially simplify our error analysis. However, we are working with a different random circuit model we need to re-derive these orthogonality properties explicitly here. These properties are captured by following Lemma~\ref{lma:c-moments}. This lemma in fact goes beyond proving orthogonality and also establishes several key properties of the coefficients that will become useful later. 

\begin{lemma}\label{lma:c-moments}
    Let $\majorana, \majoranaP \in \Gamma$, $w = |\majorana|$ be the length of $\majorana$, $w_0$ be the truncation value, and keep the time step $i$ arbitrary. Then, $O_i^{(w_0)}$ has a homogeneous distribution (definition~\ref{def:homogeneous-distribution}). More precisely, the coefficients have the following moment properties:
    \begin{enumerate}
        \item The coefficients are unbiased: $\exval(c_{i\majorana}^{(w_0)}) = 0$ \, .
        \item The coefficients are orthogonal: $\exval(c_{i\majorana}^{(w_0)}c_{i\majoranaP}^{(w_0)}) = 0$ for $\majorana \neq \majoranaP$. 
        \item The mean squared coefficient of a monomial $\majorana$ depends on its length $w = |\majorana|$ but not the particular $\majorana$: 
        \begin{equation}
            \exval(c_{i\majorana}^{(w_0)}c_{i\majorana}^{(w_0)})  = \frac{\exval(\partialNorm{O_i^{(w_0)}}{w}^2)}{\binom{2\nModes}{w}} \, ,
        \end{equation}
        where $\partialNorm{P}{w}$ is the partial 2-norm as defined in definition \ref{def:majorana-length-partial-norm}.
        \item The correlation between truncated and untruncated coefficients is determined by the correlations between truncated coefficient: $\exval(c_{i\majorana} c_{i\majoranaP}^{(w_0)}) = \exval(c_{i\majorana}^{(w_0)}c_{i\majoranaP}^{(w_0)})$. 
    \end{enumerate}
\end{lemma}
\begin{proof}
    We prove each case separately, by induction on $i$.
    \begin{enumerate}
        \item We prove $\exval(c_{i\majorana}^{(w_0)}) = 0$ first. The initial case is trivial if $|\majorana| > w_0$. If $|\majorana| \leq w_0$, then $c_{0\majorana}^{(w_0)} = c_{0\majorana}$ and hence the initial case holds by the homogeneous distribution of $O_0$ (definition~\ref{def:homogeneous-distribution}). The inductive case then trivially holds thanks to lemma~\ref{lma:c-propagation-rule}.
        \item \label{enum:proof-c-uncorrelated}  Suppose $\majorana \neq \majoranaP$. We aim to show that $\exval(c_{i\majorana}^{(w_0)}c_{i\majoranaP}^{(w_0)}) = 0$. If $|\majorana| > w_0$ or $|\majoranaP| > w_0$, then this is trivial. We prove the case $|\majorana| \leq w_0$ and $|\majoranaP| \leq w_0$ by induction on $i$. The initial case is again easy by definition of $O_0$. For the inductive case, directly simplifying most terms thanks to the fact $\exval_{\theta}(\cos(\theta)) = \exval_{\theta}(\sin(\theta)) = \exval_{\theta}(\sin(\theta)\cos(\theta)) = 0$, and writing $\exval_j = \exval_{\theta_j, \majoranaGate{j}}$ and $I(\majoranaOther, \majoranaGate{i} \mapsto \majorana) = I(\{\majoranaOther, \majoranaGate{i}\} = 0 \land \majoranaGate{i} \phaselessprod \majoranaOther = \majorana)$:
        \begin{equation}\label{eq:expected-c-c}
           \begin{split}
            & \exval_{1, \ldots, i-1,\majoranaGate{i},\theta_i}(c_{i+1, \majorana}^{(w_0)} c_{i+1, \majoranaP}^{(w_0)})\\
            =\ & \exval_{\majoranaGate{i}}(I([\majorana, \majoranaGate{i}] = 0)I([\majoranaP, \majoranaGate{i}] = 0)) \exval_{1, \ldots, i-1}(c_{i, \majorana}^{(w_0)} c_{i, \majoranaP}^{(w_0)})
            \\& + \exval_{\majoranaGate{i}}(I(\{\majorana, \majoranaGate{i}\} = 0)I(\{\majoranaP, \majoranaGate{i}\} = 0)) \exval_{1, \ldots, i-1}(c_{i, \majorana}^{(w_0)} c_{i, \majoranaP}^{(w_0)}) \exval_{\theta_i}(\cos^2(\theta_i)) 
            \\& + \exval_{\majoranaGate{i}}\left(\sum_{\majoranaOther, \majoranaOtherP \in \Gamma} I(\majoranaOther, \majoranaGate{i} \mapsto \majorana) I(\majoranaOtherP, \majoranaGate{i} \mapsto \majoranaP) \exval_{\theta_i}(\sin(\theta_i)^2) \exval_{1, \ldots, i-1}(c_{i\majoranaOther}^{(w_0)} c_{i\majoranaOtherP}^{(w_0)}) i^2 s_{\majoranaGate{i}\majoranaOtherP} s_{\majoranaGate{i}\majoranaOther}\right).
        \end{split} 
        \end{equation}

        Note that, in the last term, using the fact $\majorana \neq \majoranaP$:
        \[\begin{split}
            & I(\majoranaOther, \majoranaGate{i} \mapsto \majorana) I(\majoranaOtherP, \majoranaGate{i} \mapsto \majoranaP) = 1 \implies \majoranaOther \phaselessprod \majoranaGate{i} = \majorana \land \majoranaOtherP \phaselessprod \majoranaGate{i} = \majoranaP 
            \\ \iff & \majoranaOther \phaselessprod \majorana = \majoranaGate{i} = \majoranaOtherP \phaselessprod \majoranaP \implies \majoranaOther = \majoranaOtherP \phaselessprod \majorana \phaselessprod \majoranaP \neq \majoranaOtherP.
        \end{split}\]

        Hence, all non-zero terms are such that $\majoranaOther \neq \majoranaOtherP$. Now, we know by our inductive hypothesis that this necessarily means $\exval_{1, \ldots, i-1}(c_{i\majoranaOther}^{(w_0)} c_{i\majoranaOtherP}^{(w_0)}) = 0$. Similarly, since $\majorana \neq \majoranaP$, $\exval_{1, \ldots, i-1}(c_{i, \majorana}^{(w_0)} c_{i, \majoranaP}^{(w_0)}) = 0$. Overall, this gives us exactly that:
        \[\exval_{1, \ldots, i-1,\majoranaGate{i},\theta_i}(c_{i+1, \majorana}^{(w_0)} c_{i+1, \majoranaP}^{(w_0)}) = 0.\]
        
        \item \label{enum:proof-c-only-depends-on-w} We now want to show $\exval(c_{i\majorana}^{(w_0) 2}) = \exval(\partialNorm{O_i^{(w_0)}}{w}^2) / \binom{2\nModes}{w}$. To do that, we start by showing $\exval(c_{i\majorana}^{(w_0) 2})$ only depends on $|\majorana| = w$, but not on $\majorana$. We see the case $i= 0$ is easy, by definition of $O_0$. Moreover, the inductive case is very similar to the one we just did (point~\ref{enum:proof-c-uncorrelated}), using the facts $\exval_{\theta}(\cos(\theta)) = \exval_{\theta}(\sin(\theta)) = \exval_{\theta}(\sin(\theta)\cos(\theta)) = 0$, $I(E)^2 = I(E)$ for any condition $E$, and $\exval(c_{i\majoranaOther}^{(w_0)}c_{i\majoranaOtherP}^{(w_0)}) = 0$ for $\majoranaOther \neq \majoranaOtherP$ as we just proved, equation~\ref{eq:expected-c-c} becomes:
        \[\begin{split}
            \exval_{1, \ldots, i-1,\majoranaGate{i},\theta_i}(c_{i+1, \majorana}^{(w_0) 2} )
            =\ & \exval_{\majoranaGate{i}}(I([\majorana, \majoranaGate{i}] = 0)) \exval_{1, \ldots, i-1}(c_{i, \majorana}^{(w_0) 2})
            \\& + \exval_{\majoranaGate{i}}(I(\{\majorana, \majoranaGate{i}\} = 0)) \exval_{1, \ldots, i-1}(c_{i, \majorana}^{(w_0) 2}) \exval_{\theta_i}(\cos^2(\theta_i)) 
            \\& + \exval_{\majoranaGate{i}}\left(\sum_{\majoranaOther \in \Gamma} I(\majoranaOther, \majoranaGate{i} \mapsto \majorana) \exval_{\theta_i}(\sin(\theta_i)^2) \exval_{1, \ldots, i-1}(c_{i\majoranaOther}^{(w_0) 2}) i^2 s_{\majoranaGate{i}\majoranaOther}^2\right).
        \end{split}\]

        Now, $s_{\majoranaGate{i}, \majoranaOther} \in \{i, -i\}$ as mentioned in lemma~\ref{lma:c-propagation-rule}, so $s_{\majoranaGate{i}, \majoranaOther}^2 = -1$. This thus simplifies to:
        \begin{equation}\label{eq:expected-c-squared}
            \begin{split}
            \exval_{1, \ldots, i-1,\majoranaGate{i},\theta_i}(c_{i+1, \majorana}^{(w_0) 2} )
            =\ & \prob_{\majoranaGate{i}}([\majorana, \majoranaGate{i}] = 0) \exval_{1, \ldots, i-1}(c_{i, \majorana}^{(w_0) 2})
            \\& + \prob_{\majoranaGate{i}}(\{\majorana, \majoranaGate{i}\} = 0) \exval_{1, \ldots, i-1}(c_{i, \majorana}^{(w_0) 2}) \exval_{\theta_i}(\cos^2(\theta_i)) 
            \\& + \sum_{\majoranaOther \in \Gamma} \prob_{\majoranaGate{i}}(\majoranaOther, \majoranaGate{i} \mapsto \majorana) \exval_{\theta_i}(\sin(\theta_i)^2) \exval_{1, \ldots, i-1}(c_{i\majoranaOther}^{(w_0) 2}).
            \end{split}
        \end{equation}

        We know that, since $|\majoranaGate{i}| = 4$, then
        \[\prob_{\majoranaGate{i}}(\{\majorana, \majoranaGate{i}\} = 0)) = p_-(w) + p_+(w) = \frac{\binom{4}{1} \binom{2\nModes - 4}{w - 1} + \binom{4}{3} \binom{2\nModes - 4}{w - 3}}{\binom{2\nModes}{w}},\]
        which only depends on $w$ and not on $\majorana$. Similarly, by inductive hypothesis, $\exval_{1, \ldots, i-1}(c_{i\majoranaOther}^{(w_0) 2})$ only depends on $|\majoranaOther|$. The only question left is thus, for any given $u$, and $\majorana, \majoranaP$ such that $|\majorana| = |\majoranaP|$, whether the following holds:
        \[\sum_{\majoranaOther \in \Gamma: |\majoranaOther| = u} \prob_{\majoranaGateNoParam}(\majoranaOther, \majoranaGateNoParam \mapsto \majorana) = \sum_{\majoranaOther \in \Gamma: |\majoranaOther| = u} \prob_{\majoranaGateNoParam}(\majoranaOther, \majoranaGateNoParam \mapsto \majoranaP).\]

        To do that, we find a bijection between the two following sets:
        \[\begin{split}
            & \{(\majoranaOther, \majoranaGateNoParam) \in \Gamma_u \times \Gamma_4: \{\majoranaOther, \majoranaGateNoParam\} = 0 \land \majoranaGateNoParam \phaselessprod \majoranaOther = \majorana \} 
            \\ \longleftrightarrow\ &  \{(\majoranaOtherP, \majoranaGateNoParamP) \in \Gamma_u \times \Gamma_4: \{\majoranaOtherP, \majoranaGateNoParamP\} = 0 \land \majoranaGateNoParamP \phaselessprod \majoranaOtherP = \majoranaP\}.
        \end{split}\]

        Now, since $|\majorana| = |\majoranaP|$, we know that there is a permutation (of their bitstrings) such that $\phi(\majorana) = \majoranaP$. We thus consider the bijection $(\majoranaOtherP, \majoranaGateNoParamP) = (\phi(\majoranaOther), \phi(\majoranaGateNoParam))$. Note that this indeed a bijection, it does preserve the length of the monomials, it preserves anticommutation (since it only depends on the size of the intersection of the two monomials), and is such that $\majoranaGateNoParamP \phaselessprod \majoranaOtherP = \majoranaP$ since bitstrings are such that $\phi(x) + \phi(y) = \phi(x + y)$. This allows to conclude the inductive step: nothing depends specifically on $\majorana$, everything just depends on $|\majorana|$.
        
        Overall, we just proved that $\exval(c_{i\majorana}^{(w_0) 2})$ only depends on $|\majorana|$ by induction. Now, by definition of the partial 2-norm (definition~\ref{def:majorana-length-partial-norm}):
        \[\partialNorm{O_i^{(w_0)}}{w}^2 = \sum_{\majorana \in \Gamma_w} c_{i \majorana}^{(w_0)2}.\]

        This gives exactly our result, since there are $\binom{2\nModes}{w}$ terms in the sum, which are all equal on expectation.

        \item We finally prove $\exval(c_{i\majorana} c_{i\majoranaP}^{(w_0)}) = \exval(c_{i\majorana}^{(w_0)}c_{i\majoranaP}^{(w_0)})$. When $\majorana \neq \majoranaP$, both sides are zero, by an argument completely similar to point~\ref{enum:proof-c-uncorrelated}. Indeed, the case $|\majoranaP| > w_0$ is easy, so we can assume $|\majoranaP| \leq w_0$. The initial case is then easy: if $|\majorana| \leq w_0$ then $c_{0\majorana} = c_{0\majorana}^{(w_0)}$ and hence both sides are equal to $\exval(c_{i\majorana} c_{i\majoranaP}) = 0$ by the homogeneous distribution of $O_0$ (definition~\ref{def:homogeneous-distribution}); and if $|\majorana| > w_0$ then $c_{0\majorana}^{(w_0)} = 0$ and $\exval(c_{0\majorana} c_{0\majoranaP}^{(w_0)}) = \exval(c_{0\majorana} c_{0\majoranaP}) = 0$. The inductive case is the exact same as the one of point~\ref{enum:proof-c-uncorrelated}, since we have the same recursion rule.
        
        We thus suppose that $\majorana = \majoranaP$, and prove our result by induction on $i$. The initial case is again easy: if $|\majorana| > w_0$, then $c_{i\majorana}^{(w_0)} = 0$ with probability $1$, so both sides are equal to zero. Otherwise, $c_{i\majorana}^{(w_0)} = c_{i\majorana}$ with probability 1, so both sides are again equal.

        Let us now take a look at the inductive step. Again, if $|\majoranaP| > w_0$, then both sides are zero. We can thus suppose $|\majorana| = |\majoranaP| \leq w_0$. But then, in that case, just like before:
        \[\begin{split}
            \exval_{1, \ldots, i-1,\majoranaGate{i},\theta_i}(c_{i+1, \majorana} c_{i+1, \majorana}^{(w_0)} )
            =\ & \exval_{\majoranaGate{i}}(I([\majorana, \majoranaGate{i}] = 0)) \exval_{1, \ldots, i-1}(c_{i, \majorana} c_{i, \majorana}^{(w_0)})
            \\& + \exval_{\majoranaGate{i}}(I(\{\majorana, \majoranaGate{i}\} = 0)) \exval_{1, \ldots, i-1}(c_{i, \majorana}c_{i, \majorana}^{(w_0)}) \exval_{\theta_i}(\cos^2(\theta_i)) 
            \\& + \exval_{\majoranaGate{i}}\left(\sum_{\majoranaOther \in \Gamma} I(\majoranaOther, \majoranaGate{i} \mapsto \majorana) \exval_{\theta_i}(\sin(\theta_i)^2) \exval_{1, \ldots, i-1}(c_{i\majoranaOther} c_{i\majoranaOther}^{(w_0)}) i^2 s_{\majoranaGate{i}\majoranaOther}^2\right).
        \end{split}\]

        This gives directly our result by the inductive hypothesis.
        \end{enumerate}
\end{proof}

\subsubsection{Simplified expressions for the mean squared error}

We can now use the orthogonality properties found above to derive following simplified expressions for the moments of the function, the mean squared simulation error.

\begin{proposition}\label{Prop:f-moments}
    Writing $f_{\circuitDepth}(U, O_0) = f$ for the sake of simplicity and its approximation using Majorana propagation with length truncation $w_0$ as $f^{(w_0)}$. The following properties of their moments hold:
    \begin{enumerate}
        \item The truncated function is unbiased: $\exval(f^{(w_0)}) = 0$.
        \item  The correlations between $f^{(w_0)}$ and $f$ can be determined from $f^{(w_0)}$ alone: $\exval(f \cdot f^{(w_0)}) = \exval(f^{(w_0) 2})$.
        
        \item Cross terms in the mean squared error vanish: $\exval((f - f^{(w_0)})^2) = \exval(f^2) - \exval(f^{(w_0) 2})$.

        \item The mean squared error is equal to the difference in the variances: $\exval((f - f^{(w_0)})^2) = \Var(f) - \Var(f^{(w_0)})$.
    \end{enumerate}
\end{proposition}

The final two points are entirely analogous to the Pauli case with locally scrambling circuits \cite[equation 16]{angrisani2024classically}. By the exact same analysis as Ref.~\cite{angrisani2024classically}, this tells us that Majorana propagation with truncation is always better than guessing zero. The proof of Proposition~\ref{Prop:f-moments} stems straightforwardly from the orthogonality of the (averaged) coefficients.

\begin{proof}
     We consider each case separately.
    \begin{enumerate}
        \item The fact $\exval(f^{(w_0)}) = 0$ directly comes from the fact $\exval(c_{\circuitDepth\majorana}) = 0$ for all $\majorana$ by lemma~\ref{lma:c-moments}, since:
        \[\exval(f^{(w_0)}) = \sum_\majorana \exval(c_{\circuitDepth\majorana}^{(w_0)}) \bra{\fockState} \majorana \ket{\fockState}.\]

        \item The fact $\exval(f \cdot f^{(w_0)}) = \exval(f^{(w_0) 2})$ directly comes from the fact $\exval(c_{i\majorana} c_{i\majoranaP}^{(w_0)}) = \exval(c_{i\majorana}^{(w_0)}c_{i\majoranaP}^{(w_0)})$ by lemma~\ref{lma:c-moments}. 

        \item This equality is a direct consequence from the previous one:
        \[\exval((f - f^{(w_0)})^2) = \exval(f^2 - 2f f^{(w_0)} + f^{(w_0 )2}) = \exval(f^2 - 2f^{(w_0) 2} + f^{(w_0 )2}) = \exval(f^2 - f^{(w_0) 2}).\]

        \item  This is a direct consequence of 1 and 3, using the fact that $f^{(2N)} = f$ as mentioned in definition~\ref{def:truncated-observable}.
    \end{enumerate}
\end{proof}

We can further evaluate these expressions using the expression for the mean squared value of the coefficients found in point 2 in Lemma~\ref{lma:c-moments}. On doing so, we obtain the following proposition.  

\begin{proposition}\label{thm:poly-time-algo-for-squared-error-oldpart2}     Writing $f_{\circuitDepth}(U, O_0) = f$ for the sake of simplicity and its approximation using Majorana propagation with length truncation $w_0$ as $f^{(w_0)}$. Using $\partialNorm{P}{w}$ to be the partial 2-norm (definition \ref{def:majorana-length-partial-norm}), the following properties of their moments hold:
        \begin{enumerate}
        \item $\exval(f^{(w_0) 2}) = \sum_{w=0}^{2\nModes} \exval(\partialNorm{O_{\circuitDepth}^{(w_0)}}{w}^2) \probContribute{w}$.
        \item $\exval((f - f^{(w_0)})^2) = \sum_{w=0}^{2\nModes} (\exval(\partialNorm{O_{\circuitDepth}}{w}^2) - \exval(\partialNorm{O_{\circuitDepth}^{(w_0)}}{w}^2)) \probContribute{w}$.
    \end{enumerate}
    where $\probContribute{w}$ is the probability of a length $w$ monomial having a non-zero contribution to the expectation value as defined in Eq.~\eqref{eq:fullypaired}. 
\end{proposition}

\begin{proof}
    \begin{enumerate}
        \item We have, by lemma~\ref{lma:c-moments}:
        \[\begin{split}
            \exval(f^{(w_0) 2}) & = \sum_{\majorana, \majoranaP} \exval(c_{\circuitDepth\majorana}c_{\circuitDepth\majoranaP}) \bra{\fockState} \majorana\ket{\fockState}\bra{\fockState} \majoranaP\ket{\fockState} = \sum_\majorana \frac{\exval(\partialNorm{O_{\circuitDepth}^{(w_0)}}{|\majorana|}^2)}{\binom{2\nModes}{|\majorana|}} (\bra{\fockState} \majorana\ket{\fockState})^2 
            \\& = \sum_w \exval(\partialNorm{O_{\circuitDepth}^{(w_0)}}{w}^2) \frac{\sum_{\majorana \in \Gamma_w} (\bra{\fockState} \majorana\ket{\fockState})^2}{\binom{2\nModes}{w}}.
        \end{split}\]

        Now, $(\bra{\fockState} \majorana\ket{\fockState})^2$ is either 0 if $\majorana$ does not contribute, or $1$ if contributes. The sum over $\majorana \in \Gamma_w$ thus counts the number of terms that contribute, telling us that, by definition:
        \[\exval(f^{(w_0) 2}) = \sum_w \exval(\partialNorm{O_{\circuitDepth}^{(w_0)}}{w}^2) \probContribute{w}.\]

        \item This is a direct consequence of point 1 and lemma~\ref{Prop:f-moments}, using the fact that $f^{(2N)} = f$ as mentioned in definition~\ref{def:truncated-observable}.
    \end{enumerate}
\end{proof}

\subsubsection{A recursive expression for the averaged error}

We see in Proposition~\ref{thm:poly-time-algo-for-squared-error-oldpart2} point 2 that the mean squared error depends on the difference of partial 2-norms $\exval(\partialNorm{O_{\circuitDepth}}{w}^2) - \exval(\partialNorm{O_{\circuitDepth}^{(w_0)}}{w}^2)$, namely the difference between the averaged sum of the coefficients of the final propagated operator with and without truncation. Thus it remains to compute the contribution that stems from $\exval(\partialNorm{O_{\circuitDepth}^{(w_0)}}{w}^2)$ for arbitrary $w_0$ (including $w_0 = 2N$ for no truncation). This can be done recursively using the following proposition (before which we first have to sate a definition). 

\begin{definition}
    Similarly to proposition~\ref{thm:backflow} from the main text, we define:
    \begin{itemize}
        \item $p_-(w)$ is the probability that applying a random length-4 Majorana gate on a random length-$w$ Majorana decreases its length;
        \item $p_+(w)$ is defined analogously for length increase;
        \item $p_=(w)$ is defined analogously for no change in the length. 
    \end{itemize}
\end{definition}

\begin{proposition}\label{thm:poly-time-algo-for-squared-error-oldpart1}
For any $w \leq w_0$, the partial 2-norm of the observables (definition~\ref{def:majorana-length-partial-norm}) follows the following recursive formula:
        \[\begin{split}
             \exval(\partialNorm{O_{i+1}^{(w_0)}}{w}^2) =\ & \frac{p_+(w-2)}{2}  \exval(\partialNorm{O_i^{(w_0)}}{w-2}^2) + \frac{p_-(w+2)}{2}  \exval(\partialNorm{O_i^{(w_0)}}{w+2}^2)
            \\& + \left(1 - \frac{p_+(w)}{2} - \frac{p_-(w)}{2}\right)  \exval(\partialNorm{O_i^{(w_0)}}{w}^2).
        \end{split}\]
        Otherwise, when $w > w_0$, then $ \exval(\partialNorm{O_i^{(w_0)}}{w}^2) = 0$.
\end{proposition}

\begin{proof}
If $w > w_0$, then we trivially get that $ \exval(\partialNorm{O_i^{(w_0)}}{w}^2) = 0$. We thus suppose $w \geq w_0$. Equation~\ref{eq:expected-c-squared} told us that, using the fact $\exval(\cos^2 \theta) = \exval(\sin^2 \theta) = \frac{1}{2}$:
            \[\begin{split}
                \exval(c_{i+1, \majorana}^{(w_0) 2}) = &\ \prob([\majorana, \majoranaGate{i}] = 0) \exval(c_{i, \majorana}^{(w_0) 2}) + \frac{1}{2} \prob(\{\majorana, \majoranaGate{i}\} = 0) \exval(c_{i, \majorana}^{(w_0) 2})
                \\& + \frac{1}{2} \sum_u \sum_{\majoranaOther \in \Gamma_u} \prob(\majoranaOther, \majoranaGate{i} \mapsto \majorana) \exval(c_{i\majoranaOther}^{(w_0) 2}).
            \end{split}\]
            We sum both sides for all $\majorana \in \Gamma_w$. As mentioned in point~\ref{enum:proof-c-only-depends-on-w} of the proof of lemma~\ref{lma:c-moments}, $\prob([\majorana, \majoranaGate{i}] = 0)$ and $\prob(\{\majorana, \majoranaGate{i}\} = 0)$ only depend on $w$; and similarly, that $\exval(c_{i\majoranaOther}^{(w_0) 2})$ only depends on $u$. We are thus interested in the following value for any given $u$:
            \[\sum_{\majoranaOther \in \Gamma_u}\sum_{\majorana \in \Gamma_w} \prob(\majoranaOther, \majoranaGate{i} \mapsto \majorana) = \sum_{\majoranaOther \in \Gamma_u}\sum_{\majorana \in \Gamma_w} \prob(\{\majoranaOther, \majoranaGate{i}\} = 0 \land \majoranaOther \phaselessprod \majoranaGate{i} = \majorana) = \sum_{\majoranaOther \in \Gamma_u}\prob(\{\majoranaOther, \majoranaGate{i}\} = 0 \land |\majoranaOther \phaselessprod \majoranaGate{i}| = w).\]

            If $u = w-2$, then this is $p_+(w-2)$. If $u = w+2$, then this is $p_-(w+2)$. Otherwise, this is zero. Hence, this allows us to find that, overall:
            \[\begin{split}
            \exval(\partialNorm{O_{i+1}^{(w_0)}}{w}^2) =\ & \prob(\text{$w$ commutes}) \exval(\partialNorm{O_{i}^{(w_0)}}{w}^2) + \prob(\text{$w$ anticommutes}) \frac{1}{2} \exval(\partialNorm{O_{i}^{(w_0)}}{w}^2)
            \\& + \frac{1}{2} p_+(w-2) \exval(\partialNorm{O_{i}^{(w_0)}}{w-2}^2) + \frac{1}{2} p_-(w+2) \exval(\partialNorm{O_{i}^{(w_0)}}{w+2}^2).
            \end{split}\]
            We then directly get the result, by using the facts $\prob(\text{$w$ commutes}) = 1 - p_-(w) - p_+(w)$ and $\prob(\text{$w$ anticommutes}) = p_-(w) + p_+(w)$.
\end{proof}

We can combine Proposition~\ref{thm:poly-time-algo-for-squared-error-oldpart2} and Proposition~\ref{thm:poly-time-algo-for-squared-error-oldpart1} to provide a poly-time algorithm to evaluate and plot $\exval((f - f^{(w_0)})^2)$. This is captured by the following Theorem. 
\begin{theorem}\label{thm:poly-time-algo-for-squared-error-new}
    The mean squared error $\exval((f - f^{(w_0)})^2)$ for a random Majorana circuit as defined in Def.~\ref{def:circuitmodel} can be computed exactly classically in polynomial time. 
\end{theorem}

\begin{proof}
    One can use the iterative expression for the average contribution of each Majorana length at each time step in Proposition~\ref{thm:poly-time-algo-for-squared-error-oldpart1} for both the case of a cut off $w_0$ and the case of no cut off $w_0 = 2N$. These expressions can be then fed into the expression for the mean squared error given in point 2 of Proposition~\ref{thm:poly-time-algo-for-squared-error-oldpart2}. 
\end{proof}

This algorithm is validated in figure~\ref{fig:poly-time-algo-for-squared-error}. 

    \begin{figure}
        \centering
        \includegraphics[width=0.5\linewidth]{./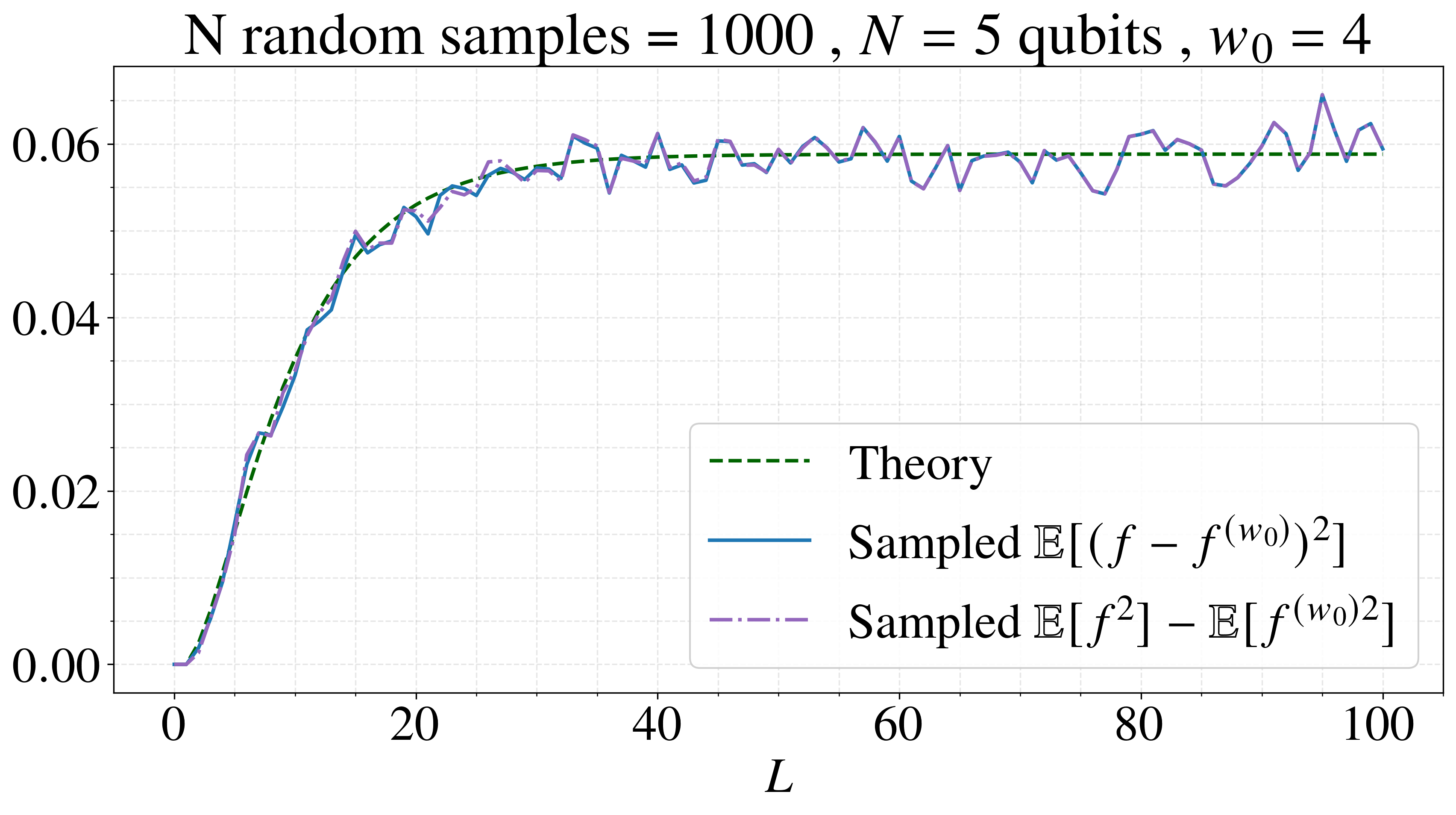}
        \caption{Numerical validation of the algorithm described in theorem~\ref{thm:poly-time-algo-for-squared-error-new}. The green dashed line is the result of this algorithm and the blue and purple lines are computed from performing full and truncated Majorana propagation with 1000 randomly sampled circuits. The initial observable is a random length-2 observable with a random $\pm$ sign.}
        \label{fig:poly-time-algo-for-squared-error}
    \end{figure}

\subsection{Bounding the mean squared error}\label{app:bounderror}

In this section, we proceed to find an upper bound on the error. As shown in Fig.~\ref{fig:errors} there are multiple regimes to the error - i) at short times the error initially increases before ii) starting to exponentially decay with depth to an exponential in $N$ saturation point. Our bound will aim to capture these two regimes separately using a Markov chain analysis. 

In particular we will consider the following Markov chain that capture the probability that the length of a Monomial increases, decreases or stays the same after each layer of the random circuit, i.e., after each random Majorana rotation, as specified in Eq.~\eqref{eq:singleupdate}. 

\begin{definition}
    Let $W_i$ be a random variable corresponding to the length of the Monomial at step $i$. Let $\prob(W_i = w' \mid W_{i-1} = w)$ denote the probability that the length of a monomial increases from $w$ to $w'$ from iteration $i-1$ to $i$. This procedure is captured by a Markov chain with the transition probabilities:
    \begin{enumerate}
        \item Probability that length decreases: $\prob(W_i = w - 2 \mid W_{i-1} = w) = \frac{p_-(w)}{2}$ .
        \item Probability that length increases: $\prob(W_i = w + 2 \mid W_{i-1} = w) = \frac{p_+(w)}{2}$.
        \item Probability that length stays the same: $\prob(W_i = w \mid W_{i-1} = w) = 1 - \frac{p_-(w)}{2} - \frac{p_+(w)}{2}.$
    \end{enumerate}
    
    Moreover, we let $\prob(W_0 = w) = \exval(\partialNorm{O_0}{w}^2) / \exval(\|O_0\|^2)$ where $\partialNorm{P}{w}$ is the partial 2-norm (definition~\ref{def:majorana-length-partial-norm}). Note that, since $\sum_{w=0}^{2\nModes} \exval(\partialNorm{O_0}{w}^2) = \exval(\|O_0\|^2)$, this is a valid probability distribution.
    
\end{definition}

This Markov chain intuitively captures that probability that the length of a Monomial increases, decreases or stays the same after each random rotation. In particular, we recall from Section~\ref{app:backflow_suppresion} that if the generator of rotation is length 4 then then the length can only increase or decrease by 2 with the probability of this occurring given by $p_{\pm}/2$ where the factor of $1/2$ comes the fact that $  \exval( \sin(\theta)^2 ) = \exval( \cos(\theta)^2 ) = 1/2$ when $\theta$ is randomly chosen in the range $0$ to $2\pi$. 
Alternatively, this Markov chain can be motivated by noting that the definition was already implicit in Proposition~\ref{thm:poly-time-algo-for-squared-error-oldpart1}, where it governed the recursive expression for how to update the length $w$ contributions to 2-norm of the propagated observable at each time step.

The Markov chain defined above does not take into account the fact that we truncate all monomials with length greater than $w_0$. To account for this, we introduce the variable $\tau$ that captures the first time that a path crosses outside the truncation zone. That is, the first time such that that the length $W_i$ of the monomial at time $i$ is greater than the truncation $w_0$:
\begin{equation}\label{eq:tau}
        \tau := \min_i \{W_i > w_0\} \, .
\end{equation}
The probability that a \textit{non-truncated} path has length $w$ at time $i$ is given by $\prob(W_i = w \land \tau > i)$: the probability that it has length $w$ (i.e., $W_i = w$) and it has not been truncated yet (i.e., $\tau > i$). The following proposition relates this probability to the average coefficient of the length $w$ terms of the back-propagated observable at iteration $i$ that previously appeared in our expressions for the mean squared error in Propositions~\ref{thm:poly-time-algo-for-squared-error-oldpart2}.

\begin{lemma}\label{lma:link-prob-2-norm}
    For all iterations $i$ and monomial lengths $w$, writing $\partialNorm{P}{w}$ to be the partial 2-norm (definition~\ref{def:majorana-length-partial-norm}): 
    \[\prob(W_i = w \land \tau > i) = \frac{\exval(\partialNorm{O_i^{(w_0)}}{w}^2)}{\exval(\|O_0\|^2)}.\]
\end{lemma}

\begin{proof}
    They have the same initial condition, i.e.~this is true for $i = 0$. Then, they have the same recurrence formula. Indeed, the case $w > w_0$ is easy. When $w \leq w_0$:
    \[p_{i+1} = \prob(W_{i+1} = w \land \tau > i+1)  = \prob(W_{i+1} = w \land \tau > i).\]

    Then, using the law of total probability:
    \[p_{i+1, w} = \sum_u \prob(W_{i+1} = w \land \tau > i \land W_i = u) = \sum_u \prob(W_{i+1} = w \mid \tau > i \land W_i = u) \underbrace{\prob(\tau > i \land W_i = u)}_{= p_{i, u}}.\]

    Now, we know that $\prob(W_{i+1} = w \mid \tau > i \land W_i = u) = \prob(W_{i+1} = w \mid W_i = u)$: $\tau > i$ only tells us information on $W_1, \ldots, W_i$, but $W_{i+1}$ only depends on $W_i$, not on the rest of its past. Hence:
    \[p_{i+1, w} = p_{i, w-2} \frac{p_+(w-2)}{2} + p_{i, w} \left(1 - \frac{p_+(w)}{2} - \frac{p_-(w)}{2}\right) + p_{i, w+2} \frac{p_-(w+2)}{2}.\]

    This matches the recurrence relation of $\exval(\partialNorm{O_i^{(w_0)}}{w}^2)$ as given in Proposition~\ref{thm:poly-time-algo-for-squared-error-oldpart1}, finishing the proof since $\exval(\|O_0\|^2)$ is just a constant.
\end{proof}

Next we can use $\tau$, the first time that a path crosses outside the truncation zone, to rewrite the expressions for the second moment of the truncated expectation value and the mean squared error. Intuitively, the mean squared error is given by probability that a path after $\circuitDepth$ layers would have contributed, $\probContribute{W_{\circuitDepth}}$, times the probability that that path was truncated because the first truncation time is that iteration, $ I(\tau \leq \circuitDepth))$. This is captured by the following lemma. 

\begin{lemma}\label{lma:link-result-expectation}
For all iterations $i$ and cutoffs $w_0$:
\begin{enumerate}
    \item The second moment of the truncated function: $\exval(f^{(w_0) 2}) = \exval(\|O_0\|^2) \exval(\probContribute{W_{\circuitDepth}} I(\tau > \circuitDepth))$ 
    
    \item The mean square error can be written as: $\exval((f - f^{(w_0)})^2) = \exval(\|O_0\|^2) \exval(\probContribute{W_{\circuitDepth}} I(\tau \leq \circuitDepth))$
\end{enumerate}
where $I(\tau > \circuitDepth)$ is the indicator function that equals 1 iff $\tau > \circuitDepth$ and otherwise is 0. 
\end{lemma}
\begin{proof}
    We consider both cases separately.
    \begin{itemize}
        \item By theorem~\ref{thm:poly-time-algo-for-squared-error-oldpart2} and lemma~\ref{lma:link-prob-2-norm}:
        \[\begin{split}
            \frac{\exval(f^{(w_0) 2})}{\exval(\|O_0\|^2)} & = \sum_{w=0}^{2\nModes} \frac{\exval(\partialNorm{O_{\circuitDepth}^{(w_0)}}{w}^2)}{\exval(\|O_0\|^2)} \probContribute{w} = \sum_{w=0}^{2\nModes} \prob(W_{\circuitDepth} = w \land \tau > \circuitDepth) \probContribute{w} 
            \\& = \sum_{w=0}^{2\nModes} \exval(I(W_{\circuitDepth} = w) I(\tau > \circuitDepth)) \probContribute{w} = \exval\left(\sum_{w=0}^{2\nModes} I(W_{\circuitDepth} = w)\probContribute{w} I(\tau > \circuitDepth)\right).
        \end{split}\]

        However, we notice that $\sum_{w=0}^{2\nModes} I(W_{\circuitDepth} = w)\probContribute{w} = \probContribute{W_{\circuitDepth}}$, so this simplifies exactly to our result.
        
        Note that another way of expressing this result is that $\exval(f^{(w_0) 2}) / \exval(\|O_0\|^2)= \exval(\probContribute{W_{\circuitDepth}} \mid \tau > i)\prob(\tau > i)$. This is direct from the expression above, since $\prob(W_{\circuitDepth} = w \land \tau > i) = \prob(W_{\circuitDepth} = w \mid \tau > \circuitDepth)\prob(\tau > \circuitDepth)$.

        \item By theorem~\ref{thm:poly-time-algo-for-squared-error-oldpart2} and what we just proved:
        \[\frac{\exval((f - f^{(w_0)})^2)}{\exval(\|O_0\|^2)} = \frac{\exval(f^2) - \exval(f^{(w_0)2})}{\exval(\|O_0\|^2)} = \exval(\probContribute{W_{\circuitDepth}}) - \exval(\probContribute{W_{\circuitDepth}} I(\tau > \circuitDepth)) = \exval(\probContribute{W_{\circuitDepth}} I(\tau \leq \circuitDepth)).\]
    \end{itemize}
\end{proof}

We now continue the game of rewriting the mean squared error using the notion of the minimal crossing time. This time we start from the fact that the average error is given by the probability that a length $w$ path contributes, times the probability that that term was truncated, summed over all possible lengths $w$ (see Eq.~\eqref{eq:firstrewrite} below). We then show that this can be rewritten as the probability that the first crossing time is $t$ multiplied by the probability that a path that crosses at this time eventually contributes, summed over all possible crossing times (see Eq.~\eqref{eq:secondrewrite}). Finally, this is upper bounded by looking at the crossing term where the probability of being truncated and then eventually contributing is largest (see Eq.~\eqref{eq:boundstep}). These steps are formally captured by the following lemma and proof. 

\begin{lemma}\label{lma:upperbound-by-u}
    For all truncation levels $w_0$ and circuit depths $\circuitDepth$ the mean squared error can be written as 
    \begin{equation}\label{eq:upperbound-by-u1}
        \frac{\exval((f - f^{(w_0)})^2)}{\exval(\|O_0\|^2)} = \sum_{t=0}^{\circuitDepth} \exval(\probContribute{W_{\circuitDepth-t}} \mid W_0 = w_0 + 2) \prob(\tau = t)
    \end{equation}
    where we recognize that $\exval((f - f^{(w_0)})^2) = (\exval(\probContribute{W_{t}} \mid W_0 = w_0 + 2) * \prob(\tau = t))_{\circuitDepth}$ is a convolution and intuitively, $\exval(\probContribute{W_k} \mid W_0 = w_0 + 2)$ captures the likelihood that a path which starts outside the truncation zone at time $t$ would have contributed at time $\circuitDepth$, i.e., after $\circuitDepth-t$ more iterations. 
    We can use Eq.~\eqref{eq:upperbound-by-u1} to further bound the mean squared error as 
    \begin{equation}
        \frac{\exval((f - f^{(w_0)})^2)}{\exval(\|O_0\|^2)} \leq \max_{0 \leq k \leq \circuitDepth} \exval(\probContribute{W_k} \mid W_0 = w_0 + 2) \, . 
    \end{equation}
\end{lemma}
\begin{proof}
    This is direct, by lemma~\ref{lma:link-result-expectation}:
    \begin{align}
        \frac{\exval((f - f^{(w_0)})^2)}{\exval(\|O_0\|^2)} & = \exval(\probContribute{W_{\circuitDepth}} I(\tau \leq \circuitDepth)) = \sum_w \probContribute{w} \prob(W_{\circuitDepth} = w \land \tau \leq \circuitDepth) \label{eq:firstrewrite}
        \\& = \sum_{t=0}^{\circuitDepth} \sum_w \probContribute{w} \prob(W_{\circuitDepth} = w \land \tau = t)
        \\& = \sum_{t=0}^{\circuitDepth} \sum_w \probContribute{w} \prob(W_{\circuitDepth} = w \mid \tau = t)\prob(\tau = t)
        \\& = \sum_{t=0}^{\circuitDepth} \sum_w \probContribute{w} \prob(W_{\circuitDepth} = w \mid W_t = w_0+2, |W_{t-1}| \leq w_0, \ldots, |W_{0}| \leq w_0)\prob(\tau = t)
        \\& = \sum_{t=0}^{\circuitDepth} \sum_w \probContribute{w} \prob(W_{\circuitDepth} = w \mid W_t = w_0+2)\prob(\tau = t) 
        \\& = \sum_{t=0}^{\circuitDepth} \exval(\probContribute{W_{\circuitDepth}} \mid W_t = w_0 + 2) \prob(\tau = t) \label{eq:secondrewrite}
        \\& = \sum_{t=0}^{\circuitDepth} \exval(\probContribute{W_{\circuitDepth-t}} \mid W_0 = w_0 + 2) \prob(\tau = t)
    \end{align}
    This finishes the first part of the proof. For the second part, we can just continue the chain of inequality:
    \begin{align}
                \sum_{t=0}^{\circuitDepth} \exval(\probContribute{W_{\circuitDepth-t}} \mid W_0 = w_0 + 2) \prob(\tau = t) & \leq \max_{0 \leq t \leq \circuitDepth} \{\exval(\probContribute{W_{\circuitDepth-t}} \mid W_0 = w_0 + 2)\} \sum_{t=0}^{\circuitDepth} \prob(\tau = t) \label{eq:boundstep}
        \\& = \max_{0 \leq k \leq \circuitDepth} \{\exval(\probContribute{W_{k}} \mid W_0 = w_0 + 2)\} \prob(\tau \leq t)
        \\& \leq \max_{0 \leq k \leq \circuitDepth} \{\exval(\probContribute{W_{k}} \mid W_0 = w_0 + 2)\}.
    \end{align}

\end{proof}

Thus we see that the mean squared error can be bounded simply by bounding $\prob(\tau = t) $ and/or the iterations $k$ for which the quantity $\exval(\probContribute{W_k} \mid W_0=w_0+2)$ is maximum.
Let us introduce a more compact notation for this key quantity by defining: 
\begin{equation}
    U_i(w) := \exval(\probContribute{W_i} \mid W_0=w) \,  
\end{equation}
and 
\begin{equation}
     \Delta U_i(w) := U_i(w+2) - U_i(w) \, .
\end{equation}
We plot these quantities in Fig.~\ref{fig:propsU}. As one can see, the following key properties hold for $U_i(w)$. For simplicity below we assume $\nModes$ is odd with the generalization to even $\nModes$ straightforward. 

    \begin{figure}
        \centering
        \includegraphics[width=0.95\linewidth]{./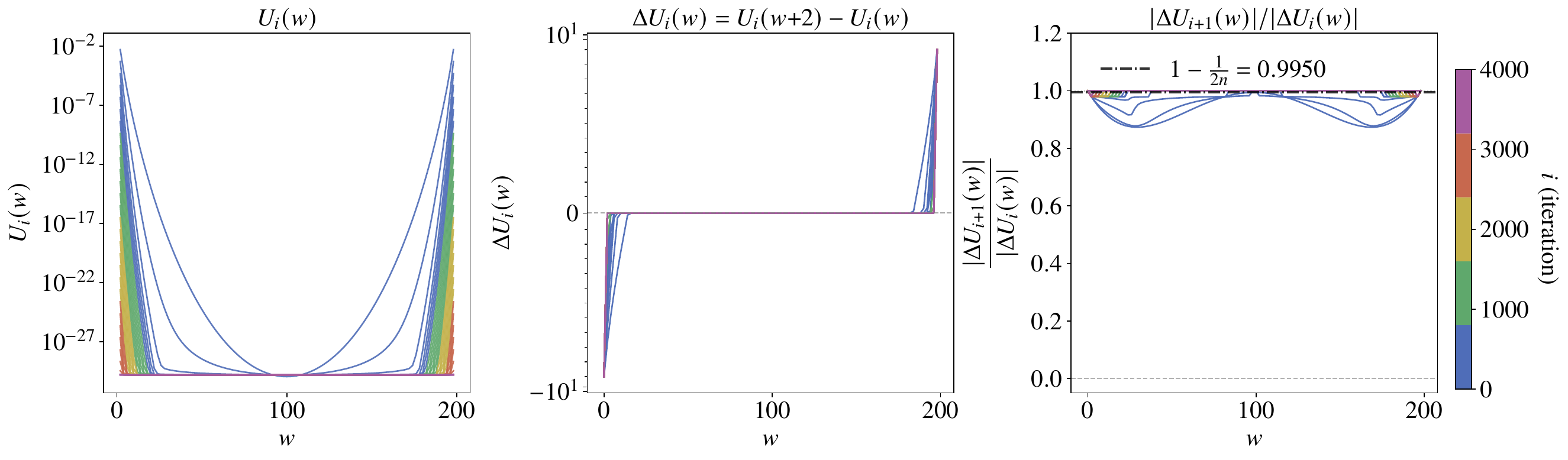}
        \caption{
Visualisation of the properties of \( U_i(w) \) claimed in Lemma~\ref{lma:SummaryPropertiesOfU}. 
\textbf{Left:} The probability distribution \( U_i(w) \) as a function of the Majorana length \( w \) for several iteration steps \( i \). 
\textbf{Centre:} The finite difference \( \Delta U_i(w) = U_i(w{+}2) - U_i(w) \), showing how \( U_i(w) \) changes between adjacent even values of \( w \). In particular we see that $\Delta U_i(w) < 0$ for $2 < w < \nModes$ (as claimed in part 1 of Lemma~\ref{lma:SummaryPropertiesOfU}).
\textbf{Right:} The ratio \( \lvert \Delta U_{i+1}(w) \rvert / \lvert \Delta U_i(w) \rvert \), quantifying the relative contraction of successive iterations. 
The horizontal dashed–dotted line corresponds to the theoretical bound of \( 1 - \tfrac{1}{2\nModes} \). The ratio is less than this bound in agreement with part 2 of Lemma~\ref{lma:SummaryPropertiesOfU}.
Colours encode the iteration index \( i \).}
        \label{fig:propsU}
    \end{figure}

\begin{lemma}\label{lma:SummaryPropertiesOfU}
    For any time step $i \geq 0$ and length $ 2 \leq w \leq 2\nModes-2$ we have that $ U_i(w)$ decreases exponentially with $i$ to a saturation point that is exponentially small in $\nModes$: 
     \begin{equation}\label{eq:upperbound-u}
         U_i(w) \leq \left(1 - \frac{1}{2\nModes}\right)^i \probContribute{w} + \frac{1}{2^{\nModes-1} + 1} \, .
     \end{equation}
\end{lemma}

\noindent\textit{Proof intuition}. To remain focused on the core of our argument we continue deriving our main error bound here and defer formally proving Eq.~\eqref{eq:upperbound-u} to Appendix~\ref{app:additionalprops}. However, the idea is that it relies on the following observations, which can be seen on figure~\ref{fig:propsU}.
\begin{enumerate}
    \item For any $i$, then $U_i(w)$ is symmetric around $N$, i.e.\ $U_i(2N - w)= U_i(w)$ for all $w$.
    
    This allows to prove properties of $U_i(w)$ by induction on $i$, even if they change behavior at the midpoint. The next observation is such an example. 
    \item If $w \leq N-1$, then $\Delta U_i(w) \leq 0$. If $w \geq N-1$, then $\Delta U_i(w) \geq 0$. In other words, $U_i(w)$ first decreases and then increases; its minimum is the middle point.
    \item $U_i(w)$ flattens out at an exponential speed. More mathematically, for all $i \geq 0$ and $w$, 
    \[|\Delta U_{i+1}(w)| \leq (1-  1/2\nModes) |\Delta U_i(w)|.\]
    \item As the curve $U_i(w)$ flattens out, it converges to $\frac{1}{2^{N - 1} + 1}$, which is exponentially small. More precisely, there exists a probability distribution $\pi(w)$ of support $\{2, 4, \ldots 2N-2\}$ such that for any step $i \geq 0$, we have the following invariant:
    \[\sum_{w \in \{2, 4, \ldots 2N-2\}} \pi(w) U_i(w) = \frac{1}{2^{\nModes-1} + 1}.\]
\end{enumerate}

In Appendix~\ref{app:additionalprops} we also derive the following bound on $\prob(\tau = t) $.

\begin{lemma}\label{lma:upperbound-tau-equal-t}
    We have, for $\nModes$ sufficiently large and all $t \geq 0$:
    \[\prob(\tau = t) \leq 4^{w_0} \left(1 -\frac{3}{4\nModes}\right)^{t-1}.\]
    Note that $1 - 3/4\nModes < 1-1/2\nModes$. This is thus decreasing faster than $U_i$ (point 4 of Lemma~\ref{lma:SummaryPropertiesOfU}). This will matter when we want to combine it with Lemma~\ref{lma:upperbound-by-u}.
\end{lemma}

We can now combine Lemma~\ref{lma:upperbound-tau-equal-t} and Lemma~\ref{lma:SummaryPropertiesOfU} with Lemma~\ref{lma:upperbound-by-u} to upper bound the mean squared error. 

\begin{theorem}[Bound on Mean Squared Error]\label{thm:upper-bound-mse}
    We have, for $\nModes$ sufficiently large:
    \[\frac{\exval((f - f^{(w_0)})^2)}{\exval(\|O_0\|^2)} \leq \frac{1}{2^{\nModes-1} + 1} + \probContribute{w_0+2}\min\left\{1, 5\nModes 4^{w_0} \left(1 - \frac{1}{2\nModes}\right)^{\circuitDepth}\right\}.\]
    Moreover, in particular:
    \begin{itemize}
        \item When $w_0 = \Theta(1)$:
        \[\frac{\exval((f - f^{(w_0)})^2)}{\exval(\|O_0\|^2)} \leq \frac{1}{2^{\nModes-1}} + \probContribute{w_0+2} = O(\nModes^{-w_0 + 2}).\]
        \item We can take $\circuitDepth_0 = O(\nModes^2)$ sufficiently large such that, for any $\circuitDepth \geq \circuitDepth_0$:
        \[\frac{\exval((f - f^{(w_0)})^2)}{\exval(\|O_0\|^2)} = \frac{\Theta(1)}{2^{\nModes-1} + 1} = \Theta(2^{-\nModes}).\]
    \end{itemize}
\end{theorem}

\begin{proof}[Proof of theorem~\ref{thm:upper-bound-mse}]
    We prove each inequality separately.
    \begin{itemize}
        \item We first want to show $\exval((f - f^{(w_0)})^2) / \exval(\|O_0\|^2) \leq \frac{1}{2^{\nModes-1} + 1} + \probContribute{w_0+2}$. By lemma~\ref{lma:upperbound-by-u}:
        \[\frac{\exval((f - f^{(w_0)})^2)}{\exval(\|O_0\|^2)} \leq \max_{0 \leq k \leq \circuitDepth} U_k(w_0+2).\]

        However, by lemma~\ref{lma:SummaryPropertiesOfU} (point 3):
        \[U_k(w_0 + 2) \leq \left(1 + \frac{1}{2\nModes}\right)^k \probContribute{w_0+ 2} + \frac{1}{2^{\nModes-1} + 1} \leq \probContribute{w_0 + 2} + \frac{1}{2^{\nModes-1} + 1}.\]

        Combining these two equations, we get exactly our result.

        \item We now want to show $\exval((f - f^{(w_0)})^2) / \exval(\|O_0\|^2)\leq \frac{1}{2^{\nModes-1} + 1} + 5\nModes \probContribute{w_0+2} 4^{w_0} \left(1 - \frac{1}{2\nModes}\right)^{\circuitDepth}$. By lemma~\ref{lma:upperbound-by-u}:
        \[\frac{\exval((f - f^{(w_0)})^2)}{\exval(\|O_0\|^2)} = \sum_{t=0}^{\circuitDepth} U_{\circuitDepth-t}(w_0+2) \prob(\tau = t).\]
    
        The idea is that $\prob(\tau = t)$ only keeps terms where $t$ is small. But then, this is is just the sum of terms that are close to $U_{\circuitDepth}(w_0 +2)$, which decrease towards a horizontal asymptote as $\circuitDepth$ increases. More formally, using lemmas~\ref{lma:SummaryPropertiesOfU} and~\ref{lma:upperbound-tau-equal-t}:
        \[\begin{split}
            \frac{\exval((f - f^{(w_0)})^2)}{\exval(\|O_0\|^2)} &\leq \frac{1}{2^{\nModes-1}+1}\sum_{t=0}^{\circuitDepth}  \prob(\tau = t) + \sum_{t=0}^{\circuitDepth} \prob(\tau = t) \left(1 - \frac{1}{2\nModes}\right)^{\circuitDepth-t} \probContribute{w_0+2}
            \\& \leq \frac{1}{2^{\nModes-1} + 1} \prob(\tau \leq t) + \probContribute{w_0 + 2} 4^{w_0} \sum_{t=0}^{\circuitDepth} \left(1 - \frac{3}{4\nModes}\right)^{t-1} \left(1 - \frac{1}{2\nModes}\right)^{\circuitDepth-t}.
        \end{split}\]
    
        Just like the proof of lemma~\ref{lma:upperbound-by-u}, we know we have $\prob(\tau \leq t) \leq 1$. We are thus only left with evaluating the sum, which is just a geometric series. Note that, very nicely:
        \[r = \frac{1 - 3/4\nModes}{1 - 1/2\nModes} = \frac{4\nModes - 3}{4\nModes -2} < 1.\]
    
        Hence, we have that:
        \[S = \sum_{t=0}^{\circuitDepth} \left(1 - \frac{3}{4\nModes}\right)^{t} \left(1 - \frac{1}{2\nModes}\right)^{-t} = \sum_{t=0}^{\circuitDepth} \left(\frac{1-3/4\nModes}{1-1/2\nModes}\right)^t \leq \sum_{t=0}^{\infty} r^t = \frac{1}{1-r} = \frac{1}{1 - \frac{4\nModes-3}{4\nModes-2}} = 4\nModes-2.\]
    
        This gives us our result, supposing $\nModes \geq 4$ (since Eq.~\eqref{lma:delta-u-increases} already forces this hypothesis):
        \[\begin{split}
            \frac{\exval((f - f^{(w_0)})^2)}{\exval(\|O_0\|^2)} &\leq \frac{1}{2^{\nModes-1} + 1}  + \probContribute{w_0 + 2} 4^{w_0} \left(1 - \frac{3}{4\nModes}\right)^{-1} \left(1 - \frac{1}{2\nModes}\right)^{\circuitDepth} S
            \\& = \frac{1}{2^{\nModes-1} + 1}  + \probContribute{w_0 + 2} 4^{w_0} \left(1 - \frac{1}{2\nModes}\right)^{\circuitDepth} \nModes\cdot \underbrace{4\frac{1-1/2\nModes}{1-3/4\nModes}}_{\leq 5.}.
        \end{split}\]
    \end{itemize}
\end{proof}

Finally, we can polish this bound to obtain the aesthetically simpler (but looser) Theorem~\ref{thm:main-text-upper-bound-mse} presented in the main text, which we repeat here for convenience. 

\begin{theorem}[Simplified Bound on Mean Squared Error (Theorem~\ref{thm:main-text-upper-bound-mse} in the main text)]\label{thm:upper-bound-mse-tidied}
    Consider a random Majorana circuit as defined in Def.~\ref{def:circuitmodel} and suppose that $w_0 \geq 2$. Then for circuits with $L \;\le\; 2 N \ln(5 e w_0 4^{w_0})$: 
    \[\frac{\exval((f - f^{(w_0)})^2)}{\exval(\|O_0\|^2)}  \leq \frac{1}{2^{N-1}} +  \left(\frac{e w_0 }{N}\right)^{w_0 / 2}.\]
    Otherwise, for deeper circuits with $L \geq 2 N \ln(5 e w_0 4^{w_0})$ we have: 
    \[ \frac{\exval((f - f^{(w_0)})^2)}{\exval(\|O_0\|^2)}  \leq \frac{1}{2^{N-1}} + 5e w_0\,\left(\frac{16 e w_0}{ N}\right)^{w_0/2} e^{-L/2N} .\]
\end{theorem}

\begin{proof}
To begin with, we can use equation~\ref{eq:upperboundonPw} together with $\frac{1}{2^{N-1}+1}\le \frac{1}{2^{N-1}}$ and $\left(1-\frac{1}{2N}\right)^{\!L}\;\le\; \exp\!\left(-\frac{L}{2N}\right)$  to simplify the result of Theorem~\ref{thm:upper-bound-mse}:
\begin{equation}
\label{eq:thm20-base}
\frac{\exval((f - f^{(w_0)})^2)}{\exval(\|O_0\|^2)}
\;\le\; \frac{1}{2^{N-1}}
\;+\; \left(\frac{e (w_0 + 2)}{2 N}\right)^{(w_0 + 2)/2}\,\min\!\left\{1,\;5N\,4^{w_0}\! \exp\left(\frac{-L}{2N}\right)\right\}.
\end{equation}

\noindent\textit{Shallow case (first line of Thm.~\ref{thm:upper-bound-mse-tidied}).} We consider the left branch ``$1$'' of the minimum in~\eqref{eq:thm20-base}, and use the fact $w_0 \geq 2$ and hence that $\frac{w_0}{2} \leq \frac{w_0 + 2}{2} \leq w_0$:
\[
\frac{\exval((f - f^{(w_0)})^2)}{\exval(\|O_0\|^2)} \leq \frac{1}{2^{N-1}}
\;+\; \left(\frac{e (w_0 + 2)}{2 N}\right)^{(w_0 + 2)/2} \;\leq\; \frac{1}{2^{N-1}}
\;+\;\left(\frac{e\,w_0}{N}\right)^{\!w_0/2}.
\]
This is exactly the first inequality of Theorem~\ref{thm:upper-bound-mse}.

\noindent\textit{Deep case (second line of Thm.~\ref{thm:upper-bound-mse-tidied}).} 
We use the second branch of the minimum in~\eqref{eq:thm20-base}, together with the fact $w_0 \geq 2$ and hence $\frac{w_0 + 2}{2} \leq w_0$: 
\[\begin{split}
    \frac{\exval((f - f^{(w_0)})^2)}{\exval(\|O_0\|^2)} & \leq \frac{1}{2^{n-1}} + \left(\frac{e (w_0 + 2)}{2 N}\right)^{(w_0 + 2)/2}5N 4^{w_0} e^{-L/2N} \leq \frac{1}{2^{n-1}} + \left(\frac{e w_0}{N}\right)^{w_0/2 + 1}5N 4^{w_0} e^{-L/2N}
    \\& = \frac{1}{2^{n-1}} +  \left(\frac{ e w_0}{N}\right)^{w_0/2}\frac{e w_0}{N} 5N 16^{w_0/2} e^{-L/2N} = \frac{1}{2^{n-1}} + 5e w_0 \left(\frac{16 e w_0}{N}\right)^{w_0/2} e^{-L/2N}.
\end{split}\]

This is exactly the second bound.

\noindent\textit{Branch selection}. Both inequalities above hold for any $L$. We however want to find which is best given some $L$. This is just a simple inequality to compare the two bounds:
\[\frac{1}{2^{N-1}} + \left(\frac{e w_0}{N}\right)^{w_0/2} \leq \frac{1}{2^{n-1}} + 5e w_0 \left(\frac{16 e w_0}{N}\right)^{w_0/2} e^{-L/2N} \iff e^{L/2N} \leq 5 e w_0 16^{w_0/2} \iff L \leq 2N \ln\left(5e w_0 4^{w_0}\right).\]

In other words, the shallow case gives a tighter bound if and only if $L \leq 2N \ln\left(5e w_0 4^{w_0}\right)$.

\end{proof}

\subsection{Time complexity of Majorana Propagation with Length Truncation}

Theorem~\ref{thm:upper-bound-mse} (and equivalently Theorem~\ref{thm:upper-bound-mse-tidied}) can be interpreted in different ways. First, it also gives us a probabilistic bound thanks to Markov's inequality:
 \[\prob(|f -f^{(w_0)}| \geq \delta) \leq \frac{\exval((f - f^{(w_0)})^2)}{\delta^2}.\]
Hence, for instance, when $\circuitDepth \gg 1$ is large, this tells us that $f$ and $f^{(w_0)}$ are exponentially close with high probability. Another way to state this theorem is that having a truncation length $w_0$ means using $O(\nModes^{w_0})$ space and hence $O(\circuitDepth\cdot \nModes^{w_0})$ time. In other words, this bound allows to link the resources used by Majorana propagation to the error of the algorithm. In this section, we formalize these claims. In particular, we provide a proof for Theorem 2 in the main text which we repeat here for convenience. 

\begin{theorem}[Time complexity]
\label{thm:resources-appendices}
    Let $U$ be a randomly sampled circuit from an $L$-layered random Majorana rotation circuit on $2 N$ modes (as defined in definition~\ref{def:circuitmodel}), and let $O$ be random homogeneous observable. Moreover, let $\epsilon, \delta > 0$ such that $\epsilon^{-1}\delta^{-1} \leq 2^{o(N)}$.
    There exists a truncation length $w_0$ such that Majorana propagation runs in time
    $L \cdot \nModes^{\mathcal{O}\left(\log(\epsilon^{-1} \delta^{-1})\right)},$
    and outputs a value $f_L^{(w_0)}(U, O)$, such that
    \begin{align}
        \abs{f_L^{(w_0)}(U,O) - f_L(U,O) } \leq \epsilon,
    \end{align}
    for at least $1 - \delta \exval(\|O\|_2^2)$ fraction of the circuits, if $\nModes$ is large enough.
\end{theorem}

\begin{proof}
    We aim to prove that $\prob\left(\abs{f_L(U,O) - f_L(U,O) } \leq \epsilon\right) \geq 1-\delta\exval(\|O\|^2)$ and hence that $\prob\left(\abs{f_L(U,O) - \Tr[U \rho U^\dagger O] } > \epsilon\right) \leq \delta\exval(\|O\|^2)$. Let $w_0$ be a value to be fixed later, and let us write $f = f_L(U,O)$ and $f^{(w_0)} = f_L^{(w_0)}(U, O)$. Then, as claimed above, by Markov's identity: 
    \begin{equation}\label{eq:markov-bound-time-complexity-bound}
        p = \prob\left(\abs{f^{(w_0)} - f } > \epsilon\right) \leq \prob\left(\abs{f^{(w_0)} - f } \geq \epsilon\right) \leq \frac{\exval((f - f^{(w_0)})^2)}{\epsilon^2}.
    \end{equation}

    Let $w_0$ be the smallest even integer larger than $2\log_2(2\delta^{-1}\epsilon^{-1})$. Note that we do have $w_0 \leq 2N$ for $N$ large enough since:
    \[w_0 < 2\log_2(2\delta^{-1}\epsilon^{-1}) + 2 \leq 2 \log_2(2^{o(\nModes)}) + 2 = o(N).\]

    Note moreover that this shows in particular that $\frac{e w_0}{N} = o(1)$ and hence $\frac{e w_0}{N} < \frac{1}{2}$ for $N$ large enough. Now, since $w_0 \leq 2N$, we can use the first guarantee of Theorem~\ref{thm:upper-bound-mse-tidied} to find that for $N$ large enough:
    \[\frac{\exval((f - f^{(w_0)})^2)}{\exval(\|O_0\|^2)} = \frac{1}{2^{N-1}} + \left(\frac{e w_0}{N}\right)^{w_0 / 2} \leq 2\left(\frac{e w_0}{N}\right)^{w_0 / 2} \overset{(\dagger)}{\leq} 2  \left(\frac{1}{2}\right)^{2 \log_2(2\delta^{-1} \epsilon^{-1})/2} = \delta \epsilon,\]
    where we use $\frac{e w_0}{N} \leq \frac{1}{2}$ as mentioned above and $w_0 \geq 2\log_2(2\delta^{-1}\epsilon^{-1})$ by definition for the inequality $(\dagger)$. Combining this inequality with Eq.~\eqref{eq:markov-bound-time-complexity-bound} gives exactly the required bound on the success probability. Majorana propagation with any $w_0$ finally runs in time $\mathcal{O}(L \cdot N^{w_0})$, giving our result since $w_0 = \mathcal{O}(\log(\delta^{-1} \epsilon^{-1}))$.
\end{proof}

\subsection{Additional properties of the Markov chain}\label{app:additionalprops}

Here we provide the proofs several lemmas in Appendix~\ref{app:bounderror} that we skipped at the time. We start with the proof of Lemma~\ref{lma:upperbound-tau-equal-t}.

\begin{proof}[Proof of Lemma~\ref{lma:upperbound-tau-equal-t}]
    First, we trivially notice that:
    \[p = \prob(\tau = t) \leq \prob(\tau \geq t).\]

    We will now consider a necessary condition for $\tau \geq t$, i.e.~for the Markov chain not to cross the truncation zone up to time $t' = t - 1$. We consider the Markov chain as a sequence of plus moves ($w \mapsto w+2$), of minus moves ($w \mapsto w-2$) and of stay moves ($w \mapsto w$). 

    We note $S_i \in \{-1, 0, 1\}$ to be a random variable that states the sign of the $i$th move. For instance, if $W_1 - W_0 = 2$, then $S_1 = 1$. We also note $C = \sum_{i=1}^{t'} S_i$ to be the signed distance travelled from the starting point. 

    Note that, if $C \geq w_0/2$, then the truncation zone has been crossed before time $t'$ and hence $\tau < t$. This states by the contrapositive that a necessary condition for $\tau \geq t$ is that $C < w_0/2$. In other words:
    \[\tau \geq t \implies C < w_0/2.\]

    This is already a nice necessary condition, but we want to simplify it further. To do so, we first consider coupling random variables $\chi_i$ sampled IID in $[0, 1[$. This allows us to describe the Markov chain:
    \[(W_i \mid W_{i-1} = w) = \begin{cases}
        w - 2, & \text{if $0 \leq \chi_i < q_-(w)$,} \\
        w + 2, & \text{if $1 - q_+(w) \leq \chi_i < 1$,} \\
        w, &\text{otherwise,}
    \end{cases}\]
    where we write $q_{\pm} = \frac{1}{2} p_{\pm}$. Note that this is the exact same Markov chain as before. The idea here is that we now have access to the random choice $\chi_i$. In particular, this allows us to also express the $S_i$ random variable:
    \[S_i = \begin{cases}
        -1, & \text{if $0 \leq \chi_i < q_-(W_{i-1})$,} \\
        1, & \text{if $1 - q_+(W_{i-1}) \leq \chi_i < 1$,} \\
        0, &\text{otherwise.}
    \end{cases}\]

    We can now consider a simpler model, where each plus moves happens with probability $q_+ = \min_{w \in\{2, 4,\ldots, w_0\}} q_+(w) = q_+(2)$, each minus moves happens with probability $q_- = \max_{w \in\{2, 4,\ldots, w_0\}} q_-(w) = q_-(w_0)$, and where each move is chosen independently. In other word, we define $\hat{S}_i \in \{+, -, =\}$ such that:
    \[\hat{S}_i = \begin{cases}
        -1, & \text{if $0 \leq \chi_i < q_-$,} \\
        1, & \text{if $1 - q_+ \leq \chi_i < 1$,} \\
        0, &\text{otherwise.}
    \end{cases}\]
    
    This is well defined since $q_+ + q_- \leq \frac{1}{2} + \frac{1}{2} = 1$.

    Thanks to the coupling $\chi_i$, we are able to compare the Markov model $S_i$ and the simpler model $\hat{S}_i$. We suppose that $\tau \geq t$, and hence that $W_{i-1} \leq w_0$ for all $i \leq t$, and we want to show that it necessarily implies that $\hat{S}_i \leq S_i$. For it to be wrong, we would need a $i$ such that $(\hat{S}_i, S_i) \in \{(1, 0), (1, -1), (0, -1)\}$. We consider two cases, that cover everything.
    \begin{enumerate}
        \item If $S_i = -1$, we must necessarily have $0 \leq \chi_i < q_-(W_{i-1}) \leq q_-$ (using the fact $W_{i-1} \leq w_0$), which implies that $\hat{S}_i = -1$.
        \item If $\hat{S}_i = 1$, we must necessarily have $\chi_i \geq 1 - q_+ \geq 1 - q_+(W_{i-1})$, which implies that $S_i = 1$. 
    \end{enumerate}

    Hence, when $\tau \geq t$, then $(\hat{S}_i, S_i) \in \{(1, 0), (1, -1), (0, -1)\}$ is impossible, and we must necessarily have $\hat{S}_i \leq S_i$. Leaving $\hat{C} = \sum_{i=1}^{t'} \hat{S}_i$ to be the analogue of $C$ in the simpler model, this necessarily means that $\hat{C} \leq C$ whenever $\tau \geq t$. Hence, we found that:
    \[\tau \geq t \iff \tau \geq t \land C < w_0/2 \implies \tau \geq t \land \hat{C} < w_0/2 \implies \hat{C} < w_0/2.\]
    
    Hence, a necessary condition for $\tau \geq t$ is that $\hat{C}< w_0/2$, giving:
    \[p \leq \prob(\tau \geq t) \leq \prob(\hat{C} < w_0/2).\]

    We are only left with evaluating this probability. To do so, we use a Chernoff bound. We do the whole derivation here for completeness, even though it is likely to be well-known. Let $\lambda > 0$ to be fixed later. We note that, by Markov's inequality:
    \[p \leq \prob(\hat{C} < w_0/2) \leq \prob(\hat{C} \leq w_0/2) = \prob(e^{-\lambda \hat{C}} \geq e^{-\lambda w_0 /2}) \leq \frac{\exval(\exp(-\lambda \hat{C}))}{e^{-\lambda w_0/2}}.\]

    Using the independence of the different move sign in the simpler model, we find:
    \[\begin{split}
        \exval(\exp(-\lambda \hat{C})) &= \left(\exval(\exp(-\lambda \hat{S}_i))\right)^{t'} = \left((1 - q_- - q_+) e^0 + q_- e^{\lambda} + q_+ e^{-\lambda}\right)^{t'}
        \\& = \left(1 - q_+\cdot(1 - e^{-\lambda}) + q_-\cdot(e^{\lambda} - 1)\right)^{t'}.
    \end{split}\]

    Now, we know that $q_- = \Theta(1/\nModes^3)$. Moreover, for $\nModes \geq 7/2$:
    \[q_+ = q_+(2) = \frac{1}{2} \cdot\frac{\binom{4}{1} \binom{2\nModes-4}{2-1}}{\binom{2\nModes}{2}} = \frac{4(2\nModes-4)}{2\nModes(2\nModes-1)} \geq \frac{4(\nModes - 1/2)}{2\nModes\cdot 2 (\nModes-1/2)} = \frac{1}{\nModes}.\]

    Overall, this gives us that, assuming $\lambda = \Theta(1)$ (which we will indeed fullfil later):
    \[p \leq e^{\lambda w_0/2}\left(1 - q_+ \cdot (1 - e^{-\lambda}) + q_-\cdot(e^{\lambda} - 1)\right)^{t'} \leq e^{\lambda w_0/2}\left(1 - \frac{1 - e^{-\lambda}}{\nModes} + O(1/\nModes^3)\right)^{t'}.\]

    Leaving $\lambda = \ln(8) = 3\ln(2) = \Theta(1)$, this reads, for $\nModes$ sufficiently large:
    \[p \leq \left(2^{3/2}\right)^{w_0} \left(1 - \frac{7}{8\nModes} + O(1/\nModes^3)\right)^{t'} \leq 4^{w_0} \left(1 - \frac{7}{8\nModes} + \frac{1}{8\nModes}\right)^{t'} \leq 4^{w_0} \left(1 - \frac{3}{4\nModes}\right)^{t'}.\]
\end{proof}

Next we proceed to proving Lemma~\ref{lma:SummaryPropertiesOfU}. The following lemmas correspond either to parts of the proof idea stated after Lemma~\ref{lma:SummaryPropertiesOfU} or are intermediary lemmas to prove these.

\begin{lemma}\label{lma:recurrence-relation-u}
    We let $q_{\pm}(w) = p_\pm(w)/2$ for the sake of notation. Then:
    \[U_{i+1}(w) = q_+(w) U_i(w+2) + q_-(w) U_i(w-2) + (1 - q_+(w) - q_-(w)) U_i(w).\]
\end{lemma}
\begin{proof}
    We condition on the first step:
    \[\begin{split}
        U_{i+1}(w) & = \exval(\probContribute{W_{i+1}} \mid W_0=w)
        \\& = \sum_u \exval(\probContribute{W_{i+1}} \mid W_1=u, W_0=w) \prob(W_1 = u \mid W_0 = w)
        \\& = \sum_u \exval(\probContribute{W_{i+1}} \mid W_1=u) \prob(W_1 = u \mid W_0 = w)
        \\& = \sum_u \exval(\probContribute{W_{i}} \mid W_0=u) \prob(W_1 = u \mid W_0 = w)
        \\& = \sum_u U_i(u) \prob(W_1 = u \mid W_0 = w)
        \\&  = q_+(w) U_i(w+2) + q_-(w) U_i(w-2) + (1 - q_+(w) - q_-(w)) U_i(w).
    \end{split}\]
\end{proof}

\begin{lemma}\label{lma:u-invariant}
    We let $\mathcal{W} = \{2, 4, \ldots, 2\nModes-2\}$ to be the set of interior points of the simulation. We moreover define:
    \[\overline{\pi}(w) = \binom{2\nModes}{w}, \mathspace \pi(w) = \frac{\overline{\pi}(w)}{\sum_{w \in \mathcal{W}} \overline{\pi}(w)}.\]

    When $\nModes \geq 2$, then we have the following invariant for all $i \geq 0$:
    \[\sum_{w \in \mathcal{W}} \pi(w) U_i(w) = \frac{1}{2^{\nModes-1} + 1}.\]

    In other words, if $U_i(w)$ does flatten out to a line when $i$ increases, then this line must be exponentially low.
\end{lemma}
\begin{proof}
    We do this proof by induction on $i\geq 0$.
    \begin{itemize}
        \item \textit{(Initial case)} We first compute:
        \[\sum_{w \in \mathcal{W}} \overline{\pi}(w) U_0(w) = \sum_{\substack{w=2 \\ \text{$w$ even}}}^{2\nModes -2} \binom{2\nModes}{w} \frac{\binom{\nModes}{w/2}}{\binom{2\nModes}{w}} = \sum_{\substack{w=2 \\ \text{$w$ even}}}^{2\nModes -2} \binom{\nModes}{w/2},\]
        where we used the fact $U_0(w) = \probContribute{w}$. Now, doing the change of variable $u = w/2$ and using Newton's binomial theorem:
        \[\sum_{w \in \mathcal{W}} \overline{\pi}(w) U_0(w) = \sum_{u=1}^{\nModes-1} \binom{\nModes}{u} = \sum_{u=0}^{\nModes} \binom{\nModes}{u} - 2 = (1 + 1)^\nModes - 2 = 2^\nModes - 2.\]

        We are only left with computing the normalisation $\sum_{w \in \mathcal{W}} \overline{\pi}(w)$. We notice that, using again Newton's binomial theorem:
        \[\sum_{w=1}^{2\nModes - 1} \binom{2\nModes}{w} = \sum_{w=0}^{2\nModes} \binom{2\nModes}{w} - 2 = (1 + 1)^{2\nModes} - 2 = 2^{2\nModes} - 2,\]
        \[\sum_{w=1}^{2\nModes - 1} \binom{2\nModes}{w} (-1)^w = \sum_{w=0}^{2\nModes} \binom{2\nModes}{w} (-1)^w - 2 = (1 - 1)^{2\nModes} - 2 = -2.\]

        We can average the two to only keep even terms in the sum:
        \[\sum_{\substack{w=2 \\ \text{$w$ even}}}^{2\nModes -2} \overline{\pi}(w) = \sum_{w=1}^{2\nModes-1} \frac{1 + (-1)^w}{2} \binom{2\nModes}{w} = \frac{(2^{2\nModes} - 2 ) + (-2)}{2} = 2^{2\nModes-1} - 2.\]

        This gives exactly our result, when $\nModes \geq 2$ (for $\mathcal{W} \neq \emptyset$):
        \[\sum_{w \in \mathcal{W}} \pi(w) U_i(w) = \frac{\sum_{w \in \mathcal{W}} \overline{\pi}(w) U_i(w)}{\sum_{u \in \mathcal{W}} \overline{\pi}(u)} = \frac{2^\nModes - 2}{2^{2\nModes - 1} - 2} = \frac{2^\nModes - 2}{(2^\nModes - 2)(2^{\nModes-1}+1)} = \frac{1}{2^{\nModes-1} + 1}.\]

        \item \textit{(Inductive case)} By lemma~\ref{lma:recurrence-relation-u}:
        \[\begin{split}
            &\ \sum_{w \in \mathcal{W}} \pi(w) U_{i+1}(w)
            \\=&\ \sum_{w \in \mathcal{W}} \pi(w) \left(q_+(w) U_i(w+2) + q_-(w) U_i(w-2) + (1 - q_+(w) - q_-(w)) U_i(w)\right)
            \\=&\ \sum_{\substack{w = 4 \\ \text{$w$ even}}}^{2\nModes} \pi(w-2) q_+(w-2) U_i(w) + \sum_{\substack{w = 0 \\ \text{$w$ even}}}^{2\nModes -4} \pi(w+2) q_-(w+2) U_i(w)
            \\&\ + \sum_{\substack{w = 2 \\ \text{$w$ even}}}^{2\nModes -2} \pi(w)(1 - q_+(w) - q_-(w)) U_i(w).
        \end{split}\]

        Now, we know that $q_+(0) = q_+(2\nModes-2) = 0$ and $q_-(2) = q_-(2\nModes) = 0$. Hence, this simplifies to:
        \[\begin{split}
            &\ \sum_{w \in \mathcal{W}} \pi(w) U_{i+1}(w)
            \\=&\ \sum_{\substack{w = 2 \\ \text{$w$ even}}}^{2\nModes -2} U_i(w)\left[\pi(w-2)q_+(w-2) + \pi(w+2)q_-(w+2) + \pi(w) (1 - q_+(w) - q_-(w)\right].
        \end{split}\]

        Hence, the result holds if and only if:
        \[\begin{split}
            &\ \pi(w-2)q_+(w-2) + \pi(w+2)q_-(w+2) + \pi(w) (1 - q_+(w) - q_-(w) = \pi(w)
            \\ \iff&\ \pi(w-2)q_+(w-2) + \pi(w+2)q_-(w+2) = \pi(w) q_+(w) + \pi(w) q_-(w).
        \end{split}\]

        We recognise the detailed balance equations, justifying the symbol $\pi$ as it is commonly used for the stationary distribution of a Markov chain. The normalisation cancels on both side, so we only need to check it for $\overline{\pi}$. The left hand side reads:
        \[\begin{split}
            \overline{\pi}(w-2)q_+(w-2) + \overline{\pi}(w+2)q_-(w+2) &= \binom{2\nModes}{w-2} \frac{\binom{4}{1}\binom{2\nModes-4}{(w-2)-1}}{2\binom{2\nModes}{w-2}} + \binom{2\nModes}{w+2} \frac{\binom{4}{3} \binom{2\nModes-4}{(w+2)-3}}{2\binom{2\nModes}{w+2}} 
            \\& = 2 \binom{2\nModes-4}{w-3} + 2\binom{2\nModes-4}{w-1}.
        \end{split}\]

        Similarly, the right hand side reads:
        \[\begin{split}
            \overline{\pi}(w)q_+(w) + \overline{\pi}(w)q_-(w) &= \binom{2\nModes}{w} \frac{\binom{4}{1}\binom{2\nModes-4}{w-1}}{2\binom{2\nModes}{w}} + \binom{2\nModes}{w} \frac{\binom{4}{3} \binom{2\nModes-4}{w-3}}{2\binom{2\nModes}{w+2}} 
            \\& = 2 \binom{2\nModes-4}{w-1} + 2\binom{2\nModes-4}{w-3}.
        \end{split}\]

        We thus see that the left hand-side is equal to the right hand-side, giving that the detailed balance equations do hold. Hence, our result simplifies exactly as required:
        \[\sum_{w \in \mathcal{W}} \pi(w) U_{i+1}(w) = \sum_{w \in \mathcal{W}} \pi(w) U_i(w) = \frac{1}{2^{\nModes-1} + 1}.\]
    \end{itemize}
\end{proof}

\begin{lemma}\label{lma:u-symmetry}
    For all $w \in \mathcal{W}$ and for all $i \geq 0$, we have the following symmetry:
    \[U_i(w) = U_i(2\nModes - w).\]

    In other words, for a given $i$, the curve $U_i(w)$ is symmetrical around $\nModes$.
\end{lemma}
\begin{proof}
    We make this proof by induction.
    \begin{itemize}
        \item \textit{(Initial case)} We have:
        \[U_0(2\nModes - w) = \exval(\probContribute{W_0} \mid W_0 = 2\nModes - w) = \probContribute{2\nModes - w} = \frac{\binom{\nModes}{(2\nModes - w)/2}}{\binom{2\nModes}{2\nModes-w}} = \frac{\binom{\nModes}{w/2}}{\binom{2\nModes}{w}} = \probContribute{w} = U_0(w).\]
        \item \textit{(Inductive case)} Note that:
        \[p_+(2\nModes - w) = \frac{\binom{4}{1} \binom{2\nModes-4}{(2\nModes-w)-1}}{\binom{2\nModes}{2\nModes-w}} = \frac{\binom{4}{3} \binom{2\nModes-4}{2\nModes - 4 - (2\nModes-w)+1}}{\binom{2\nModes}{2\nModes-w}} = \frac{\binom{4}{3} \binom{2\nModes-4}{w-3}}{\binom{2\nModes}{2\nModes-w}} = p_-(w),\]
        \[p_-(2\nModes - w) = \frac{\binom{4}{3} \binom{2\nModes-4}{(2\nModes-w)-3}}{\binom{2\nModes}{2\nModes-w}} = \frac{\binom{4}{1} \binom{2\nModes-4}{2\nModes - 4 - (2\nModes-w)+3}}{\binom{2\nModes}{2\nModes-w}} = \frac{\binom{4}{1} \binom{2\nModes-4}{w-1}}{\binom{2\nModes}{2\nModes-w}} = p_+(w).\]

        But then, $q_+(2\nModes - w) = q_-(w)$ and $q_-(2\nModes - w) = q_+(w)$.  This gives us, by lemma~\ref{lma:recurrence-relation-u}:
        \[\begin{split}
            U_{i+1}(2\nModes-w) =&\ q_+(2\nModes - w) U_i(2\nModes - w+2) + q_-(2\nModes -w) U_i(2\nModes - w-2) 
            \\&\ + (1 - q_+(2\nModes - w) - q_-(2\nModes - w)) U_i(2\nModes - w)
            \\=&\ q_-(w) U_i(w-2) + q_+(w) U_i(w+2) + (1 - q_-(w) - q_+(w)) U_i(w)
            \\=&\ U_{i+1}(w).
        \end{split}\]

        This finishes the proof.
    \end{itemize}
\end{proof}

\begin{definition}[Forward difference]
    We define the forward difference of $U_i(w)$ to be, for all $i \geq 0$ and $w \in \{2, 4, \ldots, 2\nModes-4\}$:
    \[\Delta U_i(w) = U_i(w+2) - U_i(w).\]

    Intuitively, this is the discrete equivalent of the first derivative.
\end{definition}

\begin{lemma}\label{lma:recurrence-relation-delta-u}
    For all $i \geq 0$ and $w$, leaving again $q_\pm(w) = p_\pm(w)/2$ for the sake of notation:
    \[\Delta U_{i+1}(w) = q_+(w+2) \Delta U_i(w+2) + q_-(w) \Delta U_i(w-2) + (1 - q_-(w+2) - q_+(w)) \Delta U_i(w).\]
\end{lemma}
\begin{proof}
    Using lemma~\ref{lma:recurrence-relation-u}, we know that:
    \[\begin{split}
        U_{i+1}(w) & = q_+(w) U_i(w+2) + q_-(w) U_i(w-2) + \left(1 - q_+(w) - q_-(w)\right) U_i(w)
        \\& = q_+(w) (U_i(w+2) - U_i(w)) + q_-(w) (U_i(w-2) - U_i(w)) + U_i(w)
        \\& = q_+(w) \Delta U_i(w) - q_-(w) \Delta U_i(w-2) + U_i(w).
    \end{split}\]

    Hence:
    \[\begin{split}
        \Delta U_{i+1}(w) = &\ U_{i+1}(w+2) - U_{i+1}(w) 
        \\ = &\ q_+(w + 2) \Delta U_i(w + 2) - q_-(w + 2) \Delta U_i(w) + U_i(w + 2) 
        \\ & - q_+(w) \Delta U_i(w) + q_-(w) \Delta U_i(w-2) - U_i(w)
        \\ =&\ q_+(w + 2) \Delta U_i(w + 2) + q_-(w) \Delta U_i(w-2) + (1 - q_-(w+2) - q_+(w)) \Delta U_i(w).
    \end{split}\]
\end{proof}

\begin{lemma}\label{lma:delta-u-non-positive}
    We let $\mathcal{W}_- = \{2 \leq w \leq \nModes-1 \mid \text{$w$ is even}\}$ to be the first half of the interior points. Then, for all $i \geq 0$ and $w \in \mathcal{W}_-$:
    \[\Delta U_i(w) \leq 0.\]

    Note that this implies that $\Delta U_i(w) \geq 0$ when $w \geq \nModes - 1$ by symmetry around $\nModes$ (lemma~\ref{lma:u-symmetry}).
\end{lemma}
\begin{proof}
    We do this proof by induction on $i \geq 0$.
    \begin{itemize}
        \item \textit{(Initial case)} Let $w \in \mathcal{W}_-$ be arbitrary. Then: 
        \[\Delta U_0(w) = U_0(w+2) - U_0(w) = \probContribute{w+2} - \probContribute{w} = \probContribute{w+2} \left(1-  \frac{\probContribute{w}}{\probContribute{w+2}}\right).\]

        Note that $\probContribute{w+2}$ is always non-negative. Moreover, we know that:
        \[\frac{\binom{\nModes}{k}}{\binom{\nModes}{k+1}} = \frac{\frac{\nModes!}{k!(\nModes-k)!}}{\frac{\nModes!}{(k+1)!(\nModes-k-1)!}} = \frac{(k+1)!}{k!} \cdot \frac{(\nModes-k-1)!}{(\nModes-k)!} = \frac{k+1}{\nModes-k},\]
        \[\frac{\binom{\nModes}{k + 2}}{\binom{\nModes}{k}} = \frac{\frac{\nModes!}{(k+2)!(\nModes-k-2)!}}{\frac{\nModes!}{k!(\nModes-k)!}} = \frac{k!}{(k+2)!} \cdot \frac{(\nModes-k)!}{(\nModes-k -2)!} = \frac{(\nModes-k)(\nModes-k-1)}{k(k-1)}.\]

        Hence, leaving $v = w/2$ for clarity:
        \[\begin{split}
            1-  \frac{\probContribute{w}}{\probContribute{w+2}} & = 1 - \frac{\binom{\nModes}{w/2} / \binom{2\nModes}{w}}{\binom{\nModes}{w/2 + 1} / \binom{2\nModes}{w + 2}} = 1 - \frac{\binom{\nModes}{v}}{\binom{\nModes}{v+1}}\cdot \frac{\binom{2\nModes}{2v+2}}{\binom{2\nModes}{2v}}
            = 1- \frac{v + 1}{\nModes - v} \cdot \frac{(2\nModes - 2v)(2\nModes - 2v -1)}{2v(2v-1)}
            \\& = \frac{2v(\nModes-v)(2v-1) - 2(v+1)(\nModes-v)(2\nModes-2v-1)}{2v(\nModes-v)(2v-1)}.
        \end{split}\]

        Notice that the denominator is always non-negative since $2 \leq w \leq 2\nModes -2 $ and hence $1 \leq v \leq \nModes - 1$. Moreover, the numerator is given by:
        \[\begin{split}
            2(\nModes-v)(v(2v-1) - (v+1)(2\nModes-2v-1)) &= 2(\nModes-v)(2v^2 - v - 2nv + 2v^2 + v - 2\nModes + 2v + 1)
            \\& = 2(\nModes-v)(4v^2 +v(-2\nModes+2) -2\nModes+1).
        \end{split}\]

        The term $\nModes-v$ is again always non-negative. We thus want to show that the quadratic polynomial is non-positive. To do so, we will show that $0 \leq w \leq \nModes$ (i.e.~$0 \leq v \leq \nModes/2$) implies necessarily that $v_- \leq v \leq v_+$, where $v_\pm$ are the two roots of this polynomial. Since it is a convex parabola, this will necessarily imply that it is indeed non-positive.

        Indeed, we do find:
        \[\begin{split}
            v_- & = \frac{-(-2\nModes+2) - \sqrt{(-2\nModes+2)^2 - 4(4)(-2\nModes+1)}}{2(4)} = \frac{2\nModes - 2 - 2\sqrt{\nModes^2 + 6\nModes - 3}}{8}
            \\& \leq \frac{2\nModes - 2 - 2\sqrt{(\nModes-1)^2}}{8} = 0 \leq v,
        \end{split}\]
        where we used the fact that $\nModes^2 + 6\nModes - 3 \geq (\nModes-1)^2 \iff \nModes \geq 1/2$, which holds since $\nModes \geq 1$. Completely similarly:
        \[\begin{split}
            v_+ & = \frac{-(-2\nModes+2) + \sqrt{(-2\nModes+2)^2 - 4(4)(-2\nModes+1)}}{2(4)} = \frac{2\nModes - 2 + 2\sqrt{\nModes^2 + 6\nModes - 3}}{8}
            \\& \geq \frac{2\nModes - 2 + 2\sqrt{(\nModes+1)^2}}{8} = \frac{\nModes}{2} \geq v,
        \end{split}\]
        where we used that $\nModes^2 + 6\nModes - 3 \geq (\nModes+1)^2 \iff \nModes \geq 1$, which again holds.

        Overall, we did find that $v_- \leq v \leq v_+$, which allows to conclude that the parabola is non-positive as mentioned before. Since all other terms are non-negative, this means that, overall, $\probContribute{w+2} - \probContribute{w}$ is non-positive, finishing the initial case.

        \item \textit{(Inductive case)} By lemma~\ref{lma:recurrence-relation-delta-u}:
        \begin{equation}\label{eq:recurrence-delta-u}
            \Delta U_{i+1}(w) = q_+(w+2) \Delta U_i(w+2) + q_-(w) \Delta U_i(w-2) + (1 - q_-(w+2) - q_+(w)) \Delta U_i(w).
        \end{equation}

        Now, $q_+(w+2) \geq 0$ and $q_-(w) \geq 0$. Similarly, $q_+(w) = p_+(w)/2 \leq 1/2$ and $q_-(w+2) \leq 1/2$, telling us $1 - q_-(w+2) - q_+(w)\geq 0$. Hence, all these coefficients are non-negative. This trivially gives our result thanks to the inductive hypothesis for $w \leq \nModes-3$. The only thing left to consider is the boundary $w = \nModes-2$ or $w = \nModes-1$ depending on the parity of $\nModes$. We consider each case separately.
        \begin{itemize}
            \item Suppose that $\nModes$ is odd, and hence $w = \nModes-1$. We directly get our result by the symmetry of $U_i(w)$ (lemma~\ref{lma:u-symmetry}):
            \begin{equation}\label{eq:delta-u-is-zero-at-middle}
                \Delta U_{i+1}(\nModes-1) = U_{i+1}(\nModes+1) - U_{i+1}(\nModes-1) = U_{i+1}(\nModes-1) - U_{i+1}(\nModes-1) = 0.
            \end{equation}

            \item Suppose now that $\nModes$ is even, and hence $w = \nModes-2$. Still by the symmetry (lemma~\ref{lma:u-symmetry}), we find:
            \begin{equation}\label{eq:delta-u-is-odd-at-middle}
                \Delta U_i(\nModes) = U_i(\nModes+2) - U_i(\nModes) = U_i(\nModes-2) - U_i(\nModes) = -\Delta U_i(\nModes-2).
            \end{equation}

            Hence, equation~\ref{eq:recurrence-delta-u} becomes:
            \[\begin{split}
                & \Delta U_{i+1}(\nModes-2)
                \\ = &\ q_+(w+2) \Delta U_i(\nModes) + q_-(w) \Delta U_i(\nModes-4) + (1 - q_-(w+2) - q_+(w)) \Delta U_i(\nModes-2) 
                \\ = &\ q_-(w) \Delta U_i(\nModes-4) + (1 - q_-(w+2) - q_+(w) - q_+(w+2))\Delta U_i(\nModes-2).
            \end{split}\]

            Hence, the exact same argument we used for $w \leq \nModes-3$ applies if we prove that $1 - q_-(w+2) - q_+(w) - q_+(w+2) \geq 0$. Now, we know that $p_+(w) + p_-(w) + p_=(w) = 1$ since it is a probability distribution and hence, in particular, $p_+(w) + p_-(w) \leq 1$. Overall, this gives us that:
            \[q_+(w+2) + q_-(w+2) \leq 1/2, \mathspace q_+(w) = p_+(w)/2 \leq 1/2.\]

            Adding the two, we get exactly that $q_-(w+2) + q_+(w) + q_+(w+2) \leq 1$, finishing the proof.
        \end{itemize}
    \end{itemize}
\end{proof}

\begin{lemma}\label{lma:delta-u-increases}
    If $\nModes \geq 4$, then, for all $i \geq 0$ and $w \in \mathcal{W}_-$:
    \[\Delta U_{i+1}(w) \geq \left(1-  \frac{1}{2\nModes}\right)\Delta U_i(w).\]
    A more intuitive formulation is that, thanks to lemma~\ref{lma:delta-u-non-positive}, this states that $|\Delta U_{i+1}(w)| \leq (1-  1/2\nModes) |\Delta U_i(w)|$. Combining it with lemma~\ref{lma:u-symmetry} means that $|\Delta U_{i+1}(w)| \leq (1-  1/2\nModes) |\Delta U_i(w)|$ in fact holds for all $w$. Hence, the curve $U_i(w)$ does flatten out as $i$ increases, at an exponential speed.
\end{lemma}
\begin{proof}
    This proof follows a similar idea to the one of lemma~\ref{lma:delta-u-non-positive}. We also do this proof by induction on $i \geq 0$. 
    \begin{itemize}
        \item \textit{(Initial case)} We will extensively use Sympy. We aim to show $\Delta U_{i+1}(w) \geq a\Delta U_i(w)$ for some $a > 0$ to be fixed later. By lemma~\ref{lma:recurrence-relation-delta-u}, we have:
        \[\Delta U_1(w) - a\Delta U_0(w) = q_+(w+2) \Delta U_0(w+2) + q_-(w) \Delta U_0(w-2) + (1 -a- q_-(w+2) - q_+(w))\Delta U_0(w).\]

        Now, we know that $\Delta U_0(w) = U_0(w+2) - U_0(w) = \probContribute{w+2} - \probContribute{w}$, using Sympy (or Mathematica), and leaving $w = 2u$, we find that this is equal to:
        \[\Delta U_1(w) - a\Delta U_0(w) =  \frac{-8 \left(- \nModes + 2 u + 1\right) P_\nModes(u) \left(\nModes - 2\right)!}{\left(2 \nModes - 3\right) \left(2 \nModes - 1\right) \left(- 2 \nModes + 2 u + 1\right) {\binom{2 \nModes}{2 u}} u! \left(\nModes - u\right)!}\]
        where:
        \[\begin{split}
            P_\nModes(u) =&\ - a \nModes^{4} + 3 a \nModes^{3} - \frac{11 a \nModes^{2}}{4} + \frac{3 a \nModes}{4} + \nModes^{4} - 2 \nModes^{3} u - 3 \nModes^{3} + 10 \nModes^{2} u^{2} + 4 \nModes^{2} u + \frac{15 \nModes^{2}}{4}
            \\& - 16 \nModes u^{3} - 18 \nModes u^{2} - 22 \nModes u - \frac{23 \nModes}{4} + 8 u^{4} + 16 u^{3} + 28 u^{2} + 20 u + 6.
        \end{split}\]

        We notice that, for $2 \leq w \leq \nModes-1 \iff 1 \leq u \leq \frac{\nModes-1}{2}$, the terms $(-8)$, $(-\nModes+2u+1)$ and $(-2\nModes+2u+1)$ are all non-positive, whereas all other terms (except for $P_\nModes(u)$) are non-negative. Hence, to show $\Delta U_1(w) - a\Delta U_0(w)$ is non-negative, we have to show $P_\nModes(u)$ is non-positive. 

        Using again Sympy, we find the following symmetry:
        \[P_\nModes(\nModes-1-u) = P_\nModes(u).\]

        Differentiating both sides according to $u$, this gives us that:
        \begin{equation}\label{eq:dpdu-is-odd}
            -P_\nModes'(\nModes-1-u) = P_\nModes'(u).
        \end{equation}

        In particular, for $u = \frac{\nModes-1}{2}$:
        \[-P_\nModes'\left(\frac{\nModes-1}{2}\right) = P_\nModes'\left(\frac{\nModes-1}{2}\right) \iff P_\nModes'\left(\frac{\nModes-1}{2}\right) = 0.\]

        Moreover, equation~\ref{eq:dpdu-is-odd} also tells us that there are as many roots of $P_\nModes'(u)$ in $[1, \frac{\nModes-1}{2}[$ as there are in $]\frac{\nModes-1}{2}, \nModes-2]$ (in other words, if there are five points in the first interval where the derivative is zero, then there must also be exactly five points where the derivative is zero in the second interval). However, we notice that $P_\nModes(u)$ is a quartic polynomial in $u$, and hence $P_\nModes'(u)$ is a cubic polynomial. By the fundamental theorem of algebra, this means that $P_\nModes'(u)$ has at most three root. Since we found there is one at $u = \frac{\nModes-1}{2}$, symmetry must mean that there is at most one in $[1, \frac{\nModes-1}{2}[$ (if there were two in this interval, there would also be two in $]\frac{\nModes-1}{2}, \nModes-2]$, which would be too many in total). This means that, on this interval, the derivative can change sign at most once by the intermediate value theorem. Moreover, again with Sympy, we find that:
        \[P_\nModes'(1) = - 2 \left(\nModes - 3\right) \left(\nModes^{2} - 9 \nModes + 26\right).\]

        The quadratic polynomial in $\nModes$ has a negative discriminant: $\Delta = (-9)^2 - 4\cdot1\cdot 26 < 0$. Hence, it has no real root. This tells us that, for $\nModes > 3$, we have $P_\nModes'(1) < 0$.

        So far, we found that $P_\nModes'(u)$ starts negative, and then changes sign at most once over $[1, \frac{\nModes-1}{2}[$. The supremum of $P_\nModes(u)$ over this interval is thus one of its endpoint, meaning that:
        \[\sup_{u\in [1, \frac{\nModes-1}{2}[} P_\nModes(u) = \max\left\{P_\nModes(1), P_\nModes\left(\frac{\nModes-1}{2}\right)\right\}.\]

        Using Sympy again, and taking $a = 1 - \frac{1}{2\nModes}$
        , this simplifies to:
        \[\sup_{u\in [1, \frac{\nModes-1}{2}[} P_\nModes(u) = \max\left\{- \frac{3 \left(4 \nModes^{3} - 36 \nModes^{2} + 159 \nModes - 207\right)}{8}, - 4 \nModes^{2} + \frac{39 \nModes}{8} + \frac{9}{8}\right\}.\]

        Numerically, we find that the first polynomial has a single real root, which is at $\nModes_0 < 3$. Similarly, we find all the roots $\nModes_0$ of the second polynomial are such that $\nModes_0 <2$. Since both polynomial have a negative dominant term we find that, for $\nModes \geq 3$:
        \[\sup_{u\in [1, \frac{\nModes-1}{2}[} P_\nModes(u) < 0.\]

        Overall, this means that $P_\nModes(u) < 0$ for all $u\in [1, \frac{\nModes-1}{2}[$, and hence $\Delta U_1(w) - a\Delta U_0(w) \geq 0$ for all even $1 \leq w \leq \nModes-1$. This finishes the initial case.
        
        \item \textit{(Inductive case)} Let $D_i(w) = \Delta U_{i+1}(w) - a\Delta U_i(w)$ for $a > 0$ fixed by the initial case. We know that $D_{i-1}(w) \geq 0$, and we aim to show $D_i(w) \geq 0$. By lemma~\ref{lma:recurrence-relation-delta-u}, we find:
        \[\begin{split}
            D_i(w) = &\ \Delta U_{i+1}(w) - a\Delta U_i(w)
            \\= &\ q_+(w+2) \Delta U_{i}(w+2) + q_-(w) \Delta U_{i}(w-2) + (1 - q_-(w+2) - q_+(w)) \Delta U_{i}(w)
            \\&\ - q_+(w+2) a\Delta U_{i-1}(w+2) - q_-(w) a\Delta U_{i-1}(w-2) - (1 - q_-(w+2) - q_+(w)) a\Delta U_{i-1}(w)
            \\=&\ q_+(w+2) D_{i-1}(w+2) + q_-(w) D_{i-1}(w-2) + (1 - q_-(w+2) - q_+(w)) D_{i-1}(w)
        \end{split}\]

        We get our result when $w \leq \nModes - 3$ by the exact same argument as the one of the inductive case of lemma~\ref{lma:delta-u-non-positive}, since this is the exact same recurrence relation. We then again need to consider the boundary.
        \begin{itemize}
            \item Suppose that $\nModes$ is odd, and hence that we are left with $w = \nModes-1$. Then, using equation~\ref{eq:delta-u-is-zero-at-middle} twice:
            \[D_i(\nModes-1) = \Delta U_{i+1}(\nModes-1) - a\Delta U_i(\nModes-1) = 0-0 = 0.\]
            \item Suppose now that $\nModes$ is even, and hence that we are left with $w = \nModes-2$. We notice that, using equation~\ref{eq:delta-u-is-odd-at-middle} twice:
            \[D_{i-1}(\nModes) = \Delta U_i(\nModes) - a\Delta U_{i-1}(\nModes) = -\Delta U_i(\nModes-2) + a\Delta U_{i-1}(\nModes-2) = -D_{i-1}(\nModes-2).\]

            We can thus use the exact same argument as in lemma~\ref{lma:delta-u-non-positive}.
        \end{itemize}
    \end{itemize}
\end{proof}

\begin{lemma}\label{lma:upperbound-u}
    For any $i \geq 0$ and $w \in \mathcal{W}_-$:
    \[U_i(w) \leq \left(1 - \frac{1}{2\nModes}\right)^i \probContribute{w} + \frac{1}{2^{\nModes-1} + 1}.\]

    Note that this bound also holds for any $w$ by symmetry (lemma~\ref{lma:u-symmetry}), proving that lemma~\ref{lma:SummaryPropertiesOfU} holds. Moreover, in particular, if $w = \Theta(1)$:
    \[U_i(w) \leq \left(\frac{ew}{2\nModes}\right)^{w/2} + \frac{1}{2^{\nModes-1}}= O\left(\nModes^{-w/2}\right).\]
\end{lemma}
\begin{proof}
    Let $w \in \mathcal{W}_-$ be arbitrary. We moreover let $\middlePoint \in \{\nModes-1, \nModes\}$ such that $\middlePoint$ is even. Intuitively, $\middlePoint$ is the middle point. Then, we notice that, using a telescoping series, we get the discrete equivalent to the FTC 2 (fundamental theorem of calculus part two):
    \[\begin{split}
        U_i(\middlePoint) - U_i(w) & = U_i(\middlePoint) - U_i(\middlePoint-2) + U_i(\middlePoint-2) - U_i(\middlePoint-4) + \ldots + U_i(w+2) - U_i(w)
        \\& = \Delta U_i(\middlePoint-2) + \Delta U_i(\middlePoint-4) + \ldots + \Delta U_i(w).
    \end{split}\]

    Hence, using lemma~\ref{lma:delta-u-increases}: 
    \[\begin{split}
        U_i(w) & = U_i(\middlePoint) - (\Delta U_i(\middlePoint-2) + \ldots + \Delta U_i(w)) 
        \\& \leq U_i(\middlePoint) - \left(1 - \frac{1}{2\nModes}\right)^i(\Delta U_0(\middlePoint-2) + \ldots + \Delta U_0(w))
        \\& = U_i(\middlePoint) + \left(1 - \frac{1}{2\nModes}\right)^i\left[-U_0(\middlePoint) + U_0(\middlePoint) - (\Delta U_0(\middlePoint-2) + \ldots + \Delta U_0(w))\right]
        \\& = U_i(\middlePoint) + \left(1 - \frac{1}{2\nModes}\right)^i \left[-U_0(\middlePoint) + U_0(w)\right]
        \\& = U_i(\middlePoint) - \left(1 - \frac{1}{2\nModes}\right)^i \probContribute{\middlePoint} + \left(1 - \frac{1}{2\nModes}\right)^i\probContribute{w}.
    \end{split}\]

    In other words, since the derivative keep increasing (while being non-positive), their signed distance $U_i(w) - U_i(\middlePoint)$ keeps decreasing. Hence, $U_i(w)$ is upper-bounded by the position of $U_i(\middlePoint)$ plus the signed distance between $U_0(\middlePoint)$ and $U_0(w)$ value shrunk by the flattening of the curve. 

    Now, by lemma~\ref{lma:delta-u-non-positive}, we know that, for all $\hat{w} \in \mathcal{W}_-$:
    \[U_i(\hat{w}) = U_i(\middlePoint) - (\Delta U_i(\middlePoint-2) + \ldots + \Delta U_i(\hat{w})) \geq U_i(\middlePoint).\]

    By symmetry (lemma~\ref{lma:u-symmetry}), this means that $U_i(\hat{w}) \geq U_i(\middlePoint)$ for all $\hat{w} \in \mathcal{W}$. In other words, $U_i(\middlePoint)$ is a minimum. Now, by lemma~\ref{lma:u-invariant}, we know that there exists some $\pi(\hat{w})$ such that 
    \[\sum_{\hat{w} \in \mathcal{W}} \pi(\hat{w}) = 1, \mathspace \sum_{\hat{w} \in \mathcal{W}} \pi(\hat{w}) U_i(\hat{w}) = \frac{1}{2^{\nModes-1} + 1}.\]

    The minimum cannot exceed the value of a weighted average (otherwise, all values exceed this weighted average and hence we get a direct contradiction). Hence, we must necessarily have:
    \[U_i(\middlePoint) \leq \frac{1}{2^{\nModes-1} + 1}.\]

    This finishes the proof:
    \[\begin{split}
        U_i(w) &\leq U_i(\middlePoint) - \left(1 - \frac{1}{2\nModes}\right)^i \probContribute{\middlePoint} + \left(1 - \frac{1}{2\nModes}\right)^i \probContribute{w} \leq \frac{1}{2^{\nModes-1} + 1} + \left(1 - \frac{1}{2\nModes}\right)^i \probContribute{w}.
    \end{split}\]
\end{proof}

\section{Numerical simulations}\label{app:numerics}

In this section, we provide the details about the numerical simulations presented in the main section. 

The simulations presented in Table \ref{tab:time-benchmarks} were run on the Vega supercomputer. MP and Qiskit simulations were run on a single CPU node (dual-socket AMD Rome 7H12, 256GB RAM) using 8 CPU threads.  CUDA-Q simulations were run on a single GPU (node architecture: dual-socket AMD Rome 7H12, 512 GB RAM, 4x Nvidia A100, 40GB HBM2). The CUDA-Q MPS simulations were carried out with default parameters and a maximum bond dimension $\chi=64$.

Each simulation is repeated 3 times and the results presented in the table are the average of the 3 runs. For MP, the circuits consist of exponentiations of Majorana monomials.  For all the other, the fermionic ansatz was transformed via Jordan-Wigner F2Q mapping, which transforms Majorana monomials exponentiations into Pauli exponentiations. For statevector simulators, the provided circuits were not transpiled. For CUDA-Q MPS, the transpiled input circuits yielded better timings than the untranspiled ones, and were consequently used for the table. The transpilation is performed for all-to-all connectivity, using Steiner-tree based implementation for Fermionic gates from \cite{miller2024treespilation} followed by Qiskit transpiler optimization level 1 with the basis gate set of generic single-qubit rotation gate and CNOT. The resulting circuits have 616, 3470 and 13,676 CNOTs, respectively, for 28-, 40- and 52-mode circuits.

The code snippets for obtaining the results are presented below.  For Qiskit, the quantum circuit is a \texttt{QuantumCircuit} and the Hamiltonian is a \texttt{SparsePauliOp} object. For CUDA-Q, the quantum circuit is a \texttt{kernel} and the Hamiltonian is a \texttt{SpinOperator} object.

\begin{center}
\begin{tcolorbox}[width=0.75\textwidth, 
  height=1.5cm,  colbacktitle=gray!20, boxrule=0.1mm,colframe=black, coltitle=black, colback=white, title=Qiskit SV]
\begin{lstlisting}[basicstyle=\small\ttfamily, aboveskip=-5pt, belowskip=0pt]
sv = Statevector.from_instruction(qc)
ev = sv.expectation_value(hamiltonian_qiskit)
\end{lstlisting}
\end{tcolorbox}
\vspace{0.1in}

\begin{tcolorbox}[width=0.75\textwidth, 
  height=1.5cm,  colbacktitle=gray!20, boxrule=0.1mm,colframe=black, coltitle=black, colback=white, title=CUDA-Q SV]
\begin{lstlisting}[basicstyle=\small\ttfamily, aboveskip=-5pt, belowskip=0pt]
cudaq.set_target("nvidia", option="fp64")
energy = estimate_energy_parametrized_kernel(kernel, operator)
\end{lstlisting}
\end{tcolorbox}
\vspace{0.1in}

\begin{tcolorbox}[width=0.75\textwidth, 
  height=1.5cm,  colbacktitle=gray!20, boxrule=0.1mm,colframe=black, coltitle=black, colback=white, title=CUDA-Q MPS]
\begin{lstlisting}[basicstyle=\small\ttfamily, aboveskip=-5pt, belowskip=0pt]
cudaq.set_target("tensornet-mps")
energy = estimate_energy_parametrized_kernel(kernel, operator)
\end{lstlisting}
\end{tcolorbox}
\end{center}

\section{Upper bound to state error}\label{app:state_error}

Consider the so-called spectral gap of the Hamiltonian, that is, the difference between its second lowest eigenvalue $E_1$ and its lowest eigenvalue $E_0$. 
Without loss of generality, let us further assume a given quantum state $\ket{\psi}$ to have a relative error $p= \frac{\bra \psi H \ket \psi - E_0}{E_1-E_0}\leq 1$.
Expressing $\ket \psi$ in the eigenbasis of the Hamiltonian $\ket{\psi} = \sum_k \alpha_k \ket{E_k}$, we can define the following bound, 
\begin{equation}
\begin{split}
(E_1-E_0)p & = \bra \psi H \ket \psi - E_0 = \sum_{k,m} \alpha_k^\ast \alpha_m E_m \bra {E_k} \ket {E_m} - E_0 = \sum_k |\alpha _k |^2E_k - E_0 \\
& \geq (1-|\alpha_0|^2) E_1 + |\alpha_0|^2 E_0 - E_0  = (E_1-E_0) (1 - |\alpha_0|^2).
\end{split}
\label{eq:state-err}
\end{equation}
Assuming a non-degenerate ground state ($E_1 > E_0$), comparing the left-hand side with the right-hand side of Eq.~\eqref{eq:state-err}\ yields a lower bound for the overlap, namely $|\alpha| \geq \sqrt{1-p}$ and, consequently, we can define the upper bound to the state error as  
\begin{equation}
\text{upper bound to state error} \equiv 1-|\alpha_0| \leq  1- \sqrt{1-p}\ .
\end{equation}
Hence, by knowing the relative to spectral gap error of a Hamiltonian, we can compute an upper bound to the overlap error between the obtained state $\ket{\psi}$ and the target state. For example, in the case of the ground state of TLD1433 with a 52-mode active space, optimizing the quantum circuit to an absolute error of 10 mHa error gives a relative error of $p\approx 5\%$, and thus state error $2.53\%$. We obtain $E_1$\ and $E_0$ from state-specific DMRG calculations within the various active orbital spaces.

In order to compute the upper bound, we used the spectral gap over the singlet subspace. For this to be valid it is necessary that the given quantum state $\ket{\psi}$ has a higher overlap with the singlet subspace. To show this, we computed the values of number, $S_z$ and $S^2$ symmetries for the final states from Fig.~\ref{fig:adapt_sims} . We estimated that the difference between computed and expected values for number operators are below $2 \cdot 10^{-4}$ and for $S_z$ operators are below $7\cdot 10^{-6}$. For the $S^2$ operators the values for 28, 40 and 52 qubits are respectively $1.66\cdot 10^{-2}$, $4.21 \cdot 10^{-3}$ and $9.44 \cdot 10^{-3}$, which is much smaller than the spectral gap 0.75 for $S^2$. This shows it is justified for our results to use the formula above based on singlet-subspace spectral gap.
\end{document}